\title{Global sensitivity analysis using derivative-based sparse Poincar\'e chaos expansions}
\author[1]{Nora L\"uthen}
\author[2,3]{Olivier Roustant}
\author[3]{Fabrice Gamboa}
\author[4,3]{Bertrand Iooss}
\author[1]{Stefano Marelli}
\author[1]{Bruno Sudret}
\affil[1]{Chair of Risk, Safety, and Uncertainty Quantification, ETH Z\"urich,
	Stefano-Franscini-Platz 5, 
	8093 Z\"urich, Switzerland}
\affil[2]{INSA Toulouse, 135 avenue de Rangueil, 31077 Toulouse cedex 4, France}
\affil[3]{Institut de Mathématiques de Toulouse, 31062 Toulouse, France}
\affil[4]{EDF Lab Chatou, 6 Quai Watier, 78401 Chatou, France}
\date{\today}
\crefname{equation}{}{}
\newcommand{\curlyA}{{\mathcal A}}
\newcommand{\cn}{{\mathcal N}}
\newcommand{\cu}{{\mathcal U}}
\newcommand{\cx}{{\mathcal X}}
\newcommand{\Rr}{{\mathbb R}}
\newcommand{\Nn}{\mathbb{N}}
\newcommand{\ve}[1]{\boldsymbol{#1}}
\newcommand{\enum}{ , \, \dots \,,}
\newcommand{\Var}{{\rm Var}}
\newcommand{\Vare}[2]{{\rm Var}_{#1}\left[#2\right]}
\newcommand{\Esp}[1]{{\mathbb E}\left[ #1 \right]}
\newcommand{\Espe}[2]{{\mathbb E}_{#1}\left[#2\right]}
\newcommand{\norme}[2]{\left\| #1 \right\|_{#2}}
\newcommand{\innprod}[2]{\left\langle #1, #2 \right\rangle}
\newcommand{\alp}{{\ve{\alpha}}}
\newcommand{\Phal}{\Phi_{\alp}}
\newcommand{\vpali}[1]{\varphi_{#1, \alpha_{#1}}}
\newcommand{\eig}[1]{\lambda_{#1, \alpha_{#1}}}
\renewcommand{\citep}[2][]{\cite[#1]{#2}}
\renewcommand{\citet}[2][]{\cite[#1]{#2}}
\theoremstyle{definition}
\theoremstyle{remark}
\newtheorem{rem}{Remark}
\theoremstyle{theorem}
\newtheorem{thm}{Theorem}
\newtheorem{prop}{Proposition}
\newtheorem{assum}{Assumption}
\begin{document}
	
\maketitle

\begin{abstract}
 Variance-based global sensitivity analysis, in particular Sobol' analysis, is widely used for determining the importance of input variables to a computational model.
 Sobol' indices can be computed cheaply based on spectral methods like polynomial chaos expansions (PCE). Another choice are the recently developed Poincar\'e chaos expansions (PoinCE), whose orthonormal tensor-product basis is generated from the eigenfunctions of one-dimensional Poincar\'e differential operators.
 In this paper, we show that the Poincar\'e basis is the unique orthonormal basis with the property that partial derivatives of the basis form again an orthogonal basis with respect to the same measure as the original basis. 
 This special property makes PoinCE ideally suited for incorporating derivative information into the surrogate modelling process. 
 Assuming that partial derivative evaluations of the computational model are available, we compute spectral expansions in terms of Poincar\'e basis functions or basis partial derivatives, respectively, by sparse regression.
 We show on two numerical examples that the derivative-based expansions provide accurate estimates for Sobol' indices, even outperforming PCE in terms of bias and variance. 
 In addition, we derive an analytical expression based on the PoinCE coefficients for a second popular sensitivity index, the derivative-based sensitivity measure (DGSM), and explore its performance as upper bound to the corresponding total Sobol' indices.   
\end{abstract}

%
%

\section{Introduction}

Computer models simulating physical phenomena and industrial systems are commonly used in engineering and safety studies, for prediction, validation or optimisation purposes.
These numerical models often take as inputs a high number of physical parameters, whose values are variable or not perfectly known, creating the need for uncertainty quantification on model computations \citep{smi14}. 
Uncertainty quantification typically becomes more challenging the higher the input dimension is (curse of dimensionality).
In this situation, global sensitivity analysis (GSA) 
is an invaluable tool that allows the analyst to rank the relative importance of each input of the model and to detect non-influential inputs \citep{borpli16,razjak21}.
Most often relying on a probabilistic modeling of the model input variables, GSA tries to explain model output uncertainties on the basis of model input uncertainties, accounting for the full range of variation of the variables.

A well-known and widely used GSA method is Sobol' analysis \citep{sob93}, which relies on the functional ANOVA (analysis of variance) decomposition \citep{efrste81}. For a square-integrable model and independent input variables, Sobol' analysis determines which part of the model output variance can be attributed to each input and to each interaction between inputs. The overall contribution of each input, including interactions with other inputs, is provided by the total Sobol' index \citep{homsal96}. 
Sobol' indices can be estimated efficiently using various Monte Carlo-based techniques as well as metamodel-based techniques \citep{pritar17}. The latter save on expensive model evaluations by first performing a small number of model runs, which are
used to compute an accurate approximation to the original model -- the meta- or surrogate model -- {from which the Sobol' indices are finally computed} \citep{fanli06,LeGratiet2017}.

One of the most popular and powerful metamodelling methods is the polynomial chaos expansion (PCE) \citep{Xiu2002}. 
PCE represents the model in a specific basis consisting of polynomials that are orthonormal with respect to the input distribution. 
Orthogonal polynomial systems have been studied throughout the last century and they have many useful properties (see, e.g., Szeg\"o \citet{Szeg1939} and Simon \citet{Sim2010}).
One particular strength of PCE is that once it is computed, it easily gives all the variance-based quantities defined through the ANOVA decomposition, and in particular the Sobol' indices at all orders \citep{sud08}. 
In practice, the expansion cannot use infinitely many terms and must be truncated. 
Among the many approaches available to compute the expansion coefficients, sparse regression techniques combined with adaptive basis selection appear to be especially promising (see \cite{LuethenSIAMJUQ2020, LuethenIJUQ2022} for an overview). Here, a small number of terms is selected which is able to best represent the computational model based on the available model evaluations.

In some practical situations, partial derivatives of the model output with respect to each input are easily accessible, for example by algorithmic differentiation of the numerical model in the reverse (adjoint) mode \citep{griwal08}. This technique allows for computing all partial derivatives of the model output at a cost independent of the number of input variables.
Since PCEs are such a well-established metamodelling tool, there have been many efforts to leverage the additional information contained in model derivatives to improve the performance of PCE. 
The idea of including derivative information into sparse regression problems, often called \textit{gradient-enhanced $\ell^1$-minimization}, 
is tested by Jakeman et al.~\cite{Jakeman2015} for one numerical example with uniform inputs, and analyzed theoretically and numerically by Peng et al.~\cite{penham16} for Hermite PCE. Both report favorable results. 
Roderick et al.~\citet{rodani10} and Li et al.~\citet{liani11} apply polynomial regression (PCE) in the context of nuclear engineering. They include derivative information into the least-squares regression formulation and observe 
that most polynomial families are not orthogonal with respect to the $H^1$ inner product. This may deteriorate certain properties of the regression matrix.
To alleviate this issue, Guo et al.~\cite{Guo2018} develop a preconditioning procedure for gradient-enhanced sparse regression with certain polynomial families, with the goal of improving the orthogonality properties of the regression matrix.
In all these approaches, the utilization of derivative information is not straightforward, but requires specific polynomial families and/or specialized sampling and preconditioning, because the partial derivatives of a PCE basis do in general not form an orthogonal system.
Gejadze et al.~\citet{gejmal19} have derived derivative-enhanced projection methods to compute the PCE coefficients but their method is restricted to Hermite polynomials and low polynomial degree.

On a different note, the availability of model derivatives has implications also for GSA. 
The so-called \textit{Derivative-based Global Sensitivity Measures} (DGSM) are computed by integrating the squared partial derivatives of the model output 
over the domain of the inputs \citep{Sobol_Gresham_1995,Sobol2009}. 
These indices have been shown to be efficiently estimated by sampling techniques (as Monte Carlo or quasi-Monte Carlo) \citep{Kucherenko2009} as well as from PCE \citep{sudmai15}, and have been proven to be an excellent screening technique (i.e., detecting all the non-influential inputs among a large number), see e.g.\ the review by Kucherenko and Iooss \cite{kucioo17}.
Indeed, the interpretation of DGSM indices is straightforward due to their inequality relationship with Sobol' indices: multiplied with the associated Poincar\'e constant, DGSM indices provide an upper bound of the total Sobol' index \citep{Lamboni_et_al_2013}, regardless of the input probability distribution.

Another way to utilize model derivatives, which solves the issues present for the polynomial chaos formulation, and naturally provides sharp lower bounds as well as upper bounds on total Sobol' indices, 
is to compute
\textit{Poincar\'e chaos expansions} \citep{rougam20}, which we will abbreviate by \textit{PoinCE} in the sequel.
Similar to PCE, PoinCE is a spectral expansion
in terms of an orthonormal basis whose elements are eigenfunctions of the so-called \textit{Poincar\'e differential operator}. 
The eigenfunctions are in general non-polynomial, except for the special case of the Gaussian distribution, where they coincide with the Hermite polynomials.
The key property of PoinCE is that the partial derivatives of the basis form again an orthogonal basis with respect to the input distribution. 
This allows to conveniently expand the derivative of the computational model in terms of partial derivatives of the basis (PoinCE-der), which yields another estimator for partial variances and Sobol' indices. 
If the partial derivatives of the model have smaller variability
than the model itself, the estimates based on the model derivatives might be more accurate. This makes PoinCE(-der) an efficient tool for screening \citep{rougam20}.

Our present contribution to the field of generalized chaos expansions and GSA is two-fold. 
On the theoretical side, we provide a proof that the Poincar\'e basis is in fact characterized uniquely as the orthonormal basis which remains an orthogonal basis (w.r.t.\ the same probability measure) after differentiation.
Furthermore, we show how PoinCE naturally generalizes an analytical formula for DGSM originally developed for Hermite PCE \citep{sudmai15}, which implies that PoinCE simultaneously and efficiently provides lower and upper bounds to all partial variances.
On the computational side, we improve on Roustant et al.~\cite{rougam20}, which introduced projection-based Poincar\'e chaos and demonstrated that small Sobol' indices were approximated particularly well by the derivative expansion. 
In this contribution, we compute PoinCE by sparse regression, thus generalizing the powerful and cost-effective sparse PCE methodology to non-polynomial functions.
We explore the performance of PoinCE as an estimator for partial variances (upper and lower bounds) and compare it to standard PCE.

This paper is organized as follows.
\Cref{sec:theory} revisits the mathematical foundations of PoinCE and presents several analytical results related to Sobol' indices and DGSM.
\Cref{sec:computation} explains the computation of PoinCE basis functions, and the sparse regression methodology adapted from PCE to PoinCE. 
The methodology is applied in \cref{sec:numerical}, where two example problems are investigated to demonstrate its performance for sensitivity analysis and screening.
Finally, we summarize our conclusions in \cref{sec:conclusion}.

\section{Mathematical background}
\label{sec:theory}

\subsection{Orthonormal bases in $L^2$}
\label{sec:orth_bases}
In this section, we recall some important facts about orthonormal bases in $L^2(E,\mu)$ where $E \subset \Rr^d$ and $\mu$ is a probability measure on $E$. We first outline the general theory in 
\cref{susugege}.
The particular cases of polynomial and Poincar\'e bases in several dimensions are developed in \cref{sec:PCE,sec:PoinCE}. 

\subsubsection{General theory}
\label{susugege}
To begin with, recall that $L^2(E,\mu)$ endowed with the inner pro\-duct
\begin{equation}
\innprod{f}{g}=\int_E f(x)g(x) \mu(dx),\;\; \text{ for } f,g\in L^2(E,\mu)
\label{eq:innprod}
\end{equation}
is a Hilbert space.
Recall that a sequence of functions $(\Phi_\alpha)_{\alpha\in \mathcal{I}}$ ($\mathcal{I} \subset \Nn$) is an \textit{orthonormal system} in $L^2(E,\mu)$ if it satisfies the two following assumptions:
\begin{itemize}
	\item[1)] For all $\alpha\neq\alpha'$, $\innprod{\Phi_\alpha}{\Phi_{\alpha'}}=0$, (orthogonality)
	\item[2)] For all $\alpha$, $\innprod{\Phi_\alpha}{\Phi_\alpha}=1$ (unit norm).
\end{itemize}
An orthonormal system in $L^2(E,\mu)$ is called \textit{complete} if the closure of the span generated by $(\Phi_\alpha)$ is $L^2(E,\mu)$.  In this case, the system $(\Phi_\alpha)$ is called an \textit{Hilbertian} or \textit{orthonormal basis} of $L^2(E,\mu)$ and for any function $f\in L^2(E,\mu)$ the following expansion holds:
\begin{equation}
\label{eq:chaos_expansion}
f=\sum_{\alpha}\innprod{\Phi_\alpha}{f}\Phi_\alpha,\;\;\;\; (\mu\mbox{ almost surely}).
\end{equation}

When used to represent random variables in terms of a basis of uncorrelated random variables, such an expansion is often called \textit{chaos expansion} 
in the uncertainty quantification literature
\citep{Wiener1938, Ghanembook1991, Ernst2012}.

An archetype example of chaos expansion is given by the so-called \textit{Fourier expansion}. This corresponds to the case where the set $E=[0,1]$ is endowed with the Lebesgue measure and we have for $\alpha\in\mathbb{Z}$,
\begin{equation*}
\Phi_\alpha(x)=\sqrt{2}\cos(2\pi\alpha x) \text{ if $\alpha < 0$, }
\Phi_0(x)=1 \text{, and } 
\Phi_\alpha(x)=\sqrt{2}\sin(2\pi\alpha x) \text{ if $\alpha > 0$.}
\end{equation*}
In this frame, any square-integrable function $f$ may be expanded as 
\begin{equation*}
f(x)=a_0+\sqrt{2}\sum_{\alpha>0}\left(a_\alpha\cos(2\pi\alpha x)+b_\alpha\sin(2\pi\alpha x)\right).
\end{equation*}
Here, for all $\alpha\in\mathbb{Z}_*$,
\begin{equation*}
a_0=\int_0^1f(x)dx, \;\;
a_\alpha = \sqrt{2} \int_0^1f(x)\cos(2\pi\alpha)dx,\; 
b_\alpha = \sqrt{2} \int_0^1f(x)\sin(2\pi\alpha)dx.
\end{equation*}

When the probability measure is a product measure $\mu = \mu_1 \otimes \dots \otimes \mu_d$ on a product space of intervals
$E=E_1\times E_2\times \cdots \times E_d$, 
there is a canonical way to build a Hilbertian basis from a collection of univariate Hilbertian ones. 
Indeed, for $i=1,\ldots, d$ assume that $(\Phi^{(i)}_{\alpha_i})$ is a Hilbertian basis of $L^2(E_i,\mu_i)$. Then, setting
$\alp:=(\alpha_1,\ldots,\alpha_d)$ and defining the tensor product functions $\Phal:=\prod_{i=1}^{d} \Phi^{(i)}_{\alpha_i}$,
we obtain that $(\Phal)$ is an orthonormal  basis of $L^2(E,\mu)$.

In the following we describe two particular chaos types, namely the classical \textit{polynomial chaos} and the recently developed \textit{Poincar\'e chaos}, for a probability measure $\mu$ on $E \subset \Rr$.

\subsubsection{Polynomial chaos}
\label{sec:PCE}

A classical family of chaos expansions on an interval $E$ of $\Rr$ endowed with a probability measure are \textit{polynomial chaos expansions} (PCE) given by orthonormal polynomial bases.
A well-known example is the Hermite expansion for which the set $E$ is the whole line $\Rr$ endowed with the standard Gaussian distribution. In this example, for $\alpha\in\Nn$, $\Phi_\alpha=H_{\alpha}$ is the Hermite polynomial of degree $\alpha$. The first Hermite polynomials are
\begin{equation*}
H_0(x)=1,\; H_1(x)=x,\; H_2(x)=\frac{x^2-1}{\sqrt{2}}, H_3(x)=\frac{x^3-3x}{\sqrt{6}},\;
H_4(x)=\frac{x^4-6x^2+3}{\sqrt{24}} \quad (x\in\Rr).
\end{equation*}

In general, there exists an orthonormal polynomial basis for $L^2(\mu)$ whenever the moment problem for $\mu$ is determinate. This includes the uniform, Gaussian, Beta and Gamma distributions, as well as all distributions with 
compact support \citep{Ernst2012}.

\subsubsection{Poincar\'e chaos}
\label{sec:PoinCE}

The Poincar\'e basis is another example of an orthonormal basis of $L^2(\mu)$, consisting of functions that admit weak derivatives, i.e. that belong to $H^1(\mu) = \{f \in L^2(\mu) \text{ s.t. } f' \in L^2(\mu) \}$.
Recall that $H^1(\mu)$, endowed with the norm $\Vert f \Vert_{H^1(\mu)}^2 = \Vert f \Vert^2 + \Vert f' \Vert^2$, is a Hilbert space. 
This short summary is based on Roustant et al.~\cite{roubar17} in which more details can be found. We assume that:

\begin{assum} \label{assum:measure}
	The probability measure $\mu$ is supported on a bounded interval $(a,b)$ and admits a density of the form $\rho = e^{-V}$, where $V$ is continuous and piecewise $C^1$ on $[a, b]$ with respect to the Lebesgue measure.
\end{assum}

This assumption is sufficient to guarantee the existence of a Poincar\'e basis. On the topological side, it implies that the Hilbert space $L^2(\mu)$ (resp. $H^1(\mu)$) is equal to $L^2(a, b)$ (resp. $H^1(a, b)$), with an equivalent norm. Indeed, $\mu$ is a bounded perturbation of the uniform measure on $[a, b]$, meaning that the pdf $\rho$ is bounded from below and above by strictly positive constants (by continuity of $V$ on the compact support $[a, b]$).

\begin{thm}[1D Poincar\'e basis]
	\label{thm:1DPoincare}
	Under \cref{assum:measure},  
	there exists an orthonormal basis $\left( \varphi_\alpha \right)_{\alpha \geq 0}$ of $L^2(\mu)$ such that for all $f \in H^1(\mu)$ and 
	for all integer $\alpha \geq 0$, we have: 
	\begin{equation} \label{eq:PoincareWeakForm1D}
	\innprod{ f'}{ \varphi_\alpha' } = \lambda_\alpha \innprod{ f}{ \varphi_\alpha },
	\end{equation}
	where $(\lambda_\alpha)_{\alpha \geq 0}$ is an increasing sequence that tends to infinity: 
	\begin{equation*}
	0 = \lambda_0 < \lambda_1 < \lambda_2 < \dots 
	< \lambda_\alpha \underset{\alpha \rightarrow \infty}{\longrightarrow} +\infty.
	\end{equation*}
	Here, the inner product $\innprod{\cdot}{\cdot}$ is the one on $L^2(\mu)$ as defined in \eqref{eq:innprod}. 
	The basis functions $\varphi_\alpha$ are unique up to a sign change, and form the so-called Poincar\'e basis. Notice that $\varphi_0$ is a constant function equal to $\pm 1$; by convention, we choose $\varphi_0 = 1$.\\
	Furthermore, the Poincar\'e basis functions are the eigenfunctions of the differential operator
	\begin{equation*}
	L(f) = f'' - V'f'
	\end{equation*}
	i.e. satisfy $L(f) = - \lambda f$, subject to Neumann conditions $f'(a) = f'(b) = 0$.
	The $(\lambda_\alpha)_{\alpha \geq 0}$ are the corresponding  eigenvalues. \\
	Finally, $\alpha^{-2} \lambda_\alpha \to \pi^2$ when $\alpha$ tends to infinity, and for all $\alpha \in \mathbb{N}^\star$, the eigenfunction $\varphi_\alpha$ has exactly $\alpha$ zeros in $(a, b)$.
\end{thm}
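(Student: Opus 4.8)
The plan is to recast the weak identity \eqref{eq:PoincareWeakForm1D} as a variational eigenvalue problem, solve it via the spectral theorem for compact self-adjoint operators, and then upgrade the weak eigenfunctions to classical ones so that the finer, oscillation-theoretic statements follow from Sturm--Liouville theory. By \cref{assum:measure} the density $\rho=e^{-V}$ is bounded above and below by strictly positive constants on $[a,b]$, so $L^2(\mu)$ and $H^1(\mu)$ coincide, with equivalent norms, with $L^2(a,b)$ and $H^1(a,b)$; all the functional-analytic tools available on the plain interval therefore transfer. On $H^1(\mu)$ introduce the symmetric bilinear form
\begin{equation*}
B(u,v) = \innprod{u'}{v'} = \int_a^b u'(x)\, v'(x)\, \rho(x)\, dx, \qquad u,v\in H^1(\mu),
\end{equation*}
which is continuous and non-negative, and note that \eqref{eq:PoincareWeakForm1D} asks precisely that $B(\varphi_\alpha,\cdot) = \lambda_\alpha \innprod{\varphi_\alpha}{\cdot}$ on $H^1(\mu)$.

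First I would construct the associated solution operator. The shifted form $B(u,v)+\innprod{u}{v}$ is continuous and coercive on $H^1(\mu)$, so Lax--Milgram produces, for every $g\in L^2(\mu)$, a unique $u=Tg\in H^1(\mu)$ with $B(u,v)+\innprod{u}{v}=\innprod{g}{v}$ for all $v\in H^1(\mu)$, and $T\colon L^2(\mu)\to H^1(\mu)$ is bounded. Composing with the Rellich embedding $H^1(a,b)\hookrightarrow L^2(a,b)$, which is compact because the interval is bounded, gives a compact operator on $L^2(\mu)$; symmetry and non-negativity of $B$ make it self-adjoint and positive, and it is injective. The spectral theorem then furnishes a complete orthonormal system $(\varphi_\alpha)_{\alpha\geq 0}$ of $L^2(\mu)$ consisting of eigenfunctions of $T$ with eigenvalues $\theta_\alpha\downarrow 0^+$; putting $\lambda_\alpha=\theta_\alpha^{-1}-1$ translates the eigenrelation back into \eqref{eq:PoincareWeakForm1D} and yields $\lambda_\alpha\to+\infty$. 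Testing \eqref{eq:PoincareWeakForm1D} with $f=\varphi_\alpha$ gives $\lambda_\alpha=\innprod{\varphi_\alpha'}{\varphi_\alpha'}\geq 0$, with equality only for constant functions, so the smallest eigenvalue is $\lambda_0=0$ with $\varphi_0$ constant, which we normalise to $\varphi_0=1$; completeness is exactly the assertion that $(\varphi_\alpha)$ is an orthonormal basis.

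Next I would pass from the weak to the strong formulation and invoke Sturm--Liouville theory. Testing \eqref{eq:PoincareWeakForm1D} first against functions compactly supported in an open subinterval where $V\in C^1$ shows, after integration by parts, that $(\rho\varphi_\alpha')'=-\lambda_\alpha\rho\varphi_\alpha$ there; since $\rho'=-V'\rho$ this reads $\varphi_\alpha''-V'\varphi_\alpha'=-\lambda_\alpha\varphi_\alpha$, i.e. $L(\varphi_\alpha)=-\lambda_\alpha\varphi_\alpha$, and a regularity bootstrap ensures that $\varphi_\alpha$ and the flux $\rho\varphi_\alpha'$ stay continuous across the finitely many break points of $V'$. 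Allowing test functions that do not vanish at the endpoints then forces the natural boundary conditions $\varphi_\alpha'(a)=\varphi_\alpha'(b)=0$. Thus $(\varphi_\alpha,\lambda_\alpha)$ are the eigenpairs of a regular Sturm--Liouville problem ($p=w=\rho$ bounded away from $0$ and $\infty$, $q=0$) under Neumann conditions, and classical oscillation theory gives the rest: the eigenvalues are simple, hence the inequalities $\lambda_0<\lambda_1<\lambda_2<\cdots$ are strict and each normalised $\varphi_\alpha$ is unique up to sign; the $\alpha$-th eigenfunction has exactly $\alpha$ zeros in $(a,b)$; and a Liouville transformation (or a Dirichlet--Neumann bracketing / Weyl's law argument) gives $\lambda_\alpha\sim\pi^2\alpha^2/(b-a)^2$, which is the stated limit $\alpha^{-2}\lambda_\alpha\to\pi^2$ after normalising the interval to unit length.

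The spectral-theoretic half — existence, orthonormal-basis property, $\lambda_\alpha\to\infty$, identification of $\lambda_0$ — is routine once the compact embedding is recorded. The main obstacle is the sharp qualitative information about the spectrum under the weak regularity of $V$: since $V$ is only piecewise $C^1$, the textbook oscillation theorems cannot be quoted verbatim. Instead one works on each smooth piece, where the Wronskian argument for simplicity and the Pr\"ufer-angle / Sturm-comparison machinery for the exact zero count are available, and glues the pieces together using the continuity of $\varphi_\alpha$ and of the flux $\rho\varphi_\alpha'$; likewise the precise constant $\pi^2$ in the eigenvalue asymptotics requires a Liouville-type reduction or a bracketing argument rather than a direct appeal to a smooth-coefficient theorem. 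I expect this oscillation/asymptotics step to be where the real work lies.
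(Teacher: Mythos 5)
Your argument is sound and, in outline, a complete self-contained proof. The paper itself proves almost nothing directly: it cites Roustant et al.\ and Bakry et al.\ for the existence of the basis and the weak identity \eqref{eq:PoincareWeakForm1D}, and obtains the last two assertions (the asymptotics $\alpha^{-2}\lambda_\alpha\to\pi^2$ and the zero count) solely by rewriting the eigenvalue equation in the Sturm--Liouville form \eqref{eq:sturm-liouville} with $p=w=\rho$, $q=0$ and quoting Zettl's Theorem~4.3.1. Your Lax--Milgram / Rellich / spectral-theorem construction is precisely the standard machinery hiding behind those citations, so the first half of your proposal reproves what the paper outsources; what it buys is independence from the references and an explicit dictionary $\lambda_\alpha=\theta_\alpha^{-1}-1$ between the solution-operator spectrum and the Poincar\'e eigenvalues. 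For the second half you are more cautious than necessary: you flag the piecewise-$C^1$ regularity of $V$ as ``where the real work lies'' and propose a piece-by-piece Pr\"ufer/Wronskian gluing, but the modern Sturm--Liouville theory the paper invokes is stated for coefficients with merely $1/p,\,q,\,w\in L^1_{\mathrm{loc}}$, which $\rho$ satisfies since it is continuous and bounded away from zero; simplicity of the eigenvalues, the exact zero count and the Weyl asymptotics therefore follow by direct citation, with no gluing argument needed. One observation you make in passing is worth keeping: the asymptotics are $\lambda_\alpha\sim\pi^2\alpha^2/(b-a)^2$, so the limit $\pi^2$ as literally stated presupposes a unit-length interval (consistent with the paper's later remark that the eigenvalues scale with the inverse squared support length); the paper's proof is silent on this normalisation.
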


\begin{proof}
	The main part of the Theorem can be found in Roustant et al.~\cite{roubar17} or Bakry et al.~\cite{BGL_book}. 
	The two last assertions come by rewriting the differential equation $f'' - V'f' = - \lambda f$ in the Sturm-Liouville form
	\begin{equation}
	-(pf')' + qf = \lambda w f
	\label{eq:sturm-liouville}
	\end{equation}
	with $p =w = \rho$ and $q = 0$. Then, by the Sturm-Liouville theory (see e.g.\ Zettl~\citet[Theorem 4.3.1, (1), (6) and (7)]{zettl2010sturm}), we have that $\alpha^{-2} \lambda_\alpha \to \pi^2$ when $\alpha$ tends to infinity, and for all $\alpha \in \mathbb{N}^\star$, the eigenfunction $\varphi_\alpha$ has exactly $\alpha$ zeros in $(a, b)$.
\end{proof}

The Poincar\'e basis shares some similarity with both the polynomial chaos and the Fourier basis in terms of oscillations:
by \cref{thm:1DPoincare}, the higher the order of the eigenvalue, the more oscillating the corresponding eigenfunction.  

For some specific cases, the Poincar\'e basis is known analytically. For instance, for the uniform distribution, the Poincar\'e basis is a kind of Fourier basis (see e.g.\ Roustant et al.~\citet[\S 4]{rougam20}). Otherwise it has to be computed numerically, e.g., by a finite element technique (see \cref{sec:implementation}). 

Note that \cref{assum:measure} is a convenient sufficient condition which guarantees the existence of a Poincar\'e basis.
It is satisfied for a large range of truncated parametric probability distributions. The set of probability distributions for which the Poincar\'e chaos exists is larger, but not well known.
For instance, the Poincar\'e chaos is defined for the Gaussian distribution, and then coincides with polynomial chaos, corresponding to Hermite polynomials. This is the only case where Poincar\'e chaos and polynomial chaos coincide \citep[\S 2.7]{BGL_book}. On the other hand, Poincar\'e chaos is not defined for the Laplace distribution, since the eigenvalues of the associated operator do not form a countable set \citep[\S 4.4.1]{BGL_book}. 

The Poincar\'e basis is useful for sensitivity analysis. First, it is linked to the Poincar\'e inequality
\begin{equation} \label{eq:PoincareIneq}
\Var_{\mu}(f) \leq C_P(\mu) \int f'^2 d\mu,
\end{equation}
which holds for all functions $f \in H^1(\mu)$ under the assumptions on $\mu$. Indeed, the smallest constant $C_P(\mu)$ such that \eqref{eq:PoincareIneq} is satisfied is equal to $C_P(\mu) = 1/\lambda_1$, and choosing $f = \varphi_1$ corresponds to the equality case \citep{roubar17}. Roughly speaking, the Poincar\'e basis function associated to the first non-zero eigenvalue is the function with the largest possible variance for a given amount of integrated squared derivative (in the $L^2$ sense). 
A second appealing property for the analysis of variance is that the derivatives of the Poincar\'e basis remain orthogonal functions:
\begin{prop} \label{prop:PoincDerOrtho1D}
	Under \cref{assum:measure}, the sequence $\left( \frac{1}{\sqrt{\lambda_\alpha}} \varphi_\alpha' \right)_{\alpha \geq 1}$ is an orthonormal basis of $L^2(\mu).$ 
\end{prop}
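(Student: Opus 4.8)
The plan is to split the proof into two parts: (i) orthonormality, which follows immediately from the weak characterization \eqref{eq:PoincareWeakForm1D}, and (ii) completeness, which I would obtain by an antiderivative trick that reduces the claim back to the completeness of the original basis $(\varphi_\alpha)_{\alpha\geq 0}$.

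For part (i), I fix $\alpha,\beta\geq 1$ and plug the test function $f=\varphi_\beta\in H^1(\mu)$ into \eqref{eq:PoincareWeakForm1D}. Since $\innprod{\varphi_\beta}{\varphi_\alpha}=\delta_{\alpha\beta}$, this gives $\innprod{\varphi_\beta'}{\varphi_\alpha'}=\lambda_\alpha\delta_{\alpha\beta}$, and because $\lambda_\alpha>0$ for $\alpha\geq 1$, dividing by $\sqrt{\lambda_\alpha\lambda_\beta}$ shows that $\bigl(\lambda_\alpha^{-1/2}\varphi_\alpha'\bigr)_{\alpha\geq 1}$ is an orthonormal system in $L^2(\mu)$.

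For part (ii), it suffices to show that any $g\in L^2(\mu)$ which is orthogonal to all $\varphi_\alpha'$, $\alpha\geq 1$, must vanish. By \cref{assum:measure}, $\rho$ is bounded above and below by positive constants, so $L^2(\mu)=L^2(a,b)\subset L^1(a,b)$ and the antiderivative $F(x):=\int_a^x g(t)\,dt$ is absolutely continuous on $[a,b]$; hence $F\in H^1(a,b)=H^1(\mu)$ with $F'=g$ almost everywhere. Feeding $f=F$ into \eqref{eq:PoincareWeakForm1D} yields $0=\innprod{g}{\varphi_\alpha'}=\innprod{F'}{\varphi_\alpha'}=\lambda_\alpha\innprod{F}{\varphi_\alpha}$, so $\innprod{F}{\varphi_\alpha}=0$ for every $\alpha\geq 1$. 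Since $(\varphi_\alpha)_{\alpha\geq 0}$ is an orthonormal basis of $L^2(\mu)$ with $\varphi_0=1$, this forces $F$ to coincide with its constant mean, whence $g=F'=0$. Combined with (i), the system $\bigl(\lambda_\alpha^{-1/2}\varphi_\alpha'\bigr)_{\alpha\geq 1}$ has dense span and is therefore an orthonormal basis of $L^2(\mu)$.

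I expect the only genuinely delicate point to be the antiderivative construction: one must invoke the norm equivalences $L^2(\mu)\simeq L^2(a,b)$ and $H^1(\mu)\simeq H^1(a,b)$ from \cref{assum:measure} both to justify $F\in H^1(\mu)$ and to legitimately apply \eqref{eq:PoincareWeakForm1D} to $F$. Once that is in place, everything reduces to the properties of the Poincar\'e basis already recorded in \cref{thm:1DPoincare}.
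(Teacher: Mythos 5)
Your proposal is correct and follows essentially the same route as the paper: orthonormality by testing \eqref{eq:PoincareWeakForm1D} with $f=\varphi_\beta$, and completeness by taking an element of the orthogonal complement, forming its antiderivative in $H^1(a,b)=H^1(\mu)$ (using the norm equivalence from \cref{assum:measure}), and reducing to the completeness of $(\varphi_\alpha)_{\alpha\geq 0}$ via the weak eigenrelation. No gaps; the only delicate point you flag (justifying $F\in H^1(\mu)$) is handled exactly as in the paper.
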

\begin{proof}
	The orthonormality of the sequence is a consequence of \eqref{eq:PoincareWeakForm1D} by choosing $f = \varphi_{\beta}$, with $\beta \in \Nn^*.$
	It remains to show that the system is dense in $L^2(\mu)$, or equivalently, that its orthogonal complement is the null element. Let thus $f \in L^2(\mu)$ such that 
	\begin{equation*}
	\innprod{ f}{ \varphi_\alpha' } = 0, \quad \text{ for all } \alpha \geq 1.
	\end{equation*}
	As explained when stating \cref{assum:measure}, $L^2(\mu)$ (resp.\ $H^1(\mu)$) is equal to $L^2(a, b)$ (resp. $H^1(a, b)$), with an equivalent norm. 
	Now, there exists $g \in H^1(\mu)$ such that $f = g'$. 
	Indeed, let us define $g$ by $g(x) = g(a) + \int_{a}^x f(t) dt$.
	As $f \in L^2(\mu) = L^2(a, b)$, then $g$ belongs to $H^1(a, b) = H^1(\mu)$, and $g' = f$.
	Then we have 
	\begin{equation*}
	\innprod{ g'}{ \varphi_\alpha' } = 0, \quad \text{ for all } \alpha \geq 1.
	\end{equation*}
	By \eqref{eq:PoincareWeakForm1D}, we obtain $\innprod{ g}{ \varphi_\alpha } = 0$ for all $\alpha \geq 1$ (as $\lambda_\alpha > 0$ for $\alpha \geq 1$).
	As the functions $\varphi_\alpha$ form an orthonormal basis of $L^2(\mu)$ with $\varphi_{0} = 1$, this implies that $g$ is a constant function, and finally $f=0$. The proof is completed.
\end{proof}
In fact, the property in \cref{prop:PoincDerOrtho1D}, i.e., that the derivatives of the Poincar\'e basis form again an orthogonal basis in $L^2(\mu)$, uniquely characterizes the Poincar\'e basis: 
\begin{prop} \label{prop:PoincareBasesCaracterisation}
	Under \cref{assum:measure}, Poincar\'e bases are the only orthonormal bases $(\varphi_\alpha)$ of $L^2(\mu)$ in $H^1(\mu)$ such that $(\varphi'_\alpha)$ is an orthogonal basis of $L^2(\mu)$.
\end{prop}

This result seems difficult to find in the literature. German-speaking readers can find a similar proposition in Mikolas~\cite{mikolas1955}, stated in the frame of Sturm-Liouville theory for twice-differentiable functions satisfying boundary conditions. 
See also Kwon and Lee~\citet{Kwon2003} for a similar result under the assumption that all functions involved in the Sturm-Liouville problem \cref{eq:sturm-liouville} are of class $C^\infty$.
We provide below a proof based on Hilbertian arguments. 

As a corollary, if there exists a basis different from the Poincar\'e basis for which derivatives form an orthogonal system, then that system is not dense in $L^2(\mu)$. As an example, for the uniform probability measure on $[0, 2\pi]$, consider the usual Fourier basis formed by $\{\cos(nx), \sin(nx): n \geq 0 \}$ (up to multiplicative constants). Taking derivatives results in the same set of functions (up to multiplicative constants) -- except for the constant function $\cos(0x) = 1$.
Thus, the derivatives form an orthogonal system which covers the orthogonal of constant functions in $L^2(\mu)$, which is a strict subspace of $L^2(\mu)$. Meanwhile, the Poincar\'e basis for this probability measure is formed by functions proportional to $\cos\left(\frac{n}{2}x \right)$ for $n \geq 0$.  \Cref{prop:PoincareBasesCaracterisation} guarantees that all functions of $L^2(\mu)$, including the constant functions, are spanned by the derivatives. Indeed, this is explained intuitively by the presence of half-frequencies: when $n$ is odd, the functions $\sin \left(\frac{n}{2}x \right)$ are not orthogonal to $1$.

\begin{proof}[Proof of \cref{prop:PoincareBasesCaracterisation}]
	The fact that a Poincar\'e basis remains an orthogonal basis by derivation has been proved in \cref{prop:PoincDerOrtho1D}.
	Conversely, let $(\varphi_\alpha)_{\alpha \geq 0}$ be a system of $H^1(\mu)$, with $\varphi_0 = 1$, such that $(\varphi_\alpha)$ is an orthonormal basis of $L^2(\mu)$ and $(\varphi'_\alpha)_{\alpha \geq 1}$ is an orthogonal basis of $L^2(\mu)$. 
	Let us first prove that $(\varphi_\alpha)$ is an orthogonal basis of $H^1(\mu)$. The orthogonality is a direct consequence of the definition of the inner product of $H^1(\mu)$:
	\begin{equation*}
	\innprod{ \varphi_\alpha}{ \varphi_\beta }_{H^1(\mu)} 
	= \innprod{ \varphi_\alpha}{\varphi_\beta }_{L^2(\mu)}
	+ \innprod{ \varphi'_\alpha}{ \varphi'_\beta }_{L^2(\mu)} 
	= (1 + \Vert \varphi'_\alpha \Vert_{L^2(\mu)}^2) \delta_{\alpha, \beta}.
	\end{equation*}
	Let us prove that $(\varphi_\alpha)$ is dense in $H^1(\mu)$. As explained when stating \cref{assum:measure}, $L^2(\mu)$ (resp. $H^1(\mu)$) is equal to $L^2(a, b)$ (resp. $H^1(a, b)$), with an equivalent norm. Hence, it is equivalent to prove that $(\varphi_\alpha)$ is dense in $H^1(a,b)$. Now, let $f$ be in $H^1(a, b)$. As $(\varphi'_\alpha)$ is dense in $L^2(a, b)$ (equivalently in $L^2(\mu)$), then $f'$ expands as
	$f' = \sum_{\alpha \in \Nn} c_\alpha \varphi'_\alpha$.
	In $H^1(a,b)$ each function is equal to the primitive function of its derivative, hence we have: 
	\begin{eqnarray*}
		\left \vert f(t) - f(a) - \sum_{\alpha=1}^N c_\alpha (\varphi_\alpha(t) - \varphi_\alpha(a)) \right \vert 
		& = & \left \vert \int_a^t \left( f'(x) - \sum_{\alpha=1}^N c_\alpha \varphi'_\alpha(x) \right) dx \right \vert \\
		& \leq & (b - a) \left \Vert f' - \sum_{\alpha=1}^N c_\alpha \varphi'_\alpha \right \Vert_{L^2(a, b)} 
	\end{eqnarray*}
	where the inequality comes from the Cauchy-Schwarz inequality.
	We deduce that 
	$\Vert f - f(a) - \sum_{\alpha = 1}^N c_\alpha (\varphi_\alpha - \varphi_\alpha(a)) \Vert_{L^2(a, b)} \to 0$ when $N$ tends to infinity. Together with $f' = \sum_{\alpha \in \Nn} c_\alpha \varphi'_\alpha$, this implies that $\Vert f - f(a) - \sum_{\alpha = 1}^N c_\alpha (\varphi_\alpha - \varphi_\alpha(a)) \Vert_{H^1(a, b)} \to 0$. As $\varphi_0 = 1$, this proves that $(\varphi_\alpha)$ is dense in $H^1(a,b)$, which was to be proved.\\
	Now, let us fix $\alpha \geq 0$. Consider the linear form $L_\alpha$ defined on $H^1(\mu)$ by $L_\alpha(f) = \innprod{ f'}{ \varphi'_\alpha }_{L^2(\mu)}$. 
	The Cauchy-Schwarz inequality gives 
	$\vert L_\alpha(f) \vert 
	\leq 
	\Vert f' \Vert_{L^2(\mu)}
	\Vert \varphi'_\alpha \Vert_{L^2(\mu)}
	\leq 
	\Vert f \Vert_{H^1(\mu)}
	\Vert \varphi'_\alpha \Vert_{L^2(\mu)}$.
	This proves that $L_\alpha$ is continuous.
	Hence, by the Riesz representation theorem, there exists a unique $\zeta_\alpha \in H^1(\mu)$ such that for all $f \in H^1(\mu)$,  $L_\alpha(f) = \innprod{ f}{ \zeta_\alpha }_{H^1(\mu)}$, i.e. 
	$\innprod{ f'}{ \varphi'_\alpha }_{L^2(\mu)}
	= \innprod{f}{ \zeta_\alpha }_{H^1(\mu)}.$
	Choosing $f = \varphi_\beta$ with $\beta \neq \alpha$, we obtain by orthogonality of $(\varphi'_\alpha)$ that for all $\beta \neq \alpha$, 
	$\innprod{ \varphi_\beta}{ \zeta_\alpha }_{H^1(\mu)} = 0$.
	As $(\varphi_\beta)_{\beta \geq 0}$ is an orthogonal basis of $H^1(\mu)$, this implies that $\zeta_\alpha$ is collinear to $\varphi_\alpha$, i.e., there exists $\Tilde{\lambda}_\alpha \in \mathbb{R}$ such that $\zeta_\alpha = \Tilde{\lambda}_\alpha \varphi_\alpha$. 
	Thus, for all $f \in H^1(\mu)$, we have 
	$\innprod{ f'}{ \varphi'_\alpha }_{L^2(\mu)}
	= \Tilde{\lambda}_\alpha \innprod{ f}{ \varphi_\alpha }_{H^1(\mu)}.$
	Choosing $f = \varphi_\alpha$, we get 
	$\Tilde{\lambda}_\alpha = \frac{\Vert \varphi'_\alpha \Vert_{L^2(\mu)}^2} 
	{1 + \Vert \varphi'_\alpha \Vert_{L^2(\mu)}^2}$, which belongs to $[0, 1)$. 
	Finally, we obtain that 
	$\innprod{ f'}{ \varphi'_\alpha }_{L^2(\mu)}
	= \lambda_\alpha \innprod{ f}{ \varphi_\alpha }_{L^2(\mu)},$
	where $\lambda_\alpha = \frac{\Tilde{\lambda}_\alpha}{1 - \Tilde{\lambda}_\alpha}$ is a non-negative real number.
	As it is true for all $f$ in $H^1(\mu)$ and all $\alpha \in \mathbb{N}$, this implies, by uniqueness of the Poincar\'e basis (under \cref{assum:measure}), that $(\varphi_\alpha)_{\alpha \geq 0}$ is a Poincar\'e basis.
\end{proof}

Turning to higher dimensions, we assume that for all $i=1, \dots, d$, the probability measure $\mu_i$ satisfies \cref{assum:measure}, and we denote by $(\vpali{i})_{\alpha_i \geq 0}$ the sequence of 1-dimensional Poincar\'e basis functions, and by $(\lambda_{i, \alpha_i})_{\alpha_i \geq 0}$ the sequence of associated eigenvalues. The Poincar\'e chaos basis is then defined by the tensor product $\Phal = \vpali{1} \otimes \dots \otimes \vpali{d}$. Using the properties of $L^2$ bases, \eqref{eq:PoincareWeakForm1D} thus implies that for all $f \in H^1(\mu)$, for all $i = 1, \dots, d$:
\begin{equation} \label{eq:PoincareWeakForm}
\innprod{\frac{\partial f}{\partial x_i}}{\frac{\partial \Phal}{\partial x_i}} = \lambda_{i, \alpha_i} \innprod{f}{\Phal}.
\end{equation}
Similarly, applying \cref{prop:PoincDerOrtho1D}, we get:
\begin{prop} \label{prop:PoincDerOrtho}
	Let $\{\Phal\}_{\alp}$ be a multivariate Poincar\'e chaos basis.
	Under \cref{assum:measure}, for all $i=1, \dots, d$, the sequence $\displaystyle \left( \frac{1}{\sqrt{\eig{i}}}\frac{\partial \Phal}{\partial x_i} \right)_{\alp, \alpha_i \geq 1}$ is an orthonormal basis of $L^2(\mu)$. 
\end{prop}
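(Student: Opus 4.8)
The plan is to derive \cref{prop:PoincDerOrtho} directly from the one-dimensional result \cref{prop:PoincDerOrtho1D} together with the tensor-product construction of Hilbertian bases recalled in \cref{susugege}, exploiting that $L^2(\mu) = L^2(\mu_1) \otimes \dots \otimes L^2(\mu_d)$ since $\mu = \mu_1 \otimes \dots \otimes \mu_d$ and each $\mu_i$ satisfies \cref{assum:measure}. The first observation is that differentiation in the direction $x_i$ acts on a single factor of the tensor product, so that
\begin{equation*}
\partderiv[\Phal]{i} = \vpali{1} \otimes \dots \otimes \vpali{i-1} \otimes \varphi_{i, \alpha_i}' \otimes \vpali{i+1} \otimes \dots \otimes \vpali{d},
\end{equation*}
and hence $\tfrac{1}{\sqrt{\eig{i}}}\,\partderiv[\Phal]{i}$ is the tensor product in which slot $i$ carries the rescaled derivative $\tfrac{1}{\sqrt{\eig{i}}}\,\varphi_{i,\alpha_i}'$ and every other slot $j$ carries $\vpali{j}$. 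Each such element lies in $L^2(\mu)$, since $\varphi_{i,\alpha_i}' \in L^2(\mu_i)$ by definition of $H^1(\mu_i)$ while $\vpali{j} \in L^2(\mu_j)$, and $\mu$ is the product measure.

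Next I would invoke the two building blocks. For each $j \neq i$, the family $(\vpali{j})_{\alpha_j \geq 0}$ is an orthonormal basis of $L^2(\mu_j)$ by \cref{thm:1DPoincare}; and for the $i$-th coordinate, \cref{prop:PoincDerOrtho1D} applied to $\mu_i$ gives that $\bigl(\tfrac{1}{\sqrt{\eig{i}}}\,\varphi_{i,\alpha_i}'\bigr)_{\alpha_i \geq 1}$ is an orthonormal basis of $L^2(\mu_i)$. By the canonical tensor-product construction, the family obtained by taking all tensor products of one element drawn from each of these $d$ univariate orthonormal bases is an orthonormal basis of $L^2(\mu)$. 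By the factorization above, this family is precisely $\bigl(\tfrac{1}{\sqrt{\eig{i}}}\,\partderiv[\Phal]{i}\bigr)$ indexed over all $\alp$ with $\alpha_i \geq 1$ and $\alpha_j \geq 0$ for $j \neq i$, which is exactly the assertion.

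I do not expect a substantial obstacle. The only points requiring a little care are the bookkeeping of the index set --- the $i$-th index runs over $\alpha_i \geq 1$ because $\varphi_{i,0}' = 0$ drops out, whereas the remaining indices range over $\alpha_j \geq 0$ --- and checking that the statement recalled in \cref{susugege} is indeed applicable to this ``mixed'' collection of univariate bases (the derivative basis in coordinate $i$, the original Poincar\'e bases in the other coordinates); neither is more than a routine verification.
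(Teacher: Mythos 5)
Your proposal is correct and follows essentially the same route as the paper, which simply states that the result follows by ``applying \cref{prop:PoincDerOrtho1D}'' to the tensor-product construction of \cref{susugege}; you have merely spelled out the factorization of $\partial \Phal / \partial x_i$ and the index bookkeeping that the paper leaves implicit.
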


\subsection{Variance-based indices, derivative-based indices}
We first recall the definition of variance-based sensitivity indices, which quantify the importance of each input variable in terms of function response variability.

Let $f$ be a real-valued function defined on $E = E_1 \times \dots \times E_d \subseteq \mathbb{R}^d$. The uncertainty of the inputs is represented by a random vector $\ve X = (X_1, \dots, X_d)^T$ with probability measure $\mu$ on $E$. We further assume that the $X_i$'s are independent and that $f(\ve X)$ belongs to $L^2(E, \mu)$. Denoting by $\mu_i$ the marginal distribution of $X_i$ on $E_i$ ($i=1, \dots, d$), we then have $\mu = \mu_1 \otimes \dots \otimes \mu_d$.
In this framework, $f(\ve X)$ can be decomposed uniquely as a sum of terms of increasing complexity
\begin{equation} \label{eq:FANOVA}
f(\ve X) = f_0 + \sum_{1 \leq i \leq d} f_i(X_i) + \sum_{1 \leq i <j \leq d} f_{i,j}(X_i, X_j) 
+ \dots + f_{1, \dots, d}(X_1, \dots, X_d)
\end{equation}
under centering conditions $\Esp{f_I(X_I)} = 0$ and non-overlapping conditions $\Esp{f_I(X_I) \vert X_J} =0$, for all sets $I \subseteq \{1, \dots, d \}$ and all strict subsets $J$ of $I$. We have used the set notation $X_I$ to represent the subvector of $\ve X$ obtained by selecting the coordinates belonging to $I$. These conditions imply that all the terms of \eqref{eq:FANOVA} are orthogonal, leading to the variance decomposition
\begin{equation} \label{eq:ANOVA}
\Var{f(\ve X)} = \sum_{1 \leq i \leq d} \Var{f_i(X_i)} + \sum_{1 \leq i <j \leq d} \Var{f_{i,j}(X_i, X_j)} 
+ \dots + 
\Var{f_{1, \dots, d}(X_1, \dots, X_d)}
\end{equation}
Due to this property, the functional decomposition \eqref{eq:FANOVA} is often called ANOVA (ANalysis Of VAriance) decomposition. Originating from Hoeffding~\citet{Hoeffding_1948}, it was revisited by Efron and Stein~\citet{efrste81}, Antoniadis~\citet{Antoniadis1984}, and Sobol and Gresham~\citet{sob93}. 
For a given set $I \subseteq \{1, \dots, d\}$, we call the corresponding term of \eqref{eq:ANOVA} \textit{partial variance} (denoted $D_I$), and call its normalized version \textit{Sobol' index} (denoted $S_I$):
\begin{equation*}
D_I = \Var(f_I(X_I)), \qquad S_I = \frac{D_I}{D},
\end{equation*}
where $D = \Var{f(\ve X)}$ is the overall variance (\textit{total variance}). 
In particular, for $i \in \{1, \dots, d\}$, the first-order Sobol' index $S_i$  corresponds to the proportion of variance of $f(\ve X)$ explained by $X_i$ only. In order to include also the interactions of $X_i$ with the other variables, the total partial variance and the total Sobol' index are defined by
\begin{equation*}
D_i^\textrm{tot} = \sum_{I \supseteq \{i \}} \Var(f_I(X_I)), \qquad S_i^\textrm{tot} = \frac{D_i^\textrm{tot}}{D}.
\end{equation*}
Note that practitioners also call the (total) partial variances unnormalized (total) Sobol' indices. In the sequel, we will use these two words interchangeably.\\
The total Sobol' index can be used for screening. Indeed, under mild conditions, if $S_i^\textrm{tot} = 0$ then the function $f$ does not depend on $x_i$ over $E$ (in the pointwise sense).\\

When the derivatives are available, a global sensitivity index can be obtained by integration. The so-called \textit{derivative-based sensitivity measure} (DGSM) index of $f$ with respect to $X_i$ \citep{Sobol_Gresham_1995,Kucherenko2009} is defined by 
\begin{equation} \label{eq:DGSM}
\nu_i = \Esp{\left( \frac{\partial f}{\partial x_i}(\ve X) \right)^2 } 
= \int_{E} \left( \frac{\partial f}{\partial x_i}(\ve x) \right)^2 d\mu(\ve x) 
= \left\Vert \frac{\partial f}{\partial x_i} \right\Vert^2 .
\end{equation}

Contrarily to variance-based indices, DGSM are not associated to a variance decomposition. Nevertheless, they can be used for screening. Indeed, under mild conditions, $\nu_i = 0$ implies that $f$ does not depend on $x_i$ over $E$.

\subsection{Chaos expansion serving  sensitivity analysis}
One main advantage of using an orthonormal basis for sensitivity analysis is that, once the expansion has been obtained, the variance-based indices can be computed in a straightforward way as a sum of squared coefficients \citep{SudretCSM2006, sud08}. More precisely, let $f$ be in $L^2(\mu)$, and let $(\Phal)_{\alp\in \Nn^d}$ be a multivariate orthonormal basis obtained by tensorization as described in \cref{sec:orth_bases}.
The multi-indices $\alp \in \Nn^d$ are obtained from the enumeration of the univariate bases as described in \cref{sec:orth_bases}. We assume that each univariate basis contains the constant function, which is without loss of generality given the index zero. This assumption is fulfilled for PCE and for Poincar\'e chaos.
The expansion of $f$ in this basis is given by
\begin{equation} \label{eq:ChaosCoef}
f = \sum_{\alp \in \Nn^d} c_\alp \Phal.
\end{equation}
By using the orthonormality we obtain the expression of the total variance
\begin{equation} \label{eq:VarWithChaos}
D = \sum_{\alp \neq \mathbf{0}} c_\alp^2.
\end{equation}
The expression of the total Sobol' index $S_i^{\rm{tot}}$ is obtained by only considering the terms of the decomposition \eqref{eq:ChaosCoef} that contain the variable $x_i$, i.e. such that $\alpha_i \geq 1$. Hence, we have
$S_i^{\rm{tot}} = \frac{D_i^{\rm{tot}}}{D}$ with $D_i^{\rm{tot}}$ the total partial variance
\begin{equation} \label{eq:VarTotWithChaos}
D_i^{\rm{tot}} = \sum_{\alp, \alpha_i \geq 1} c_\alp^2.
\end{equation}
The first-order Sobol' index $S_i^1$ relies on the terms that include $x_i$ only, i.e., $S_i^{1} = \frac{D_i^{1}}{D}$ with
\begin{equation} \label{eq:VarFirstWithChaos}
D_i^{1} = \sum_{\substack{\alp, \alpha_i \geq 1, \\ \alpha_j = 0 \text{ for }j\neq i}} c_\alp^2.
\end{equation}

Let us now consider the case where the gradient of $f$ is available. The Poincar\'e basis is particularly suited to this situation. Indeed, we can derive in a straightforward way expressions of both variance-based and derivative-based indices, involving the derivatives of $f$.
Due to orthonormality, the coefficients of the basis expansion in \eqref{eq:ChaosCoef} are given by the projection of $f$ onto the associated basis element:
\begin{equation}
\label{eq:coeff}
c_\alp = \innprod{f}{\Phal}.
\end{equation}

From now on, let $(\Phal)_\alp$ denote the Poincar\'e basis. 
Combining \eqref{eq:PoincareWeakForm} and \eqref{eq:coeff}, and assuming that $\alpha_1\geq 1$, $c_\alp$ can be written using the partial derivatives w.r.t\ variable $X_1$ \citep{rougam20}:
\begin{equation} \label{eq:ChaosCoefWithDer}
c_\alp = \innprod{f}{\Phal} 
= \frac{1}{\eig{1}} 
\innprod{\frac{\partial f}{\partial x_1}}
{\frac{\partial \Phal}{\partial x_1}}
= \frac{1}{\eig{1}}  
\innprod{ \frac{\partial f}{\partial x_1}}
{\frac{\partial \vpali{1}}{\partial x_1} \otimes \vpali{2} \otimes \dots \otimes \vpali{d}}
\end{equation}
and equivalently using partial derivatives w.r.t\ variable $X_i$ if $\alpha_i \geq 1$.
Thus, \eqref{eq:VarWithChaos}, \eqref{eq:VarTotWithChaos} and \eqref{eq:VarFirstWithChaos} can also be computed using 
the various partial derivatives of $f$. Whereas the theoretical expressions are equal, their estimators have different properties. 
For example, if the integral is evaluated by Monte Carlo simulation, the expression whose integrand has smaller variance will be more accurate.
We describe in \cref{sec:computation} the computation of the expansion coefficients by regression,
and we empirically compare the two estimation procedures in \cref{sec:numerical}.

Furthermore,  
DGSM can be computed directly from the Poincar\'e expansion. More precisely, we have the following proposition.
\begin{prop}[DGSM formula for Poincar\'e chaos]
	Let $f \in H^1(\mu)$. 
	Let $f = \sum_{\alp} c_\alp \Phal$ be the expansion of $f$ in the Poincar\'e chaos basis, with ${c_\alp = \innprod{ f}{ \Phal} }$.
	Then the DGSM index of $f$ with respect to $X_i$ is equal to:
	\begin{equation} \label{eq:DGSM_Poincare}
	\nu_i = \sum_{\alp, \, \alpha_i \geq 1} \lambda_{i, \alpha_i} \left(c_\alp \right)^2.
	\end{equation}
\end{prop}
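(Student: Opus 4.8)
The plan is to obtain $\nu_i = \left\Vert \partial f/\partial x_i \right\Vert^2$ by expanding the partial derivative $g_i := \partial f/\partial x_i$ in a well-chosen orthonormal basis of $L^2(\mu)$ and then invoking Parseval's identity. Since $f \in H^1(\mu)$, we have $g_i \in L^2(\mu)$, so $g_i$ admits a convergent expansion in \emph{any} orthonormal basis of $L^2(\mu)$. The natural choice here is the basis furnished by \cref{prop:PoincDerOrtho}, namely $\psi_{\alp} := \frac{1}{\sqrt{\eig{i}}}\frac{\partial \Phal}{\partial x_i}$ indexed by the multi-indices $\alp$ with $\alpha_i \geq 1$. (Terms with $\alpha_i = 0$ are correctly absent, since then $\vpali{i} = \varphi_0 = 1$ is constant and $\partial \Phal/\partial x_i = 0$.)

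The second step is to compute the coefficients $\innprod{g_i}{\psi_{\alp}}$. Using the multivariate weak form \eqref{eq:PoincareWeakForm}, which holds for all $f \in H^1(\mu)$ and all $i$, together with $c_\alp = \innprod{f}{\Phal}$, we get
\begin{equation*}
\innprod{g_i}{\psi_{\alp}} = \frac{1}{\sqrt{\eig{i}}}\innprod{\frac{\partial f}{\partial x_i}}{\frac{\partial \Phal}{\partial x_i}} = \frac{\eig{i}}{\sqrt{\eig{i}}}\innprod{f}{\Phal} = \sqrt{\eig{i}}\, c_\alp .
\end{equation*}
Finally, Parseval's identity in the orthonormal basis $(\psi_\alp)_{\alp,\, \alpha_i \geq 1}$ yields $\nu_i = \left\Vert g_i \right\Vert^2 = \sum_{\alp,\, \alpha_i \geq 1} \innprod{g_i}{\psi_\alp}^2 = \sum_{\alp,\, \alpha_i \geq 1} \eig{i}\, c_\alp^2$, which is exactly \eqref{eq:DGSM_Poincare}.

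I do not expect a genuine obstacle here: the argument is a direct combination of \cref{prop:PoincDerOrtho} and the weak form \eqref{eq:PoincareWeakForm}. The only point requiring a little care is to work with $g_i \in L^2(\mu)$ directly rather than trying to differentiate the series $f = \sum_\alp c_\alp \Phal$ term by term; the latter route would require first justifying that the differentiated series converges in $L^2(\mu)$, i.e.\ that $\sum_{\alp,\, \alpha_i \geq 1} \eig{i}\, c_\alp^2 < \infty$, whereas the Parseval approach makes this finiteness an automatic consequence of $f \in H^1(\mu)$. One should also state explicitly that the sum $\sum_{\alp,\, \alpha_i \geq 1} \eig{i} c_\alp^2$ is understood in $[0, +\infty]$ if one does not assume $H^1$, but under the stated hypothesis $f \in H^1(\mu)$ it is finite and equals $\nu_i$.
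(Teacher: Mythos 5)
Your proposal is correct and follows essentially the same route as the paper: both arguments rest on the orthonormal derivative basis of \cref{prop:PoincDerOrtho} together with Parseval's identity. The only difference is that you expand $\partial f/\partial x_i$ directly in that basis and compute its coefficients via the weak form \eqref{eq:PoincareWeakForm}, which makes rigorous the term-by-term differentiation of the series that the paper asserts without further comment --- a welcome refinement rather than a different argument.
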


\begin{proof}
	Write $f = \sum_{\alp} c_\alp \Phal$. Then by \cref{prop:PoincDerOrtho}, we get
	\begin{equation*}
	\frac{\partial f}{\partial x_i} = \sum_{\alp, \, \alpha_i \geq 1} c_\alp \frac{\partial \Phal}{\partial x_i},
	\end{equation*}
	where we can constrain the sum to multi-indices $\alp$ such that $\alpha_i \geq 1$, since ${\varphi_{i, \alpha_i} = 1}$ for $\alpha_i = 0$.
	Now, using again the orthogonality of Poincar\'e basis derivatives (\cref{prop:PoincDerOrtho}), it follows that
	\begin{equation*}
	\nu_i = \left\Vert \frac{\partial f}{\partial x_i} \right\Vert ^2
	= \sum_{\alp, \, \alpha_i \geq 1} (c_\alp)^2 \left\Vert \frac{\partial \Phal}{\partial x_i} \right\Vert^2
	= \sum_{\alp, \, \alpha_i \geq 1} \lambda_{i, \alpha_i} \left(c_\alp \right)^2.
	\end{equation*}
	
\end{proof}

Formula \eqref{eq:DGSM_Poincare} extends a previous result given by Sudret and Mai~\cite{sudmai15} when all the $\mu_i$ are standard Gaussian. 
Indeed, in that case, Poincar\'e chaos coincides with polynomial chaos, and $\lambda_{i, \alpha_i} = \alpha_i$.

Using the expressions provided in \eqref{eq:VarTotWithChaos} and \eqref{eq:DGSM_Poincare} and an inequality derived by Sobol' and Kucherenko~\citet{Sobol2009} and Lamboni et al.~\citet{Lamboni_et_al_2013}, we obtain lower and upper bounds to total partial variances as follows:
\begin{equation}
\sum_{\alp \in \curlyA, \alpha_i \geq 1} \left(c_\alp \right)^2
\ \leq \ D_i^{\rm{tot}} 
\ \leq \ C_P(\mu_i) \nu_i 
\ = \sum_{\alp \in \Nn^d, \, \alpha_i \geq 1} \frac{\lambda_{i, \alpha_i}}{\lambda_{i, 1}} \left(c_\alp \right)^2,
\label{eq:DGSM_poincare_upper_bound}
\end{equation}
where $\curlyA \subset \Nn^d$ is the subset of multi-indices included in the truncated expansion (see \cref{sec:truncation}).
The lower bound is an obvious consequence of the truncation. 
The upper bound holds only for the full infinite expansion and is otherwise underestimated.
Comparing the form of the right-hand side of \cref{eq:DGSM_poincare_upper_bound} with the total Sobol' formula \cref{eq:VarTotWithChaos} gives insight into how tight this upper bound is: equality is attained only if the Poincar\'e chaos expansion does not contain terms of higher degree than $1$ for $X_i$ (then, $\frac{\lambda_{i, \alpha_i}}{\lambda_{i, 1}}=1$).
Else, depending on the decay behavior of $c_\alp$
the gap can be significant, since the eigenvalues 
are diverging to infinity
(see \cref{thm:1DPoincare}).

\section{Computation of sparse Poincar\'e expansions}
\label{sec:computation}

Let $f \in H^1(\mu, E)$ be a computational model defined on the input space $E \subset \Rr^d$, with independent input random variables and with the input probability measure $\mu$ admitting a probability density function $\rho$ fulfilling \cref{assum:measure} for each of the marginals.
In the remainder of this paper, we assume that $\rho$ is known.
We also assume that we are provided with an i.i.d.\ sample from the input distribution and with the corresponding model evaluations and model gradient values at each of the points.

With \textit{Poincar\'e expansion} (PoinCE) we denote the expansion of the computational model onto the Poincar\'e basis  
\begin{equation}
f(\ve x) = \sum_{\alp} c_\alp \Phal(\ve x),
\label{eq:Poincare_expansion}
\end{equation}
and with \textit{Poincar\'e derivative expansion in direction $i$} (PoinCE-der-$i$) the expression
\begin{equation}
\frac{\partial f}{\partial x_i}(\ve x) = \sum_{\alp, \alpha_i \geq 1} {c}^{\, \partial, i}_\alp \ \frac{\partial \Phal}{\partial x_i}(\ve x),
\label{eq:Poincare_deriv_expansion}
\end{equation}
or the equivalent expansion using normalized basis derivatives that have unit norm in $L^2(\mu)$.
Note that \eqref{eq:Poincare_deriv_expansion} is the partial derivative of \eqref{eq:Poincare_expansion} w.r.t.\ variable $X_i$. Because the zeroth order basis function of a Poincar\'e basis is the constant function, basis terms for which $\alpha_i = 0$ have zero partial derivative w.r.t.\ $X_i$ and are not included in \eqref{eq:Poincare_deriv_expansion}.
While in theory by Equation \eqref{eq:ChaosCoefWithDer}, the two expressions \eqref{eq:Poincare_expansion} and \eqref{eq:Poincare_deriv_expansion} provide identical coefficients for corresponding basis elements, i.e., $c_\alp = {c}^{\, \partial, i}_\alp$ for $\alp \in \{\alp' \in \curlyA: \alpha_i' \geq 1\}$, in practice they will not coincide when estimated from a data set of finite size.
This will be investigated in \cref{sec:numerical} for a number of numerical examples.

In this section, we describe how such expansions are computed in practice: this concerns the computation of the Poincar\'e basis functions, the choice of truncation, the location of the sampled points, and the method for computing the coefficients.
The implementation relies on and integrates into the UQLab framework \citep{MarelliUQLab2014}.

\subsection{Implementation of Poincar\'e basis functions}
\label{sec:implementation}
As described in \cref{sec:PoinCE}, Poincar\'e basis functions are tensor products of univariate Poincar\'e basis functions. Each 1D basis consists of the eigenfunctions of the Poincar\'e differential operator associated with the respective marginal distribution (\cref{thm:1DPoincare}).

A Poincar\'e basis is guaranteed to exist for marginal distributions fulfilling \cref{assum:measure} and for the Gaussian distribution. 
Other distributions have to be transformed or truncated to allow for a Poincar\'e basis. Since an isoprobabilistic transformation to standard variables can be highly nonlinear \citep{TorreJCP2019,Oladyshkin2012},
we opt for truncation:
if the distribution is not Gaussian and has (one- or two-sided) unbounded support, we truncate it to its $\num{e-6}$- and $(1-\num{e-6})$-quantiles, respectively.%

\begin{rem}[Truncation]
	One might argue that this can distort the results obtained with PoinCE, especially in the tails. It is true that this truncation introduces a small error. However, as all such methods, PoinCE by design approximates accurately mainly the bulk, not the tails (for this, specialized techniques like subset simulation shall be used). Furthermore, in practical applications it is a modelling choice how to represent the input distribution. Choosing an unbounded parametric distribution is common, but not necessarily the most sensible choice, since for virtually every quantity in the real world there is an upper bound that cannot be exceeded.
\end{rem}

We consider here only standard parametric families of probability densities (bounded and unbounded), although a Poincar\'e basis can be computed for any input distribution which after truncation fulfills \cref{assum:measure}. In particular, without any changes to the methodology PoinCE could be used in a data-driven framework \citep{TorreJCP2019} by computing the Poincar\'e basis for a dimensionwise kernel density estimate of the input distribution  (assuming independence) given the available data.

As can be seen from applying the change-of-variables formula for a linear transformation to \eqref{eq:PoincareWeakForm1D}, the eigenvalues of the Poincar\'e differential operator scale with the inverse of the squared support interval length. To avoid numerical difficulties, we therefore linearly transform (i.e., shift and rescale) parametric families to standard parameters using
\begin{itemize}[itemsep=0pt, parsep=0pt,topsep=0pt]
	\item their bounds in the case of uniform, beta, triangular;
	\item their location and scale parameter in the case of Gaussian, Gumbel, Gumbel-min, Laplace, logistic;
	\item their (inverse) scale parameter in the case of exponential, gamma, Weibull, lognormal.
\end{itemize}
In the current implementation, distributions not belonging to this group of families are not being rescaled. 

For standard uniform ($\cu([-0.5, 0.5]$) and standard Gaussian ($\cn(0,1)$) marginals, the Poincar\'e basis can be analytically computed and is given by the Fourier (cosine) basis and the Hermite polynomial basis, respectively \citep{rougam20}.
Therefore, in the special case of uniform or Gaussian marginals, we always (after rescaling) use the analytical solution.

For all other marginals, the Poincar\'e basis is computed numerically using linear finite elements.
We use a fine uniform grid within the bounds and piecewise linear functions with local support, commonly called `hat' functions.
Using the weak formulation of the eigenvalue problem of the Poincar\'e differential operator given in \eqref{eq:PoincareWeakForm1D}, we arrive at the shifted generalized eigenvalue problem
\begin{equation}
\ve K \ve a^{(n)} = (\lambda_n+1) \ve M \ve a^{(n)}
\end{equation}
as described in Roustant et al.~\citet[section 4.3]{roubar17}, where the eigenvector $\ve a^{(n)}$ denotes the vector of coefficients used to express eigenfunction $\varphi_n$ in terms of `hat' functions. Here $\ve M$ is the mass matrix, and $\ve K$ is the sum of mass- and stiffness matrix. 
After solving this problem using Matlab's builtin function \texttt{eigs}, we interpolate the discrete eigenvectors with piecewise cubic splines,
prescribing zero derivatives at the interval boundaries. Then, the basis derivatives are computed using centered finite differences.
While more sophisticated techniques (e.g., Hermitian $C^1$ elements, or Haar wavelets \citep{Bujurke2008}) could of course be used to improve this numerical computation procedure, it is accurate enough for our purposes of demonstrating the usefulness of PoinCE.
Eigenfunctions and eigenfunction derivatives are scaled to have unit norm with respect to the measure $\mu_i$.

\subsection{Choice of the basis truncation}
\label{sec:truncation}

In practice, the series in \eqref{eq:Poincare_expansion} and \eqref{eq:Poincare_deriv_expansion} cannot include an infinite number of terms, but must be truncated to a finite expansion. We denote by $\curlyA \subset \Nn^d$ the subset of multi-indices that are included in the expansion. For PCE, $\curlyA$ is typically chosen to include terms up to a certain degree $p$, resulting in the so-called \textit{total degree basis}
\begin{equation}
\curlyA^p = \{\alp \in \Nn^d: \sum_{i=1}^d |\alpha_i| \leq p\}
\end{equation}
containing $P = \binom{p+d}{d}$ polynomials \citep{SudretREEF2006,sud08}.
To further restrict the number of terms used in the expansion, another common truncation method is \textit{hyperbolic truncation} \citep{blasud11} 
\begin{equation}
\curlyA^{p,q} = \{\alp \in \Nn^d: \norme{\alp}{q} \leq p\}.
\end{equation}
with the $\ell^q$-(quasi-)norm $\norme{\alp}{q} = \left( \sum_{i=1}^d \alpha_i^q \right)^\frac{1}{q}$ for $q\in (0,1]$.

Since the Poincar\'e basis is in general not polynomial, the concept of polynomial degree cannot be used to characterize the basis functions. Instead, we use the natural order of the basis functions corresponding to the increasing sequence of Poincar\'e eigenvalues, which also corresponds to an increasing number of oscillations (\cref{thm:1DPoincare}; recall that the $n$th eigenfunction has $n$ zeros). 
Therefore, we use the PCE terminology ``degree'' also for PoinCE. In particular, a degree of $\alpha_i = 0$ denotes the constant basis function $\varphi_{i, 0}(x_i) = 1$ associated to the eigenvalue $\lambda_{i, 0} = 0$.

Often, in practice it is not known which degree is needed for a given problem.
While in theory the expansion is more accurate the larger the total degree is, in practice accuracy is limited by the number of available sample points, since the quality of the regression solution (see \cref{sec:sparse_regression}) depends on the ratio of sample points to basis elements.
In that case, a successful strategy consists of applying \textit{degree adaptivity}, i.e., choosing the best degree for the expansion by cross-validation \citep{blasud11, LuethenIJUQ2022}.
This procedure is computationally inexpensive, since it only requires a new surrogate model fit for each new total degree, but no additional model evaluations.
We apply leave-one-out (LOO) cross-validation together with a modification factor introduced by Chapelle et al.~\citet{Chapelle2002, blasud11}.

Both hyperbolic truncation and degree adaptivity contribute to the sparsity of the resulting expansion by identifying a suitable subset of basis functions necessary for a good approximation.
Sparsity is a successful concept in regression-based PCE \citep{LuethenSIAMJUQ2020}.
Denote by $P = |\curlyA|$ the number of basis elements in the truncated expansion.
$\curlyA$, also called \textit{candidate basis}, contains the basis elements available for approximation.
We describe below how sparse regression further selects only a subset of $\curlyA$ to be \textit{active}, i.e., have a nonzero coefficient. The final expansion might (and indeed often will) have less than $P$ active terms.

\subsection{Computation of the coefficients by sparse regression}
\label{sec:sparse_regression}
For computing the coefficients of an orthogonal expansion as in \eqref{eq:Poincare_expansion} and \eqref{eq:Poincare_deriv_expansion}, there exist two main approaches. 
One is \textit{projection}: the model $f$ is projected onto the basis functions, see \eqref{eq:chaos_expansion}. 
The resulting integral may be evaluated by Monte Carlo (MC) simulation, as done by Roustant et al.~\cite{rougam20} for Poincar\'e chaos, or by (sparse) quadrature methods \citep{LeMaitre2002, Matthies2005, Constantine2012}. However, note that in general MC converges slowly, while quadrature (even when sparse) is affected by the curse of dimensionality.

The second approach is \textit{regression}, introduced for PCE by Blatman and Sudret~\cite{BlatmanCras2008}. Here, after choosing an \textit{experimental design} (ED) $\cx = \{\ve x^{(1)}, \ldots , \ve x^{(N)}\}$ of input points, \eqref{eq:Poincare_expansion} is discretized as
\begin{equation}
\ve y \approx \ve\Psi \ve c
\end{equation}
where $\ve y = (f(\ve x^{(1)}), \ldots, f(\ve x^{(N)}))^T$ is the vector of model evaluations, $\ve\Psi \in \Rr^{N\times P}$ is the regression matrix with entries
$
\Psi_{kj} = \Phi_{j}(\ve x^{(k)})
$
where $j$ refers to an enumeration of the multivariate basis $(\Phal)_{\alp\in\curlyA}$, and $\ve c$ is the vector of expansion coefficients. 
The discretization of \eqref{eq:Poincare_deriv_expansion} is analogous, with a vector 
\begin{equation}
\ve y_{\partial, i} = \left( \frac{\partial f}{\partial x_i}(\ve x^{(1)}) \enum  \frac{\partial f}{\partial x_i}(\ve x^{(N)}) \right)^T
\label{eq:regression_deriv_rhs}
\end{equation}
containing model partial derivatives and a regression matrix $\ve\Psi_{\partial,i}$ with entries 
\begin{equation}
\Psi^{\partial,i}_{kj} = \frac{1}{\sqrt{\lambda_{i, \alp_j(i)}}}\frac{\partial \Phi_{j}}{\partial x_i} (\ve x^{(k)}),
\label{eq:regression_deriv_regrmatrix}
\end{equation}
where $\alp_j(i)$ denotes the $i$th component of the $j$th basis element characterized by the multi-index $\alp_j$ (see also \cref{prop:PoincDerOrtho}).

The regression problem can be solved by ordinary least squares as
\begin{equation}
\label{eq:OLS}
\hat{\ve c} = \arg\min_{\ve c} \norme{\ve\Psi \ve c - \ve y}{2}^2,
\end{equation}
provided that enough model evaluations are available -- at least $N \geq P$, or better $N \geq kP$ with $k = 2,3$ to avoid overfitting. 
Due to the rapid growth of the total-degree basis with increasing dimension and degree, this requirement on model evaluations is often too restrictive for real-world problems. 

To avoid this problem, sparse regression can be used, which regularizes the problem by encouraging solutions with few nonzero coefficients \citep{Candes2008a, Kougioumtzoglou2020}. An example is $\ell^1$-minimization:
\begin{equation}
\label{eq:ex_l1min}
\hat{\ve c} = \arg\min_{\ve c} \norme{\ve\Psi \ve c - \ve y}{2}^2 + \lambda \norme{\ve c}{1}.
\end{equation}
The $\ell^1$-norm penalizes the coefficient vector so that sparse solutions are preferred.
The sparse regression formulation 
allows for accurate solutions even in the case $N < P$. There exist many sparse regression methods utilizing different formulations of the sparse regression problem, see e.g.\ L\"uthen et al.~\cite{LuethenSIAMJUQ2020} for an overview of available sparse regression solvers in the context of PCE.
In this work, we use the sparse solver Hybrid Least Angle Regression (Hybrid-LARS) \citep{Efron2004, blasud11} in the implementation of UQLab \citep{MarelliUQLab2014,UQdocPCE}.

A result by Cand\`es and Plan~\cite{Candes2011} on sparse recovery emphasizes the importance of \textit{isotropy} of the row distribution of the regression matrix, i.e., the requirement that for a row $\ve a = (\Phi_{\alp_1}(\ve x), \ldots, \Phi_{\alp_P}(\ve x))$ of the regression matrix $\ve\Psi$ it holds that $\Esp{\ve a^T \ve a} = I_P$, where $I_P$ is the identity matrix of size $P$, and the expectation is with respect to the distribution of the experimental design points. If the experimental design points are chosen to follow the input distribution, the distributions of regression matrix rows for Poincar\'e as well as for normalized Poincar\'e derivative expansions are isotropic by construction due to orthonormality of the bases w.r.t.\ the input distribution. 
To improve the space-filling property of the experimental design, we use Latin Hypercube Sampling (LHS) \citep{McKay1979} with maximin distance optimization.

\subsection{Coefficients and Sobol' indices for Poincar\'e derivative expansions}
\label{sec:PDOder_coeffs_sobol}
Let $\hat {\ve c}^{\, \partial, i}$ be the solution to the sparse regression problem corresponding to the $i$-th Poincar\'e derivative expansion (PoinCE-der-$i$) \eqref{eq:Poincare_deriv_expansion} with regression matrix \eqref{eq:regression_deriv_regrmatrix} and data vector \eqref{eq:regression_deriv_rhs}.%
\footnote{Note that in practice, we normalize and rescale the regression matrix as described in \cref{sec:implementation} to improve the estimation of the coefficients.}
By construction, this expansion only provides coefficients corresponding to the basis elements from the set $\curlyA^i := \{\alp \in \curlyA: \alpha_i \geq 1\}$, since the partial derivatives w.r.t.\ $X_i$ of the basis elements $\{\alp \in \curlyA: \alpha_i = 0\}$ are zero and therefore no coefficient value can be computed for those elements.
Theoretically, for $\alp \in \curlyA^i$ the coefficient ${c}_{\alp}^{\, \partial, i}$ from \eqref{eq:Poincare_deriv_expansion} is equal to the PoinCE solution $c_\alp$ from \eqref{eq:Poincare_expansion}, however when estimated from a data set of finite size they will in general not coincide.

The coefficients from the set $\curlyA^i$ are sufficient for computing partial variances for variable $i$ as in \eqref{eq:VarTotWithChaos} and \eqref{eq:VarFirstWithChaos}, 
but not enough for computing the total variance \eqref{eq:VarWithChaos}, which requires all coefficients $c_\alp, \alp \in \curlyA$, and which is needed for normalizing the partial variances to Sobol' indices. 

To compute the total variance from PoinCE-der expansions, we therefore aggregate the coefficients of all $d$ PoinCE-der-$i$ expansions into one vector $\hat{\ve c}^{\, \partial, \text{avg}}$ as follows:
\begin{equation}
\hat{c}_\alp^{\, \partial, \text{avg}} = \frac{1}{\vert\{i:\alpha_i \geq 1\}\vert}\sum_{i: \alpha_i \geq 1} \hat{c}_\alp^{\, \partial, i} \quad \text{ for each } \alp \in \curlyA \setminus \{\ve{0}\},
\end{equation}
i.e., every PoinCE-der-$i$ expansion which computed a coefficient value for the basis element with index $\alp$ contributes equally to the averaged value.
It follows that in theory, the averaged coefficient $\hat {c}_{\alp}^{\, \partial, \text{avg}}$ is equal to the PoinCE solution $c_\alp$ from \eqref{eq:Poincare_expansion}, too.
It can therefore be used to estimate the total variance according to \eqref{eq:VarWithChaos}.

The averaging procedure yields PoinCE-der estimates for all coefficients except for the coefficient $c_{\ve 0}$ corresponding to the constant term $\Phi_{\ve 0}$.
Let $\hat{\ve{c}}_{\alp}^{\, \partial, \text{avg}} = (\hat{c}_\alp^{\, \partial, \text{avg}})_{\alp \in \curlyA \setminus \ve{0}}$ be in the form of a column vector in $\Rr^{(P-1) \times 1}$.
In order to use the averaged PoinCE-der expansion also as a surrogate model, we estimate the remaining coefficient $\hat{c}_{\ve{0}}^{\, \partial, \text{avg}}$ corresponding to the constant term by ordinary least-squares on the residual $\ve{y}_\text{res}$:
	\begin{align*}
	\ve{y}_\text{res} &= \ve{y} - \ve\Psi \begin{pmatrix} 0 \\ \hat{\ve{c}}_{\alp}^{\, \partial, \text{avg}} \end{pmatrix}, \\
	\hat{c}_{\ve{0}}^{\, \partial, \text{avg}} &= \frac{1}{N} \sum_{k=1}^{N} \ve{y}_\text{res}^{(k)}
	\end{align*}

Note that the described construction uses model evaluations and partial derivatives separately. An obvious question is whether one could use these simultaneously to compute an estimate for the coefficients. Although tempting, the simple stacking of regression matrices $\ve\Psi$ and $\ve \Psi_{\partial, i}$ into a big regression matrix (as done by Peng et al.~\citet{penham16} for Hermite PCE) is not satisfactory, since the increasing norm of the basis partial derivatives (see \cref{prop:PoincDerOrtho}) introduces an undesired weighting into the problem. The simultaneous use of evaluation and derivative data is a topic of further research.

\section{Numerical results} \label{sec:numerical}

We investigate the performance of PoinCE (both based on model evaluations and on derivatives) on two numerical examples. The focus of our study is on Sobol' sensitivity analysis, but we also investigate DGSM-based upper bounds to partial variances and validation error (relative mean-squared error). 
Our implementation is based on UQLab \citep{MarelliUQLab2014} and integrates into its PCE module \citep{UQdocPCE}.

We use the following estimation techniques to compute the Sobol' indices of the models:
\begin{itemize}
	\item PoinCE-LARS / PoinCE-der-LARS: Poincar\'e expansion and Poincar\'e derivative expansion computed by LARS as proposed in \cref{sec:sparse_regression}
	\item PoinCE-MC / PoinCE-der-MC: As a baseline, we compare to MC-based computation using the Poincar\'e basis/the Poincar\'e partial derivative basis as in Roustant et al.~\citet{rougam20}
	\item PCE-LARS: As a second baseline, we compare to PCE computed by LARS (with generalized polynomial chaos adapted to the respective input) \citep{blasud11, UQdocPCE}. Sparse PCE is a state-of-the-art method for computing Sobol' indices for real-world models \citep{LeGratiet2017}.
\end{itemize}
We do not compare to any sample-based estimates of Sobol' indices, since it is known that ANOVA-based estimation outperforms sample-based estimation.
For example, Sudret~\citet{sud08} and Crestaux et al.~\citet{cremar08} have shown that polynomial chaos-based estimators of Sobol' indices are much more efficient  than Monte Carlo or quasi-Monte Carlo-based estimators (for smooth models and dimensions up to $20$).
Recently, Becker~\citet{Becker2020} has shown that certain sample-based approaches can be more efficient than metamodel-based ones for screening with total Sobol' indices. However, the screening performance metrics of Becker~\citet{Becker2020} are only based on input ranking. In contrary, our practical purpose is to perform a so-called quantitative screening which aims at providing a correct screening and a good estimation of Sobol' indices.

We do not include a comparison to gradient-enhanced PCE \citep{penham16, Guo2018} because so far these methods are developed only for Gaussian, uniform and Beta input and are not immediately usable for other input distributions. Furthermore, the code of the relatively involved sampling- and preconditioning approach is not readily available. The development and comparison of gradient-enhanced PoinCE to gradient-enhanced PCE is a topic of future research.

Partial variances are normalized to Sobol' indices using the total variance. For PCE-LARS and PoinCE-LARS, the total variance is computed from the expansion coefficients as in \eqref{eq:VarWithChaos}.
For PoinCE-MC and PoinCE-der-MC, we use the sample variance as done by Roustant et al.~\cite{rougam20}.
For PoinCE-der-LARS, the total variance is obtained by the procedure detailed in \cref{sec:PDOder_coeffs_sobol}.

The DGSM-based upper bound to the total partial variances is computed from \eqref{eq:DGSM_poincare_upper_bound} using the coefficients of the PoinCE derivative expansions as described in \cref{sec:PDOder_coeffs_sobol}.
Note that the inequalities in \eqref{eq:DGSM_poincare_upper_bound} are analytical bounds that do not necessarily hold for the estimated quantities.

For uniform and Gaussian input variables, the analytical expression for the Poincar\'e basis functions is used, while for all others, the basis functions are computed numerically using a resolution of $10^3$ points for the uniform grid within the given bounds (see \cref{sec:implementation}).%
\footnote{For the flood model, the change in the resulting Sobol' indices when instead using a grid with $10^2$ or $10^4$ points is in the order of $10^{-4}$ or $10^{-6}$, respectively.}

\subsection{Dyke cost model}
\label{sec:dyke_cost}
Our first application is a simplified analytical model computing the cost associated to a dyke that is to be constructed along a stretch of river to prevent flooding \citep{ioolem15,rougam20}. Its output is the cost in million euros given by
\begin{equation}
Y = \mathbb{1}_{S > 0} + \left[ 0.2 + 0.8 \left( 1 - \exp^{-\frac{1000}{S^4}} \right) \right] \cdot \mathbb{1}_{S \leq 0} + \frac{1}{20} \left( 8 \cdot \mathbb{1}_{H_d \leq 8} + H_d \cdot \mathbb{1}_{H_d > 8} \right)
\label{eq:flood_model}
\end{equation}
where $S$ is the maximal annual overflow and $H_d$ is the dyke height. Here, the first term represents the cost of the consequences of a flooding event, the second describes the maintenance costs, and the third is associated to the construction cost. $S$ is computed from the river characteristics detailed in \cref{tab:flood_params} via the 1D Saint-Venant equations under several simplifying assumptions as follows:
\begin{equation}
S = \left( \frac{Q}{B K_s \sqrt{\frac{Z_m - Z_v}{L}}}\right)^{\! 0.6} + Z_v - H_d - C_b
\label{eq:flood_model_overflow}
\end{equation}
The model \cref{eq:flood_model} is continuous and piecewise $C^1$, and therefore in $H^1$.
It has 8 input variables, of which $Q, K_s, Z_v$ and $H_d$ are important, and $C_b, Z_m, L$ and $B$ are unimportant (see also the last two columns of \cref{tab:flood_params}). 

\begin{table}[htbp]
	\caption{Input variables to the dyke cost model \citep{ioolem15,rougam20} and reference values of first-order and total Sobol' indices. The reference values were obtained to good precision by Monte-Carlo-based Sobol' index estimation using a large sample \citep{rougam20}.}
	\label{tab:flood_params}
	\centering
	\begin{tabular}{llcp{5cm}ll}
		\hline
		Input & Function & Unit & Distribution & $S_i$ & $S_i^\text{tot}$ \\
		\hline
		$Q$ & Maximal annual flowrate & $\si{m^3/s}$ & 
		Gumbel $\mathcal{G}(1013,558)$ \newline truncated to $[500, 3000]$ 
		& $0.358$ & $0.483$ \\
		$K_s$ & Strickler coefficient & $-$ & 
		Gaussian $\cn(30, 8^2)$ \newline truncated to $[15, +\infty]$ 
		& $0.156$ & $0.252$ \\
		$Z_v$ & River downstream level & $\si{m}$ &
		Triangular $\mathcal{T}(49, 51)$
		& $0.167$ & $0.223$ \\
		$Z_m$ & River upstream level & $\si{m}$ &
		Triangular $\mathcal{T}(54,56)$ 
		& $0.003$ & $0.008$ \\
		$H_d$ & Dyke height & $\si{m}$ &
		Uniform $\cu([7,9])$
		& $0.119$ & $0.177$ \\
		$C_b$ & Bank level & $\si{m}$ & 
		Triangular $\mathcal{T}(55,56)$
		& $0.029$ & $0.040$ \\
		$L$ & Length of river stretch & $\si{m}$ &
		Triangular $\mathcal{T}(4990,5010)$
		& $0.000$ & $0.000$ \\
		$B$ & River width & $\si{m}$ &
		Triangular $\mathcal{T}(295,305)$
		& $0.000$ & $0.000$ \\
		
		\hline
	\end{tabular}
\end{table}

The dyke cost model has been used by Roustant et al.~\citet{rougam20} to demonstrate the performance of projection-based PoinCE. We compare the new regression-based methods PoinCE-LARS and PoinCE-der-LARS with the projection-based counterparts PoinCE-MC and PoinCE-der-MC, and additionally with the standard PCE method PCE-LARS.
The projection-based estimates use a basis of total degree 2, while the regression-based estimates use degree adaptivity with a degree of up to 5 (remember that for PoinCE, the degree corresponds to the ordering of the eigenfunctions by the magnitude of the eigenvalues).
The experimental design (ED) is sampled by LHS with maximin distance optimization.
Gradients are computed here by finite differences.
For each size of the experimental design, we perform 50 independent repetitions. We display the resulting estimates in the form of boxplots.
We show results only for three input variables: the most important variable $Q$, the low-importance variable $C_b$, and the unimportant variable $B$. The results for the remaining input variables can be found in \cref{app:additional}.

\subsubsection{Comparison of MC-based and regression-based computation of PoinCE(der)}
First we investigate the two different ways to compute PoinCE: projection-based as in Roustant et al.~\cite{rougam20} versus sparse regression-based as described in \cref{sec:computation}. 
\Cref{fig:flood_Sobol_firstorder_RegVsMC,fig:flood_Sobol_total_RegVsMC} show estimates for first-order and total Sobol' indices. We observe that in all cases the regression-based estimates have a smaller variance than the corresponding projection-based estimates. Also, the median of the regression-based estimates is closer to the true Sobol' index value than the median of the projection-based estimates. 
Note that while the regression-based estimates use a degree-adaptive basis of $p \leq 5$, the projection-based estimates use a fixed degree of only $p=2$. While this choice introduces a certain bias to the projection-based estimates, a larger value for $p$ leads to unfeasibly large variance for those estimates. This is because the coefficients of higher-order terms cannot be estimated precisely with few experimental design points, which makes the overall estimate less precise.

We also observe that regression-based estimates are often clustered around the true Sobol' index already for very small experimental design sizes.
This is in agreement with the observation that sparse-regression-based coefficient estimates have generally a smaller variance compared to MC-based estimates \citep[Chapter 3.4.6]{BlatmanThesis}.
Since regression generally leads to more precise estimates than projection, in the remainder of this paper we focus on regression-based PoinCE estimates.

Furthermore, as already observed by Roustant et al.~\cite{rougam20}, PoinCE-der estimates for Sobol' indices have a smaller variance than PoinCE estimates. In the case of projection-based estimates, this is the case if the derivative has a smaller variance than the original model. In the case of regression, the explanation might be that PoinCE-$i$-der has to compute less coefficients than PoinCE for the same number of experimental design points ($\{\alp \in \curlyA: \alpha_i > 0\}$ vs.\ $\curlyA$), which can result in a more precise estimate of the true coefficient values.

\begin{figure}[htbp]
	\centering
	{\includegraphics[width=.3\textwidth]{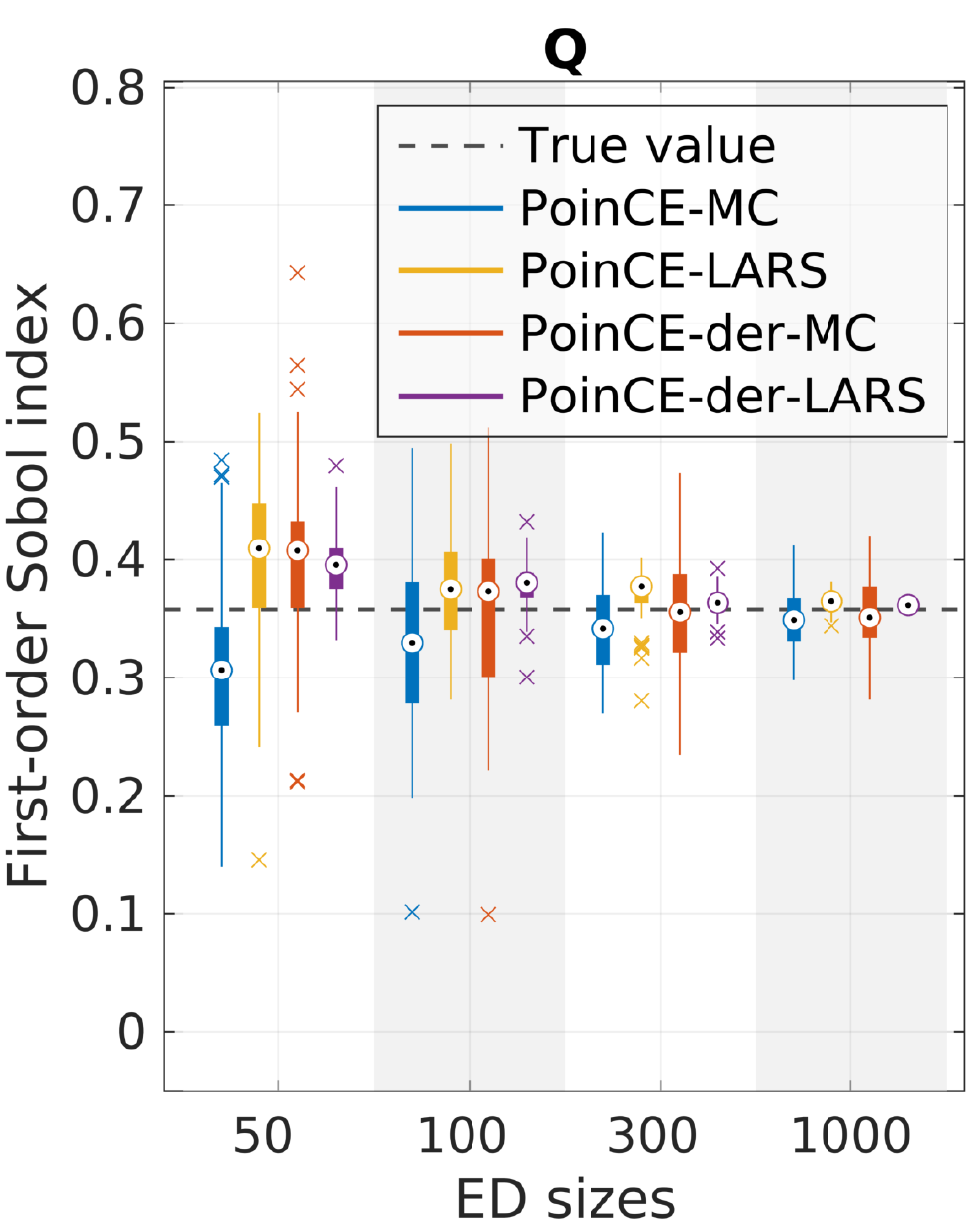}}
	\hfill
	{\includegraphics[width=.3\textwidth]{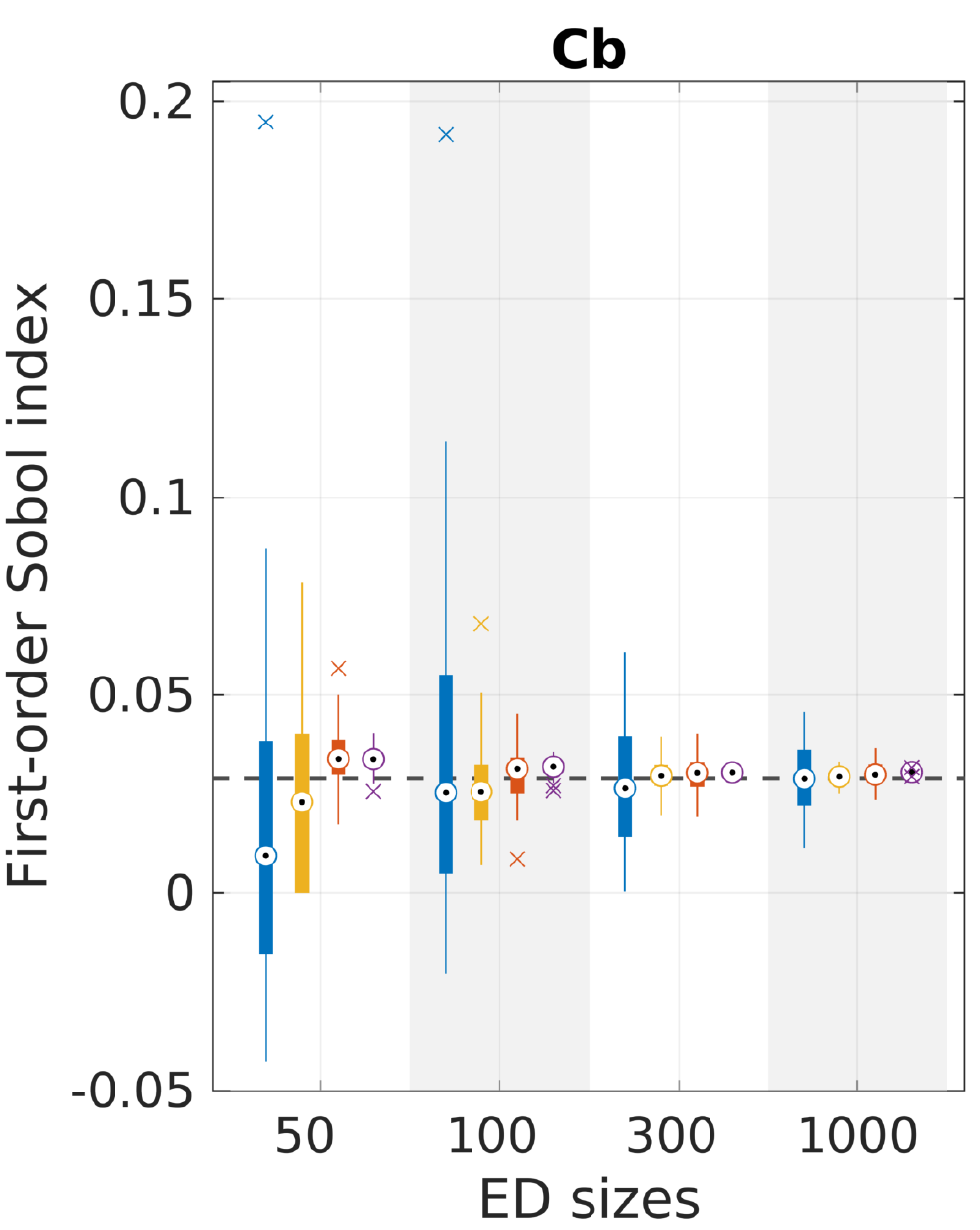}}
	\hfill
	{\includegraphics[width=.3\textwidth]{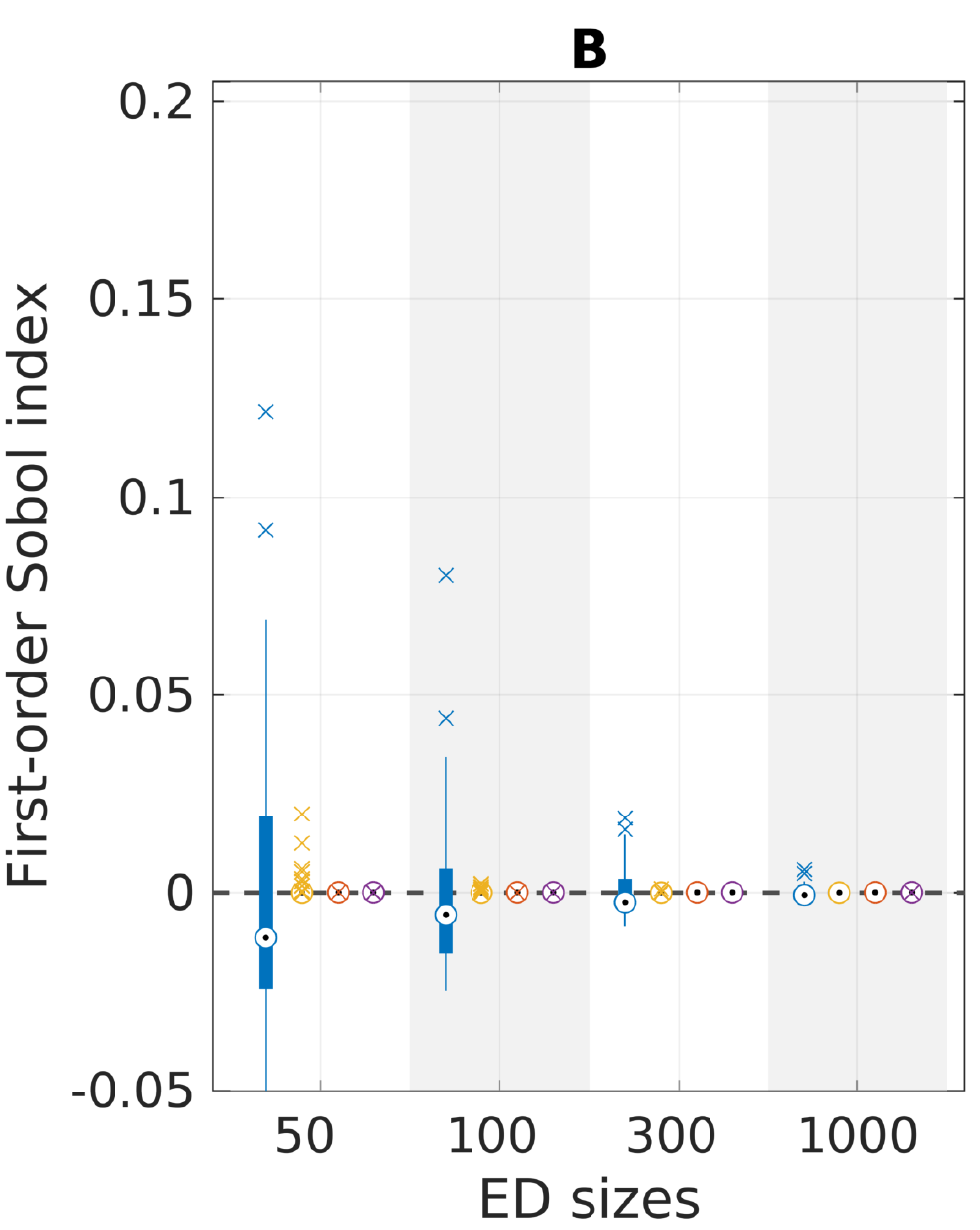}}
	\caption{Comparison of PoinCE estimates of {first-order Sobol' indices} for the dyke cost model. Degree $p = 2$ for the MC-based estimates and $p \leq 5$ (degree-adaptive) for the regression-based estimates. Results for the remaining variables are displayed in \cref{fig:flood_Sobol_firstorder_RegVsMC_appendix} in the appendix.}
	\label{fig:flood_Sobol_firstorder_RegVsMC}
\end{figure}

\begin{figure}[htbp]
	\centering
	{\includegraphics[width=.3\textwidth]{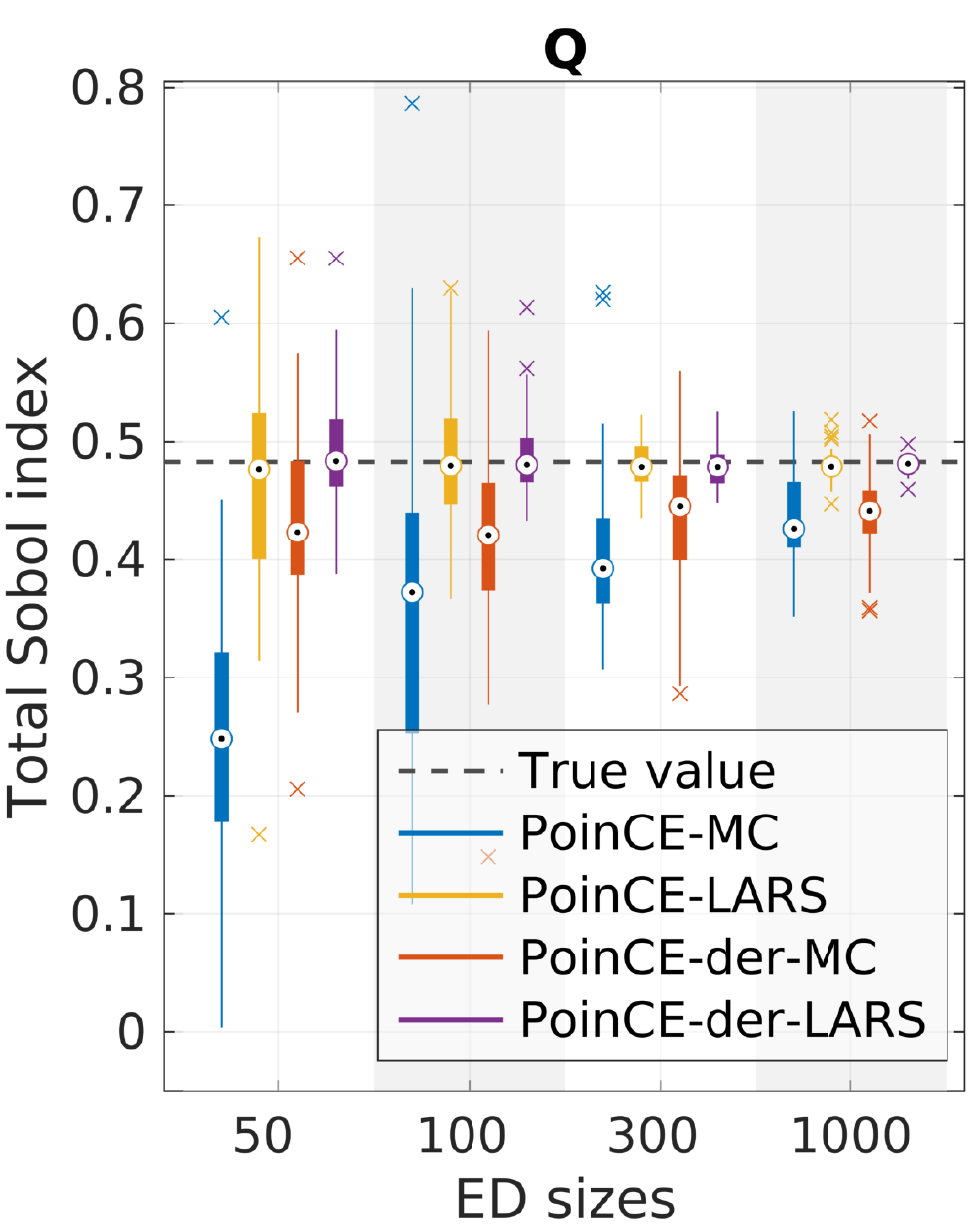}}
	\hfill
	{\includegraphics[width=.3\textwidth]{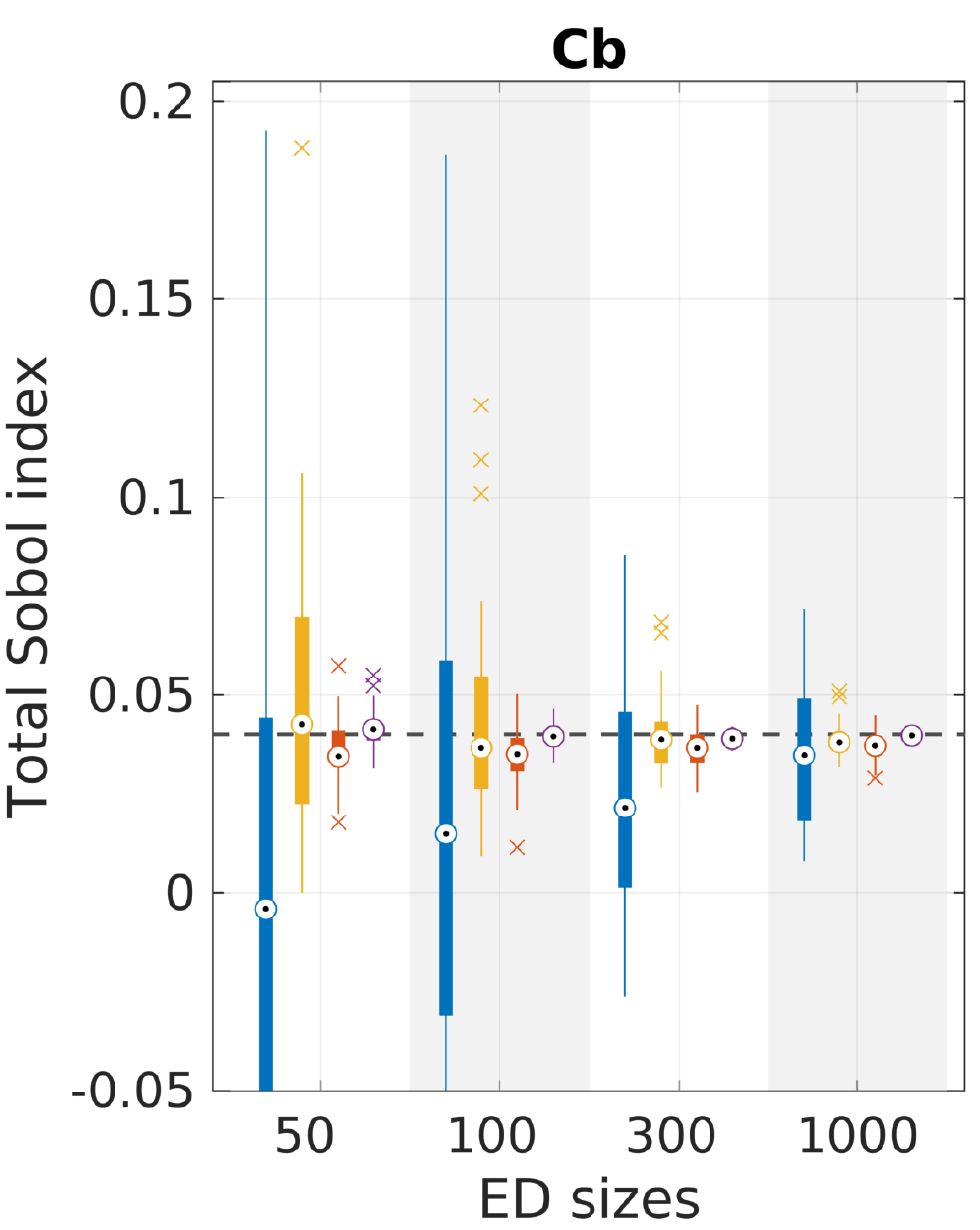}}
	\hfill
	{\includegraphics[width=.3\textwidth]{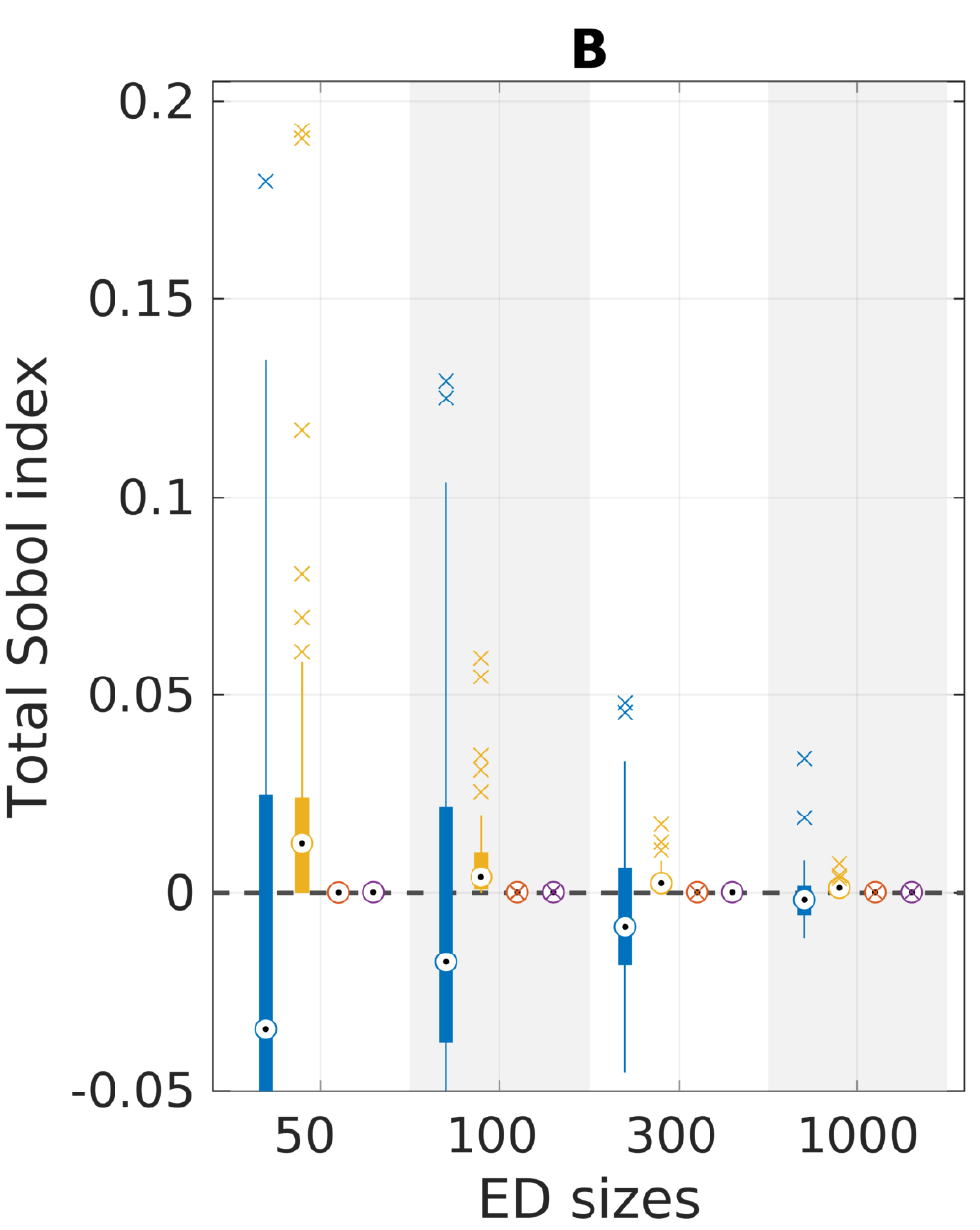}}
	\caption{Comparison of PoinCE estimates of {total Sobol' indices} for the dyke cost model. Degree $p = 2$ for the MC-based estimates and $p \leq 5$ (degree-adaptive) for the regression-based estimates. Results for the remaining variables are displayed in \cref{fig:flood_Sobol_total_RegVsMC_appendix} in the appendix.}
	\label{fig:flood_Sobol_total_RegVsMC}
\end{figure}

\subsubsection{Comparison of regression-based PoinCE-der with PCE and the DGSM-based upper bound}
Next, we investigate the performance of regression-based PoinCE compared to state-of-the-art PCE, and the usefulness of the DGSM-based upper bound to partial variances derived in \eqref{eq:DGSM_poincare_upper_bound}.
The corresponding results, unnormalized%
\footnote{We show unnormalized indices because the DGSM-based upper bound is not normalized.}
estimates for first-order and total Sobol' indices, are displayed in \cref{fig:flood_unnormalized_Sobol_firstorder,fig:flood_unnormalized_Sobol_total}.
Because PoinCE-der achieves more accurate estimates than PoinCE, we compute the DGSM-based upper bound using the PoinCE-der-$i$ coefficients.
For total Sobol' indices, we also include a precise Monte Carlo estimate for the DGSM-based upper bound (using $10^7$ derivative samples) computed from \eqref{eq:DGSM} and the second inequality of \eqref{eq:DGSM_poincare_upper_bound}.

We make the following observations:
the PCE-LARS estimates are generally very similar to the PoinCE-LARS estimates, but the latter often have a slightly larger range. The similarity might be because both rely on model evaluations only. However, the respective basis functions have a very different shape (for inputs that do not follow a Gaussian distribution).
In particular, the PoinCE basis functions by construction obey Neumann boundary conditions, i.e., have zero derivative on the boundary. 

As observed before for normalized indices, PoinCE-der performs better than PoinCE: the median is closer to the true value, and the range is smaller. This effect is especially pronounced for low-importance variables. In 8 dimensions, a PoinCE-der expansion of degree 5 has, due to derivation, $\binom{8+(5-1)}{(5-1)} = 495$ terms, while the total-degree basis of PCE and PoinCE has $\binom{8+5}{5} = 1287$ terms. This means that here, the PoinCE-der expansion has to estimate less than half of the coefficients.
PoinCE-der generally gives a tighter ``lower bound'' than PCE (but note that 
the estimates are not guaranteed to be a lower bound).

By construction \eqref{eq:DGSM_poincare_upper_bound}, the DGSM-based upper bound estimate is larger than or equal to the corresponding total Sobol' index estimate. However, it would be an upper bound to the true Sobol' index value only if the full infinite expansion was used. This is visible in \cref{fig:flood_unnormalized_Sobol_total,fig:flood_unnormalized_Sobol_total_appendix}: for some inputs, the upper bound estimate almost coincides with the Sobol' index estimate, and is smaller than the true Sobol' index.

For some inputs, such as $K_s$ and especially $H_d$ (see \cref{fig:flood_unnormalized_Sobol_total_appendix}), the DGSM-based upper bound is not tight.
From the comment following Equation \eqref{eq:DGSM_poincare_upper_bound} at the end of Section \ref{sec:theory}, this indicates that the Poincar\'e chaos expansion of the cost model contains terms of higher degree than 1 especially for $H_d$. This is explained by the difficulty of approximating a nonsmooth function with a small number of smooth basis functions. 
Indeed, although the dyke cost model admits weak derivatives everywhere as required by the theory, it is only piecewise $C^1$. In particular, there is a jump at $H_d=8$ for the partial derivative with respect to $H_d$, whereas the basis functions for $H_d$ are $C^\infty$ (cosine functions).

Finally, we see that for the variables with larger Sobol' index,
the upper bound and the Sobol indices are underestimated, with a larger negative bias for the upper bound. As remarked above, this indicates from \eqref{eq:DGSM_poincare_upper_bound}
that some higher-order terms are still missing from the considered expansion. Due to the eigenvalue factor involved in the estimate of the upper bound, this has a larger influence on the upper bound than on the Sobol' index estimate.

\begin{figure}[htbp]
	\centering
	{\includegraphics[width=.3\textwidth]
		{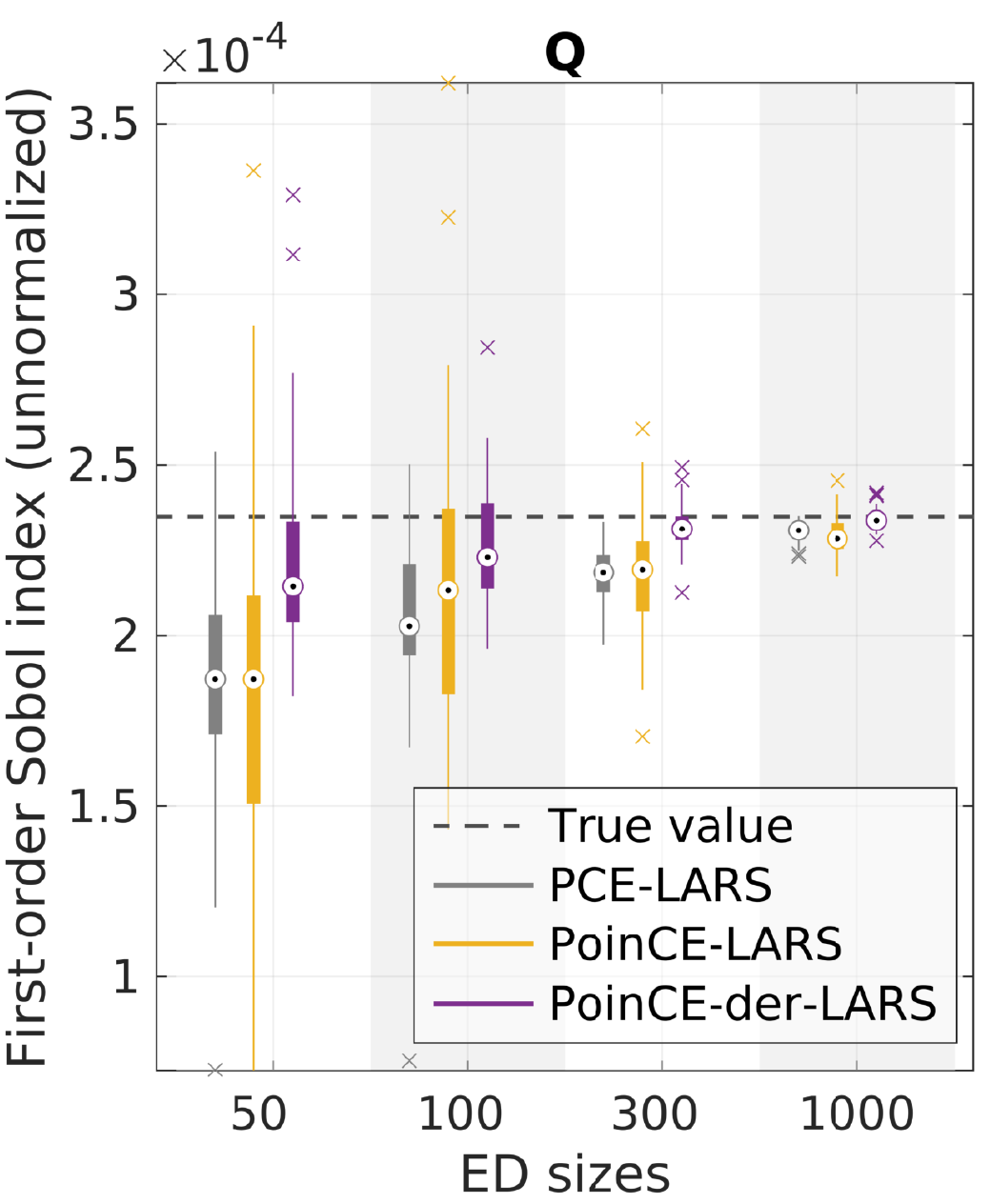}}
	\hfill
	{\includegraphics[width=.3\textwidth]
		{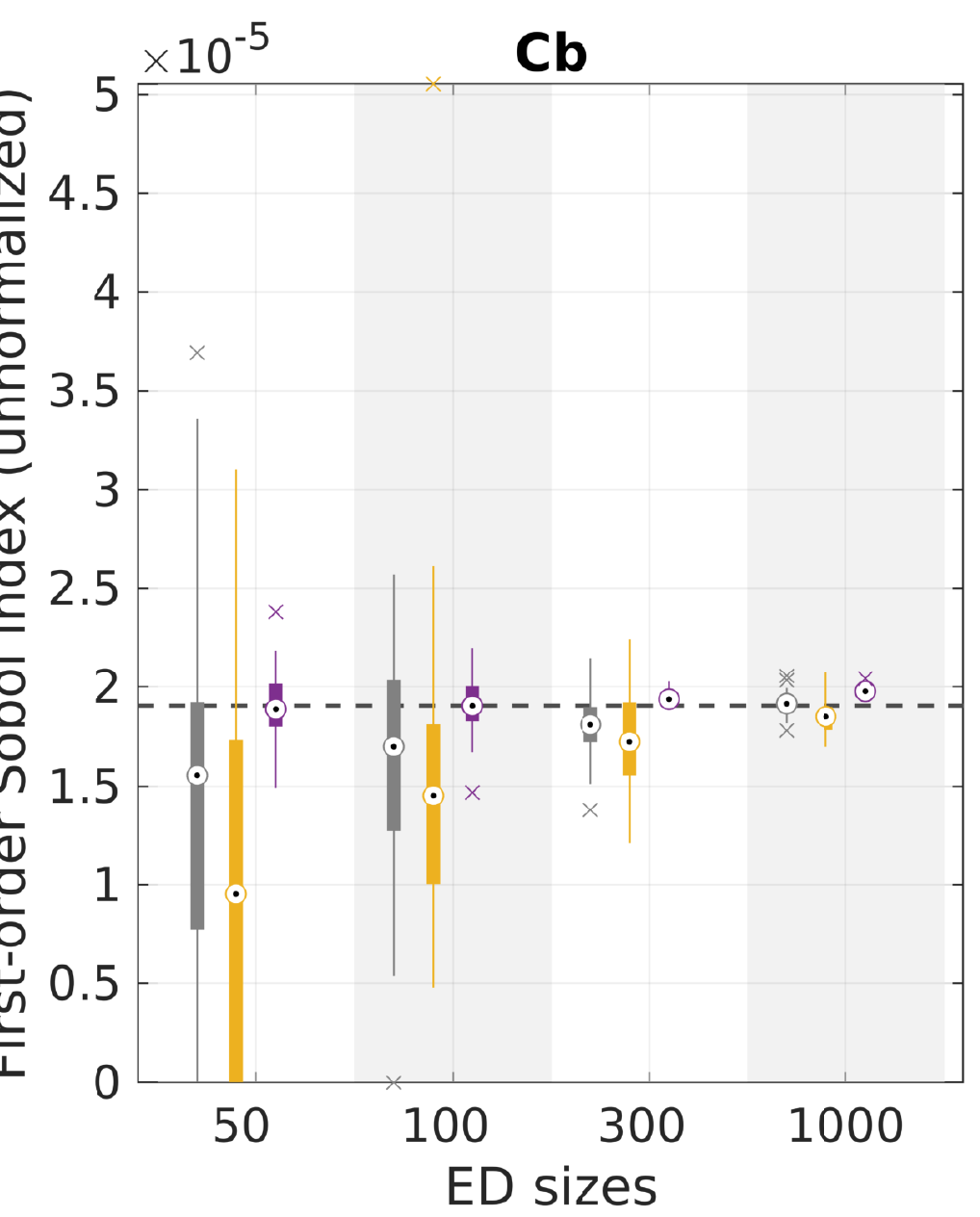}}
	\hfill
	{\includegraphics[width=.3\textwidth]
		{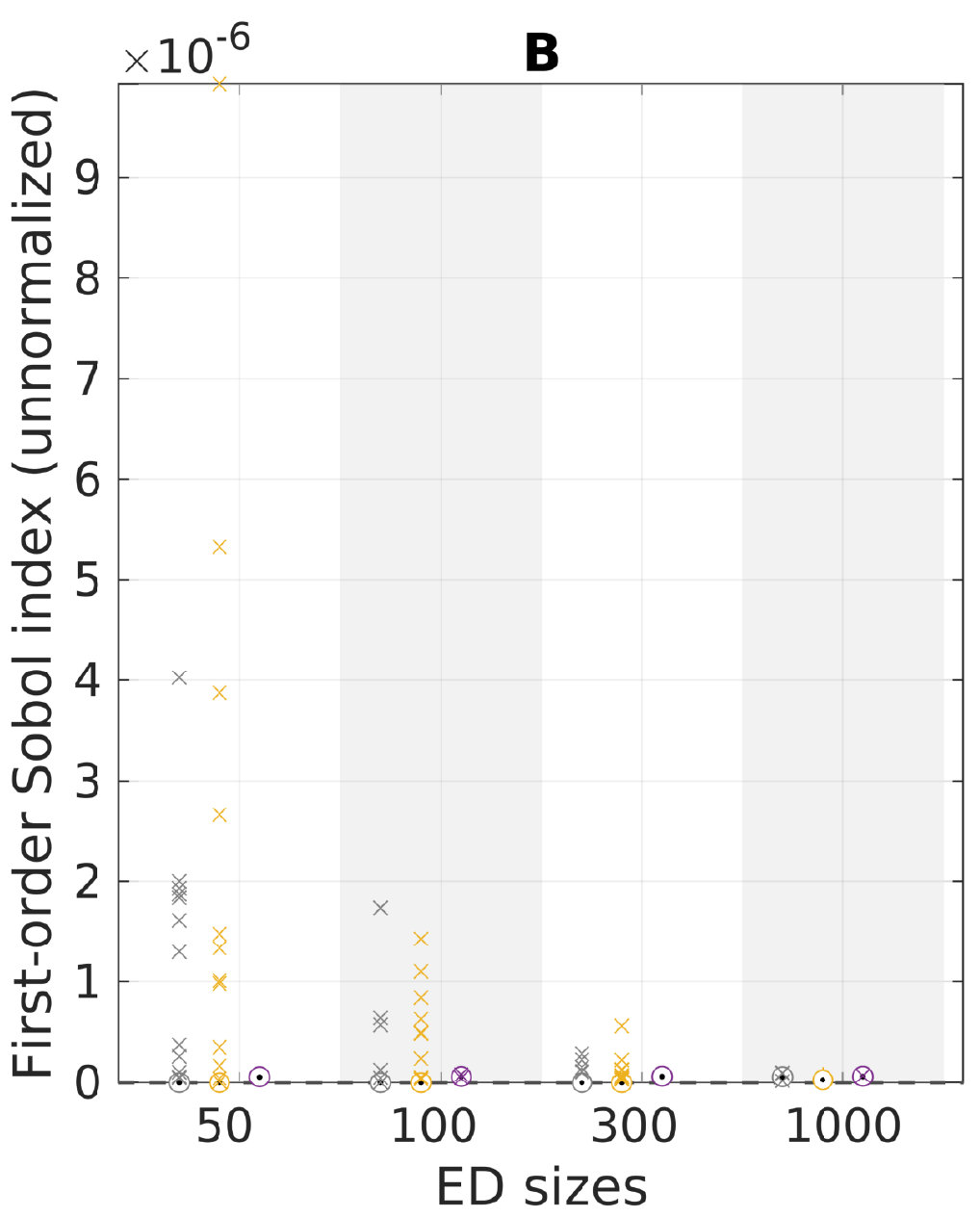}}
	\caption{Estimates of {unnormalized first-order Sobol' indices} for the dyke cost model ($p \leq 5$). Boxplots: in grey the PCE-based estimates. The dashed line (``True value'') denotes a high-precision estimate for the unnormalized first-order Sobol' index. Results for the remaining input variables can be found in \cref{fig:flood_unnormalized_Sobol_firstorder_appendix} in the appendix.}
	\label{fig:flood_unnormalized_Sobol_firstorder}
\end{figure}

\begin{figure}[htbp]
	\centering
	{\includegraphics[width=.3\textwidth]
		{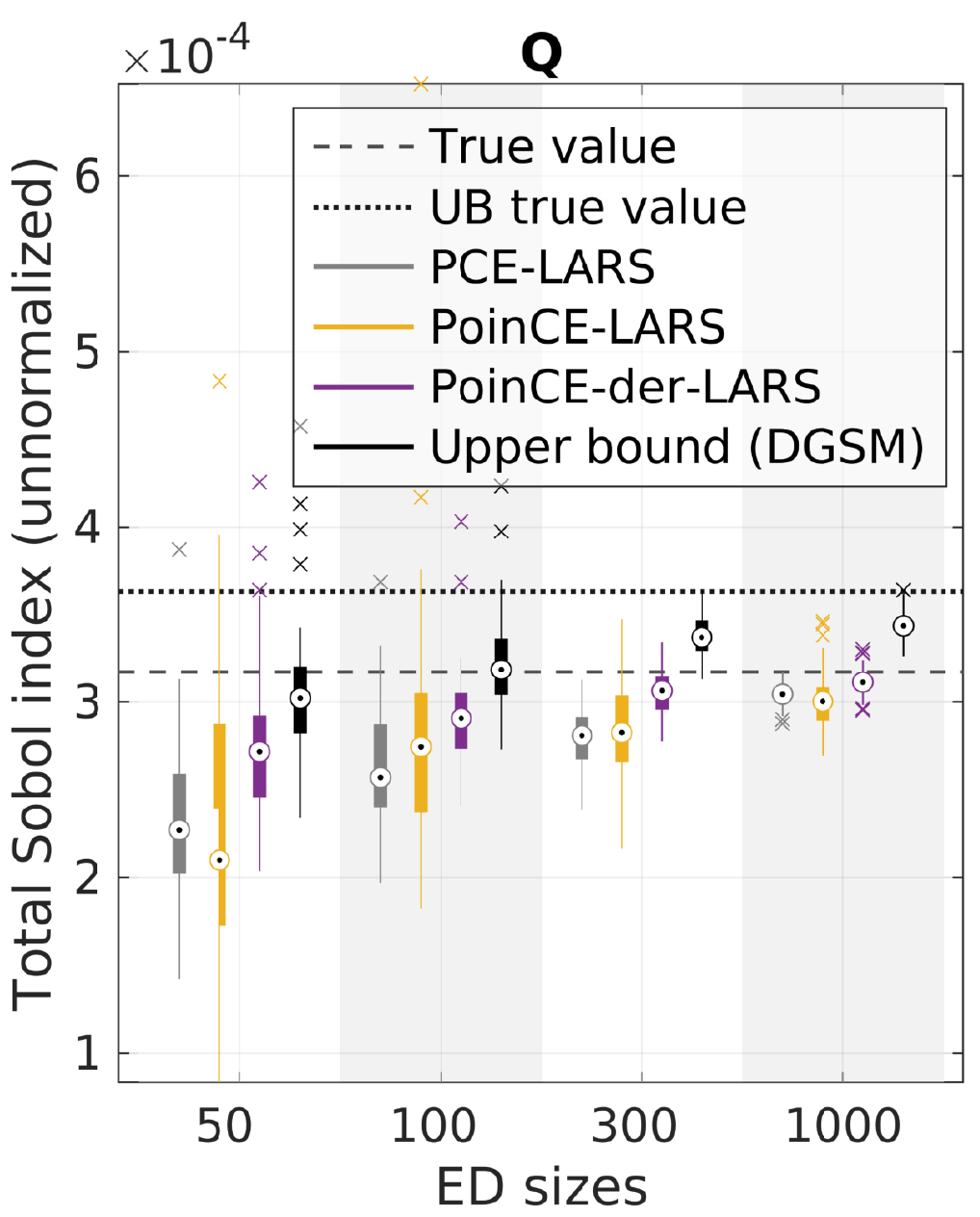}}
	\hfill
	{\includegraphics[width=.3\textwidth]
		{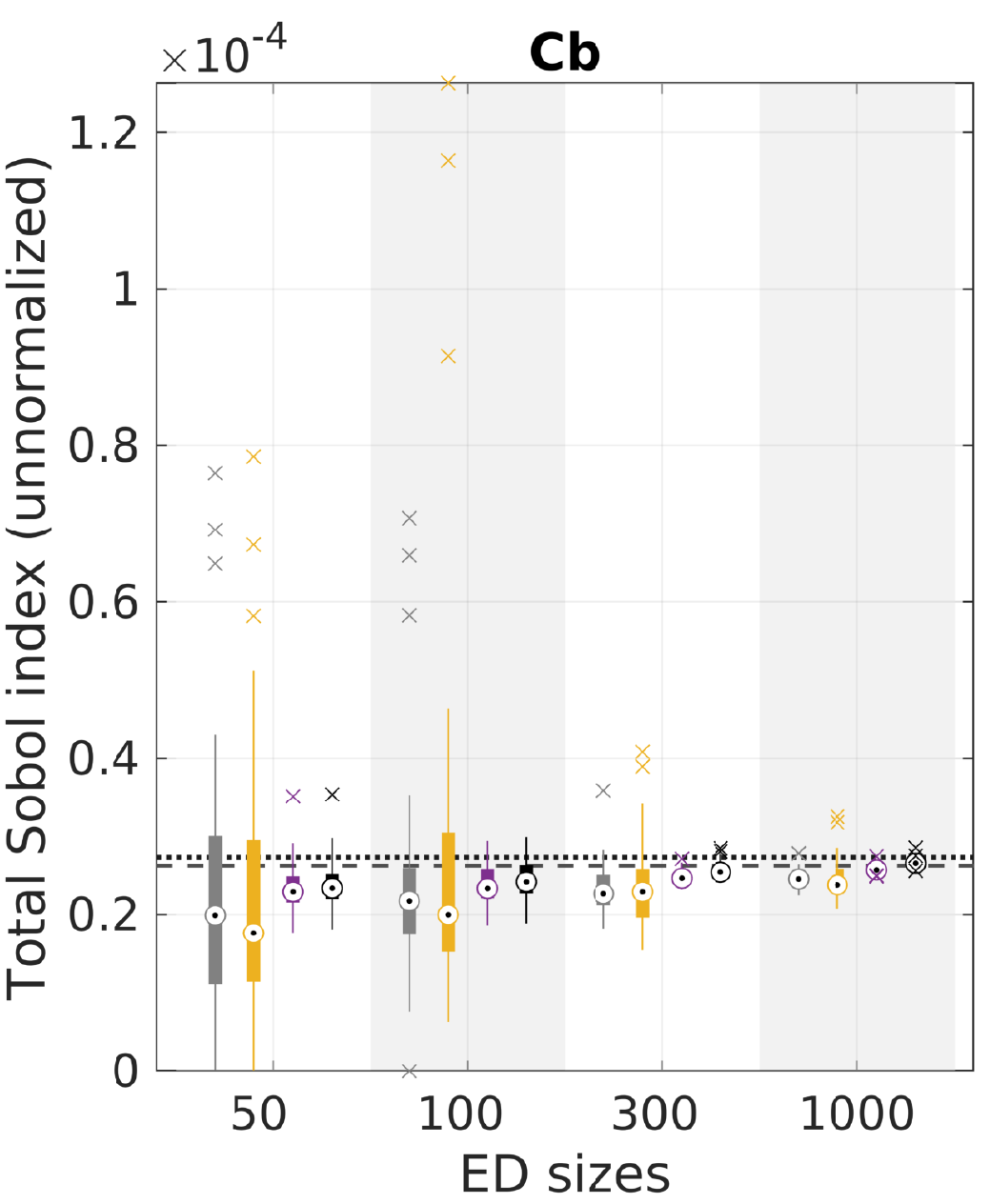}}
	\hfill
	{\includegraphics[width=.3\textwidth]
		{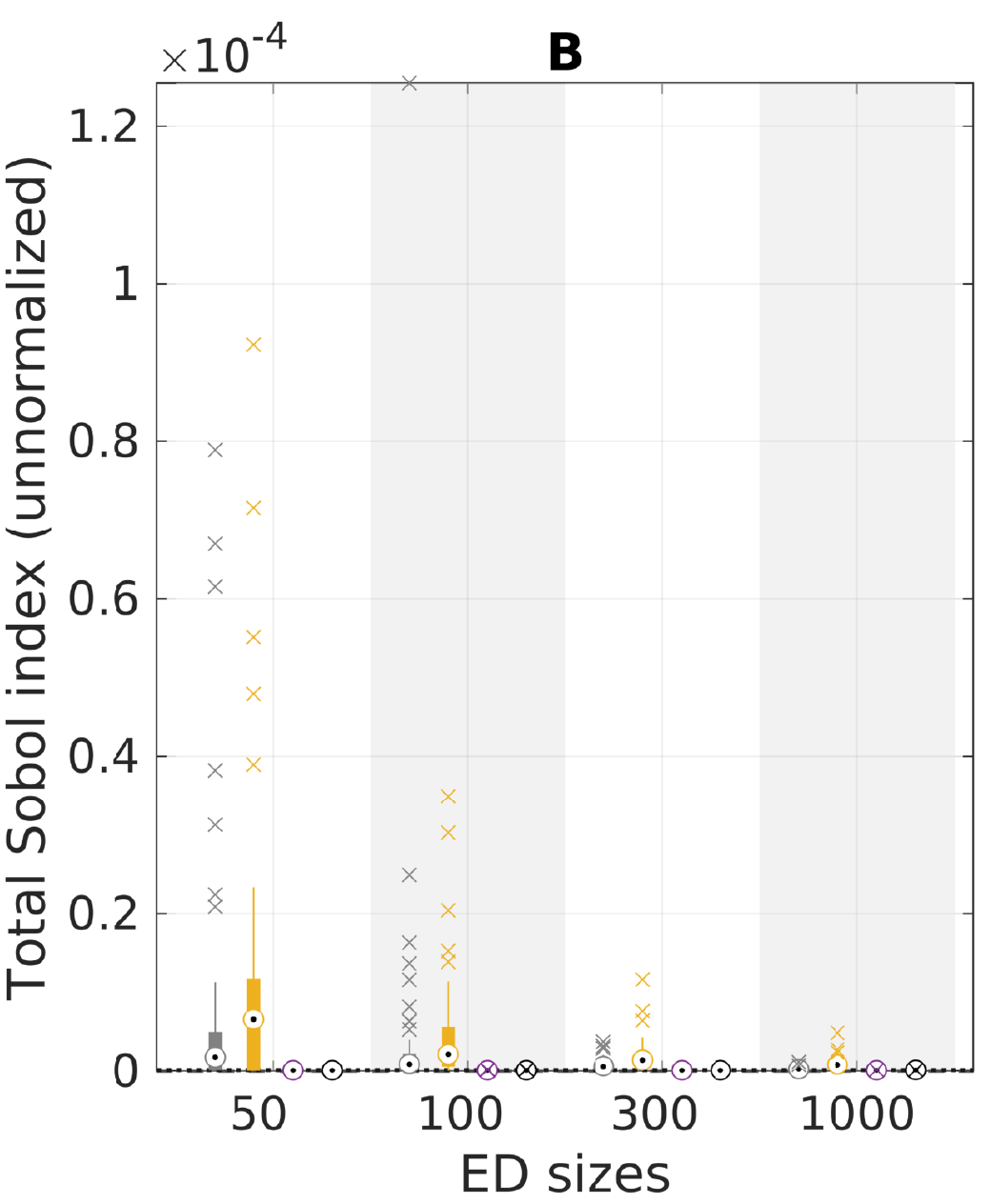}}
	\caption{Estimates of {unnormalized total Sobol' indices} for the dyke cost model ($p \leq 5$). Boxplots: in grey the PCE-based estimates and in black the DGSM-based upper bound from \eqref{eq:DGSM_poincare_upper_bound}. Lines: the dashed line (``True value'') denotes a high-precision estimate for the unnormalized total Sobol' index, while the dotted line (``UB true value'') is a MC-based high-precision estimate for the DGSM-based upper bound.
		Results for the remaining input variables can be found in \cref{fig:flood_unnormalized_Sobol_total_appendix} in the appendix.}
	\label{fig:flood_unnormalized_Sobol_total}
\end{figure}

\subsubsection{Comparison of total variance and relative mean-squared error}
To estimate Sobol' indices precisely, it is crucial to have a good estimate for the total variance. For PCE-LARS, PoinCE-LARS, and PoinCE-der-LARS, this value can directly be computed from the expansion coefficients, while for PoinCE computed by projection, we are using the empirical variance, as detailed in the beginning of \cref{sec:numerical}. In \cref{fig:flood_variance} we display the scatter of the variance estimates (50 replications). The empirical estimate has the largest variation, while PoinCE-der-LARS has the smallest. PCE-LARS and PoinCE-LARS underestimate the total variance more than PoinCE-der-LARS.
This is likely one reason for the good performance of PoinCE-der for the estimation of Sobol' indices: a more accurate total variance leads to more accurate Sobol' indices.

Interestingly, while PoinCE performs well for the estimation of Sobol' indices, this is not true for the generalization error, given by the relative mean-squared error
\begin{equation}
\text{RelMSE} = \frac{\Espe{X}{(f(X) - f^\text{surr}(X))^2}}{\Vare{X}{f(X)}}
\end{equation}
with the surrogate model $f^\text{surr}$. The RelMSE is computed by Monte Carlo integration on a validation set of size $10^6$ sampled from the input distribution $\mu$. In \cref{fig:flood_relmse} we display boxplots of estimates for the generalization error on a validation set of size $10^6$ (mean-squared error normalized by the variance of the validation set). 
PoinCE-der attains a smaller relative MSE than PoinCE. PCE shows faster convergence behavior than both, and attains a smaller relative MSE than PoinCE. PoinCE-der performs better than PCE for the two small experimental design sizes, which shows that the information brought by derivatives might be especially useful when data is scarce.

\begin{figure}[htbp]
	\centering
	\subfloat[Estimates of the variance]
	{\includegraphics[width=.3\textwidth]{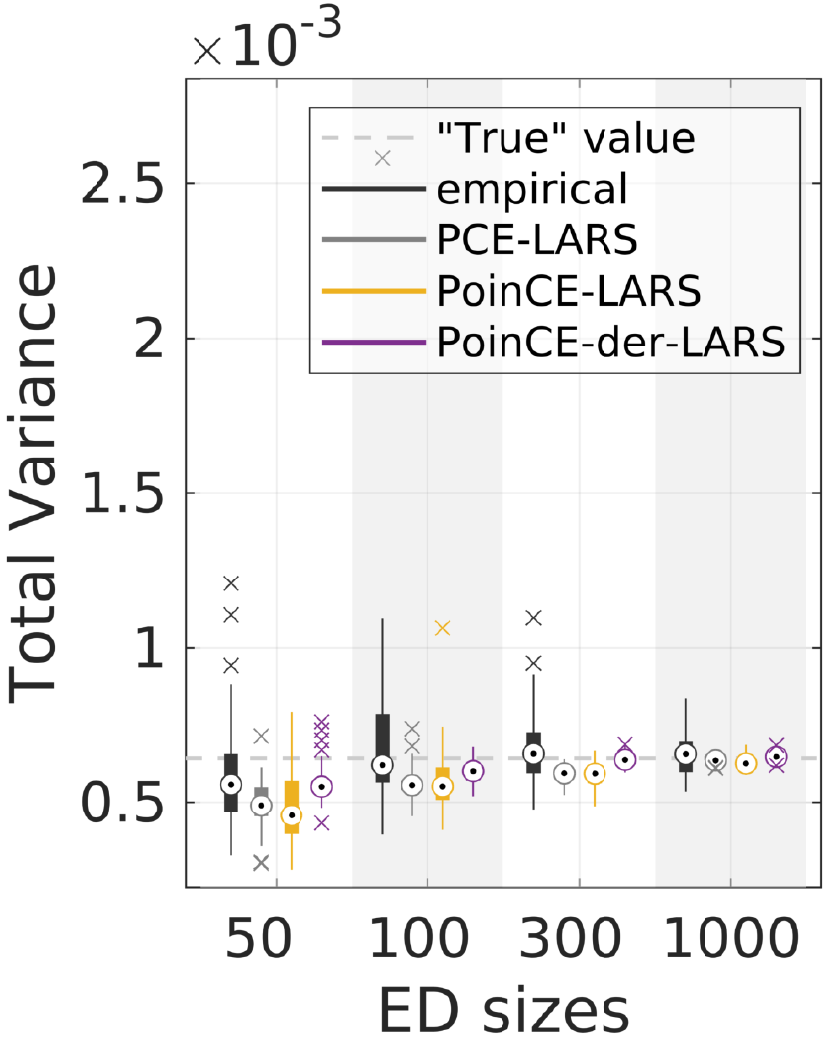}
		\label{fig:flood_variance}}
	\hspace{1cm}
	\subfloat[Relative mean-squared error]
	{\includegraphics[width=.3\textwidth]{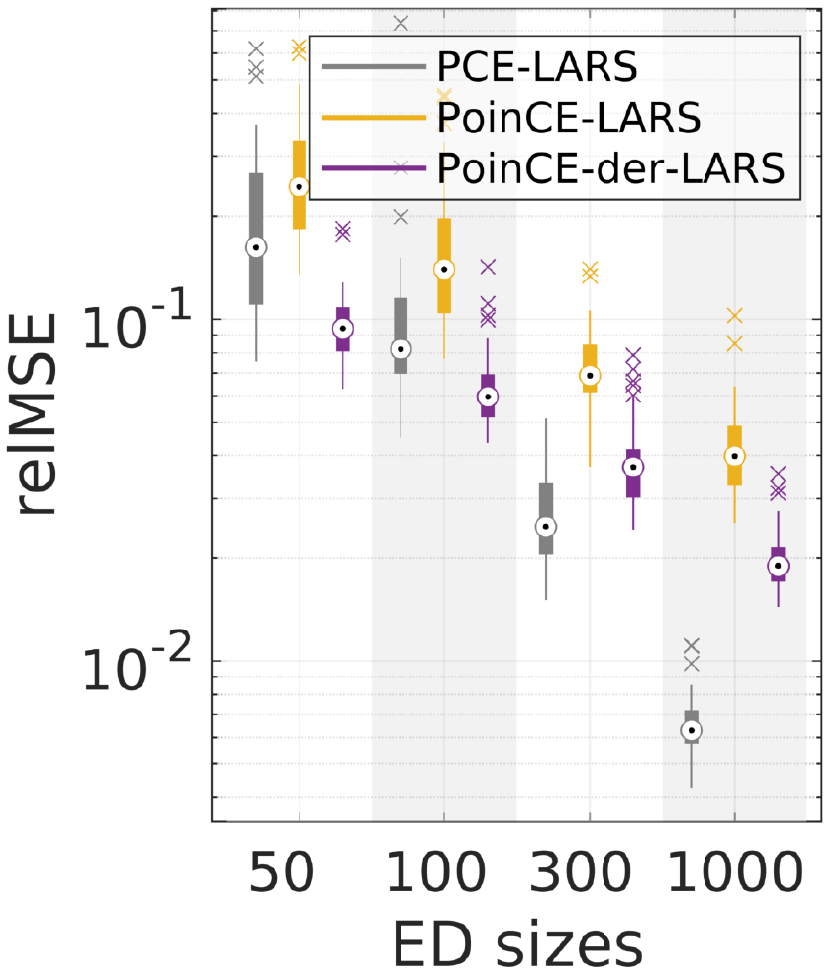}
		\label{fig:flood_relmse}}
	\caption{Dyke cost model: Comparison of PCE and PoinCE(-der) with respect to the following metrics: estimation of total variance, and relative mean-squared error.}
\end{figure}


\subsection{Mascaret data set}

Our second application focuses on a phenomenological and industrial simulation model, called Mascaret \citep{Goutal_et_al_2012}, based on a 1D solver of the Saint Venant equations and aiming at computing water height for river flood events.
The studied case, taken from Petit et al.~\citet{Petit2016} and also studied in Roustant et al.~\citet{roubar17}, is the French Vienne river in permanent regime whose uncertain input data concern flowrate, several physical parameters and geometrical data (transverse river profiles).
$37$ independent inputs have then been considered as random variables \citep{Petit2016}:
\begin{itemize}
	\item $12$ Strickler coefficients of the main channel $K_{s,c}^i$, uniform in $[20, 40]$;
	\item $12$ Strickler coefficients of the flood plain $K_{s,p}^i$, uniform in $[10, 30]$;
	\item $12$ slope perturbations $dZ^i$, standard Gaussian with bounds $[-3,3]$;
	\item $1$ discharge value $Q$, Gaussian with zero mean and standard deviation $50$, bounds $[-150,50]$.
\end{itemize}
The derivatives of the model output with respect to these $37$ inputs have been efficiently (with a cost independent of the number of inputs) computed by using the adjoint model of Mascaret \citep{demgoe16}.
This adjoint model has been obtained by automatic differentiation \citep{griwal08} using the automatic differentiation software Tapenade \citep{haspal13}.
A large-size Monte Carlo sample ($n=20\,000$) is available from the study of Petit et al.~\citet{Petit2016}.
This data set contains all the values of the $37$ inputs, the water height as output and the $37$ partial derivatives of the output (one derivative with respect to each input).
Note that this sample, which has a very large size, has been obtained during a research work for a demonstrative purpose.
In industrial practice, the aim is to use the minimal possible sample size: it is expected to use methods able to deal with sample sizes of the order of one hundred. 

Previous studies on this data set \citep{Petit2016,roubar17} have identified $32$ of the $37$ inputs as noninfluential.
In our study, we display results for the $5$ remaining inputs ($K_{s,c}^{11}$, $K_{s,c}^{12}$, $dZ^{11}$, $dZ^{12}$, and $Q$) and for one of the noninfluential inputs ($K_{s,c}^1$).
We choose a basis with hyperbolic truncation using $q = 0.5$, and degree adaptivity $p = 1,2 \enum 8$. We analyze several experimental design sizes ranging from $30$ to $300$. For each experimental design size, we run 30 replications, sampling the design randomly without replacement from the given full data set. 
``True'' values for Sobol' indices and total variance are computed from a PCE using all $20\,000$ points.

\subsubsection{Comparison of regression-based PoinCE(der) with PCE and the DGSM-based upper bound}
Estimates of first-order and total Sobol' indices are displayed in \cref{fig:mascaret_firstorder,fig:mascaret_total}. We display results for regression-based PCE, PoinCE, and PoinCE-der. In addition, we display the upper bound computed as in \eqref{eq:DGSM_poincare_upper_bound}, computed based on PoinCE-der coefficients and normalized by the PoinCE-der total variance.
We observe that for the non-influential variable $K_{s,c}^1$ (and indeed all other 31 non-influential variables), derivative-based PoinCE correctly identify a total and first-order Sobol' index of $0$. 
Overall, PoinCE and PCE show very similar results, with PoinCE having slightly larger variance in a few cases. 
For some variables such as $dZ^{11}$ and $Q$, the DGSM-based estimate of the upper bound almost coincides with the PoinCE-der estimate. 
Overall, we observe that PoinCE-der estimates have smaller variance than PCE and PoinCE for the important variables $K_{s,c}^{11}, dZ^{11}$, and $Q$, even already for 30 experimental design points. For low-importance variables such as $K_{s,c}^{12}$ and $dZ^{12}$, PoinCE-der correctly identifies a value away from zero already for the smallest experimental design, while half of the PCE and PoinCE estimates are zero. 
For small experimental design sizes, the PoinCE-der estimates also have a smaller bias than the PCE and PoinCE estimates. Sometimes the PoinCE-der estimates seem to systematically over- or underestimate the true Sobol' index by a small amount. However, note that the ``true'' value was computed by a PCE (based on all $20\,000$ points), and might therefore itself be slightly inaccurate.

\begin{figure}[htbp]
	\centering
	{\includegraphics[width=.3\textwidth]
		{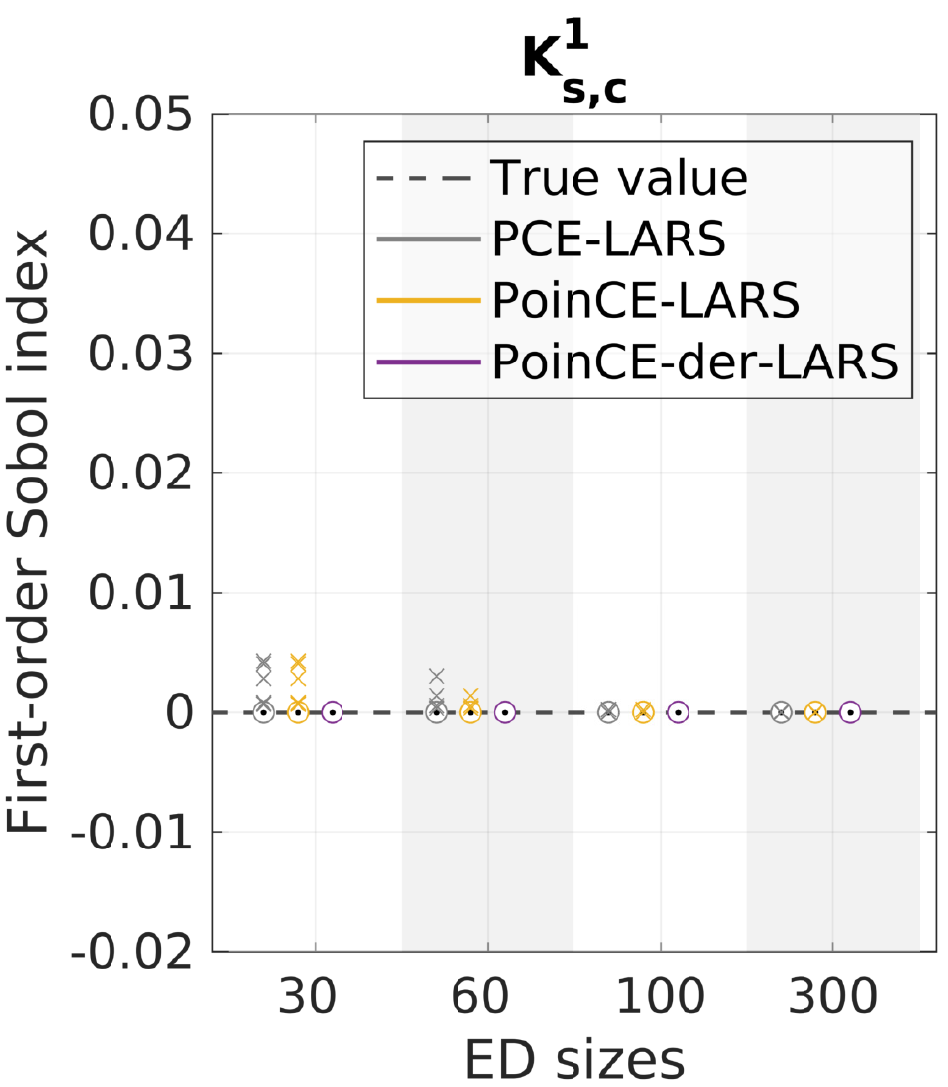}}
	\hfill
	{\includegraphics[width=.3\textwidth]
		{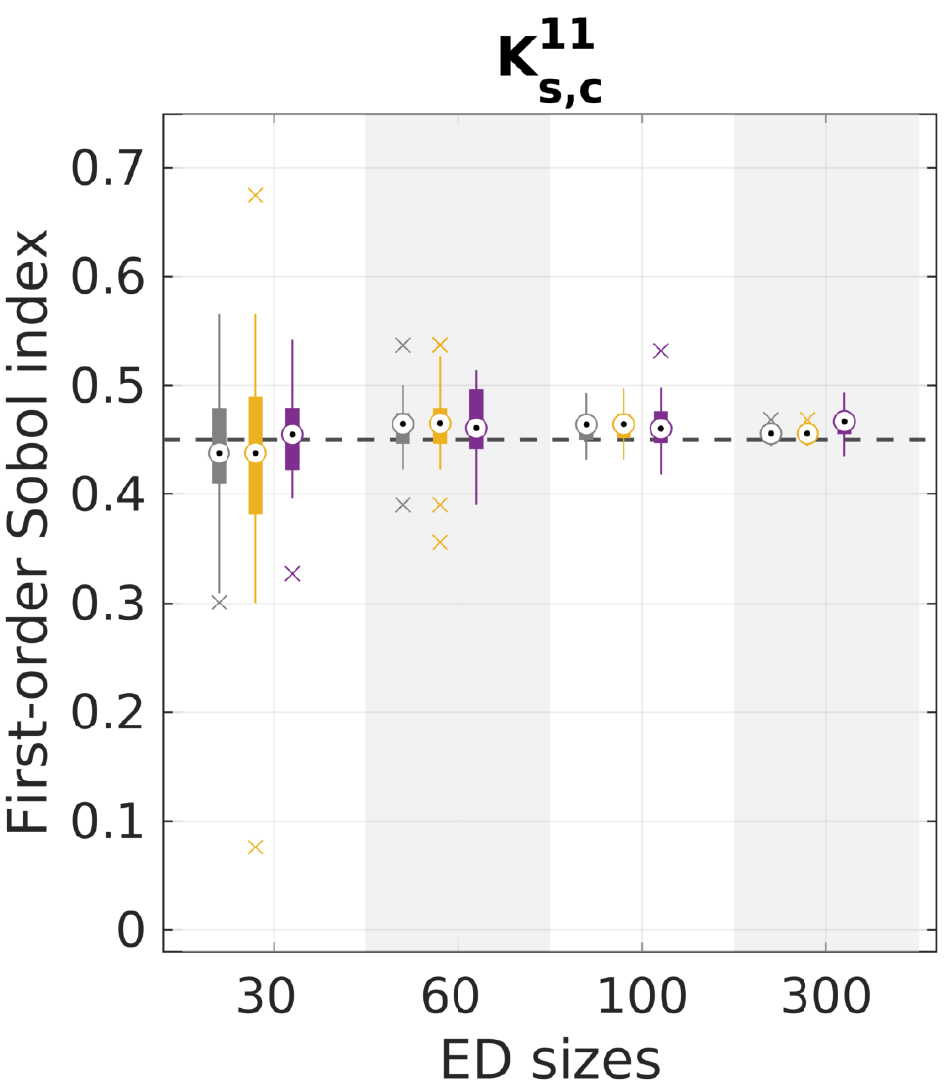}}
	\hfill
	{\includegraphics[width=.3\textwidth]
		{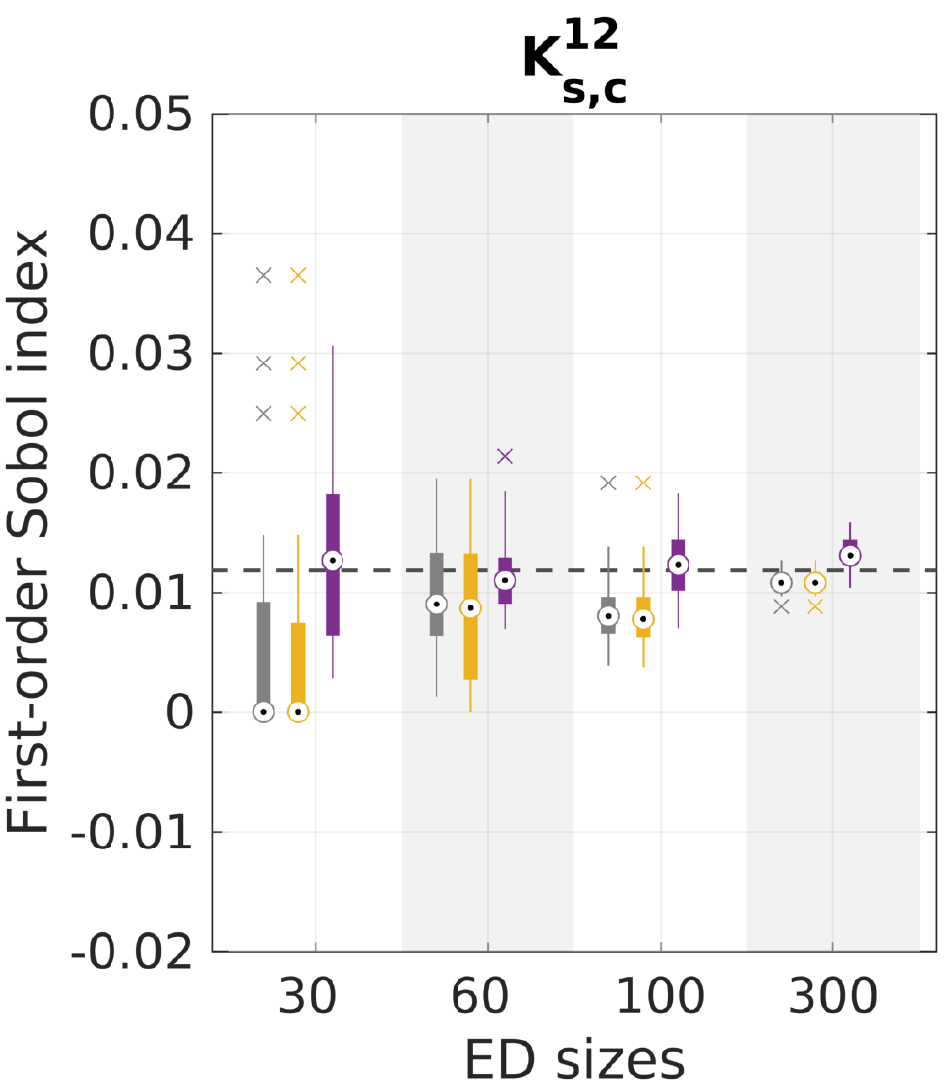}}
	\\
	{\includegraphics[width=.3\textwidth]
		{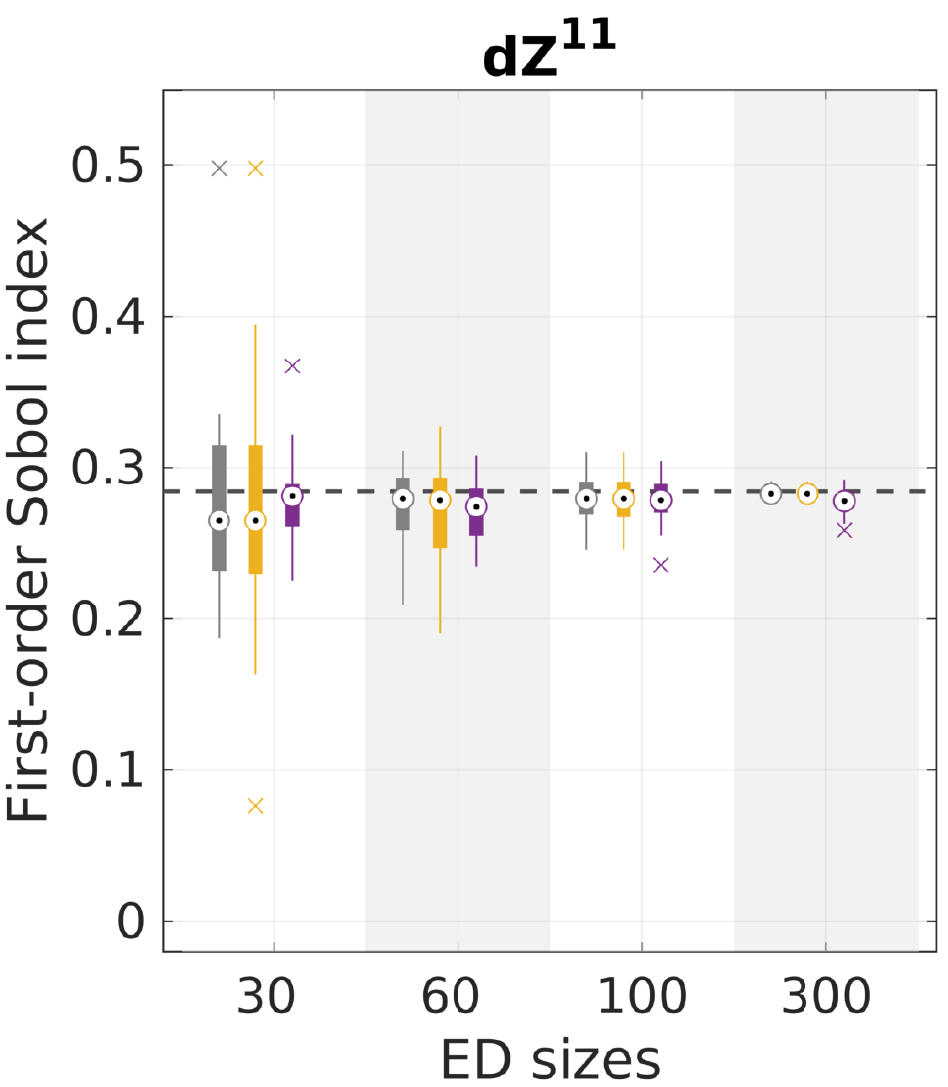}}
	\hfill
	{\includegraphics[width=.3\textwidth]
		{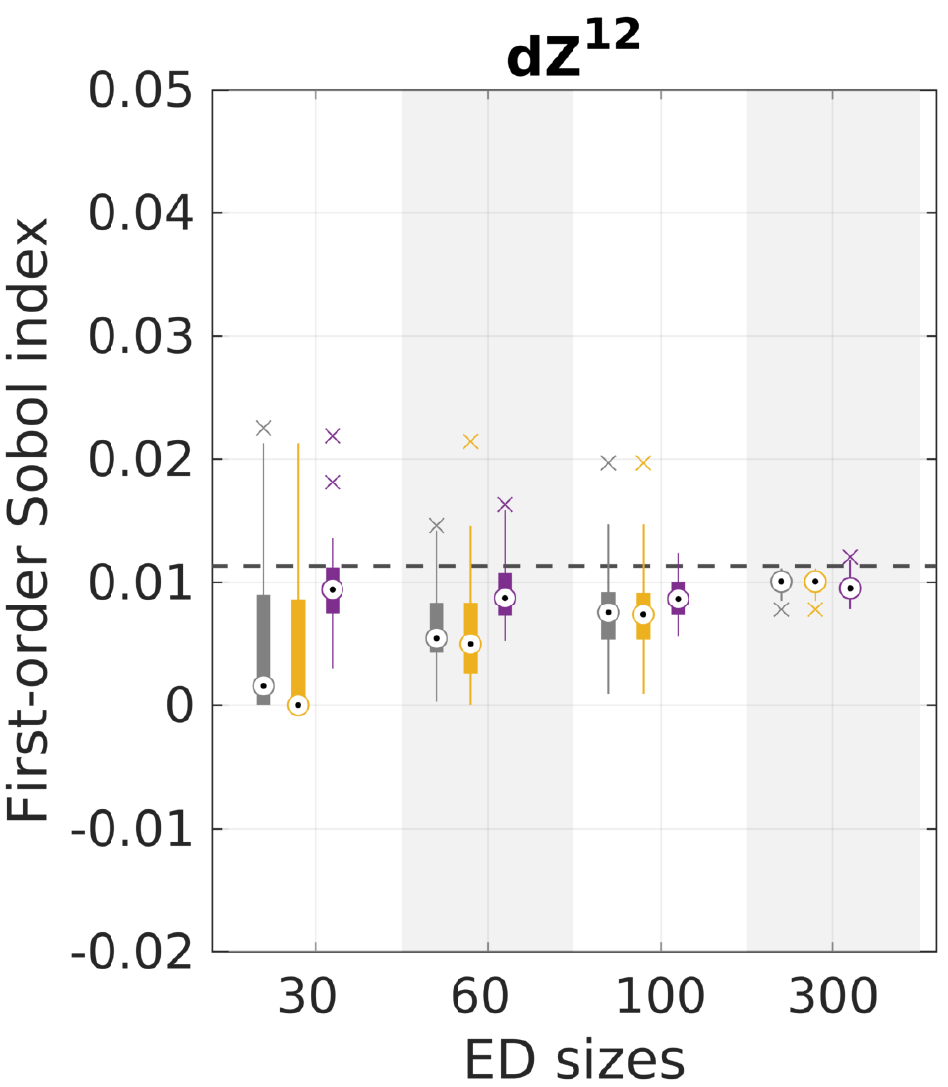}}
	\hfill
	{\includegraphics[width=.3\textwidth]
		{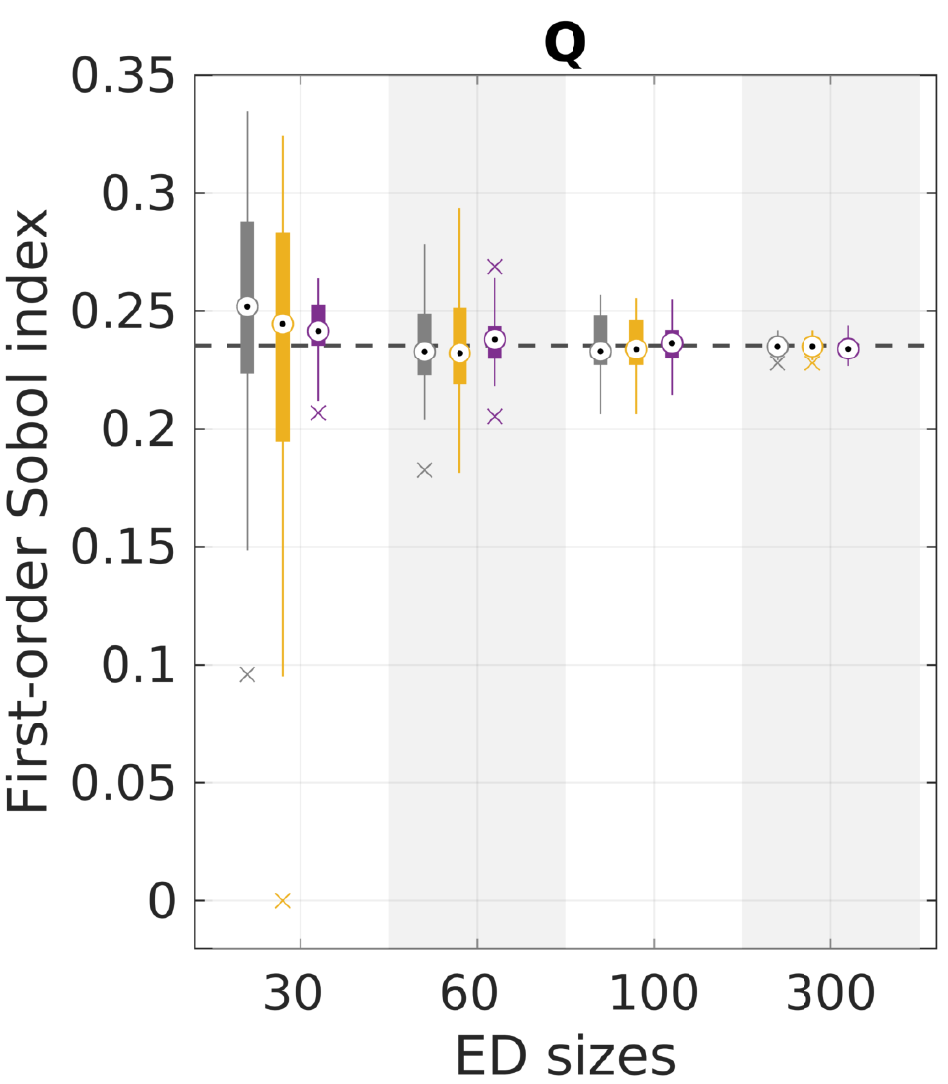}}

	\caption{First-order Sobol' indices for the Mascaret data set (30 replications). ``True'' values computed from a PCE using all $20\,000$ points.}
	\label{fig:mascaret_firstorder}
\end{figure}

\begin{figure}[htbp]
	\centering
	{\includegraphics[width=.3\textwidth]
		{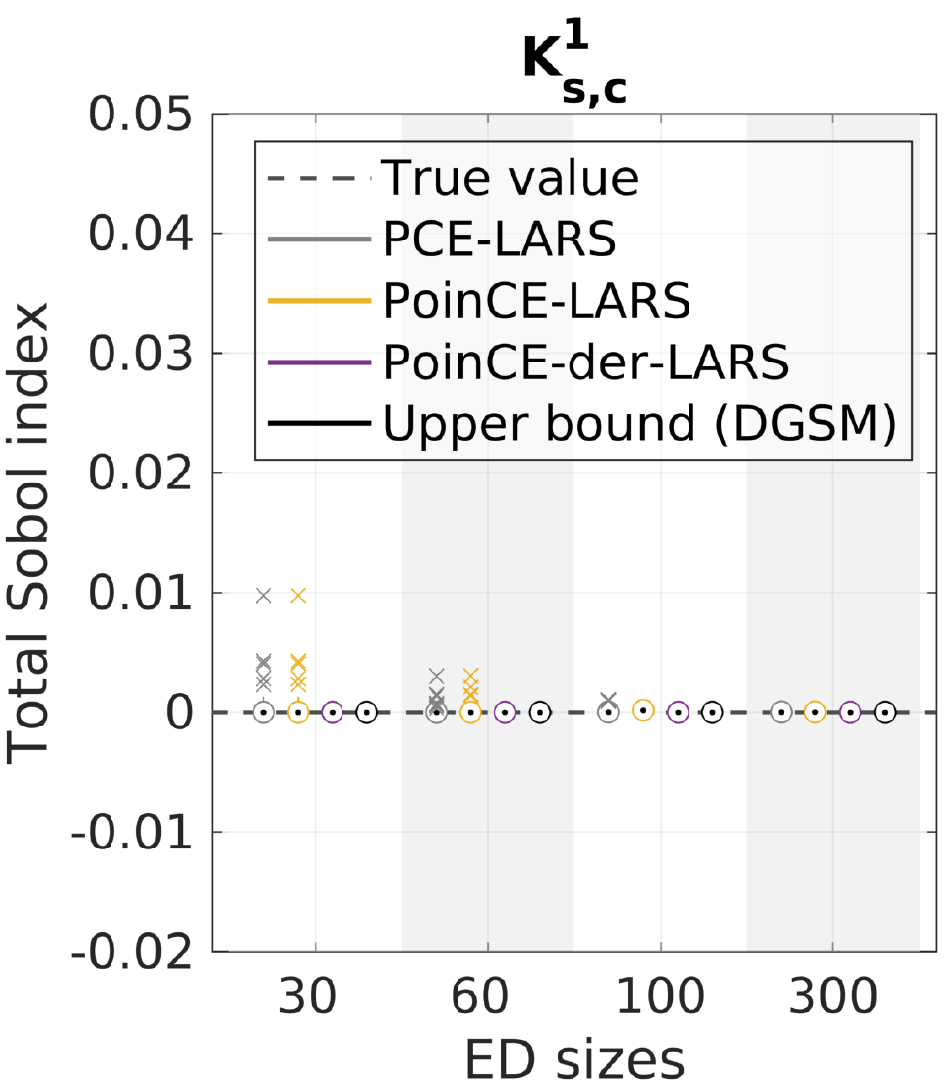}}
	\hfill
	{\includegraphics[width=.3\textwidth]
		{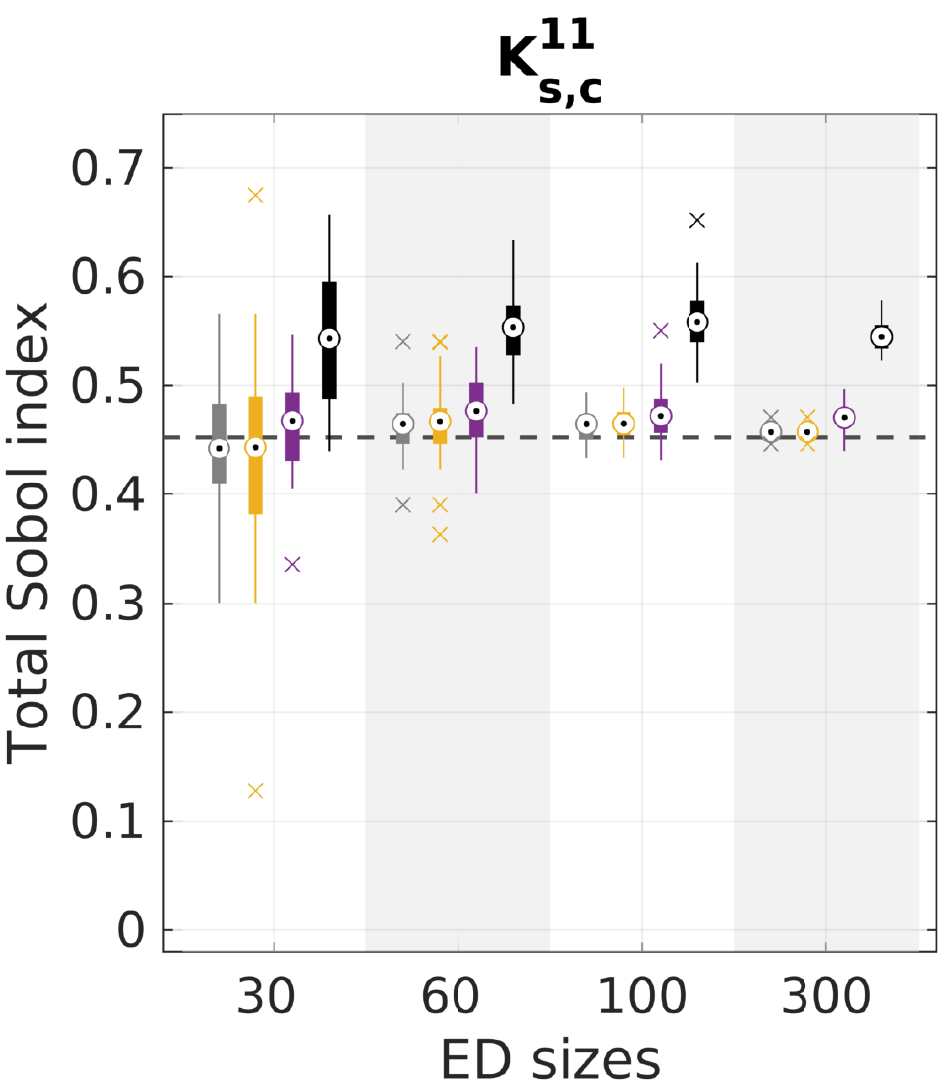}}
	\hfill
	{\includegraphics[width=.3\textwidth]
		{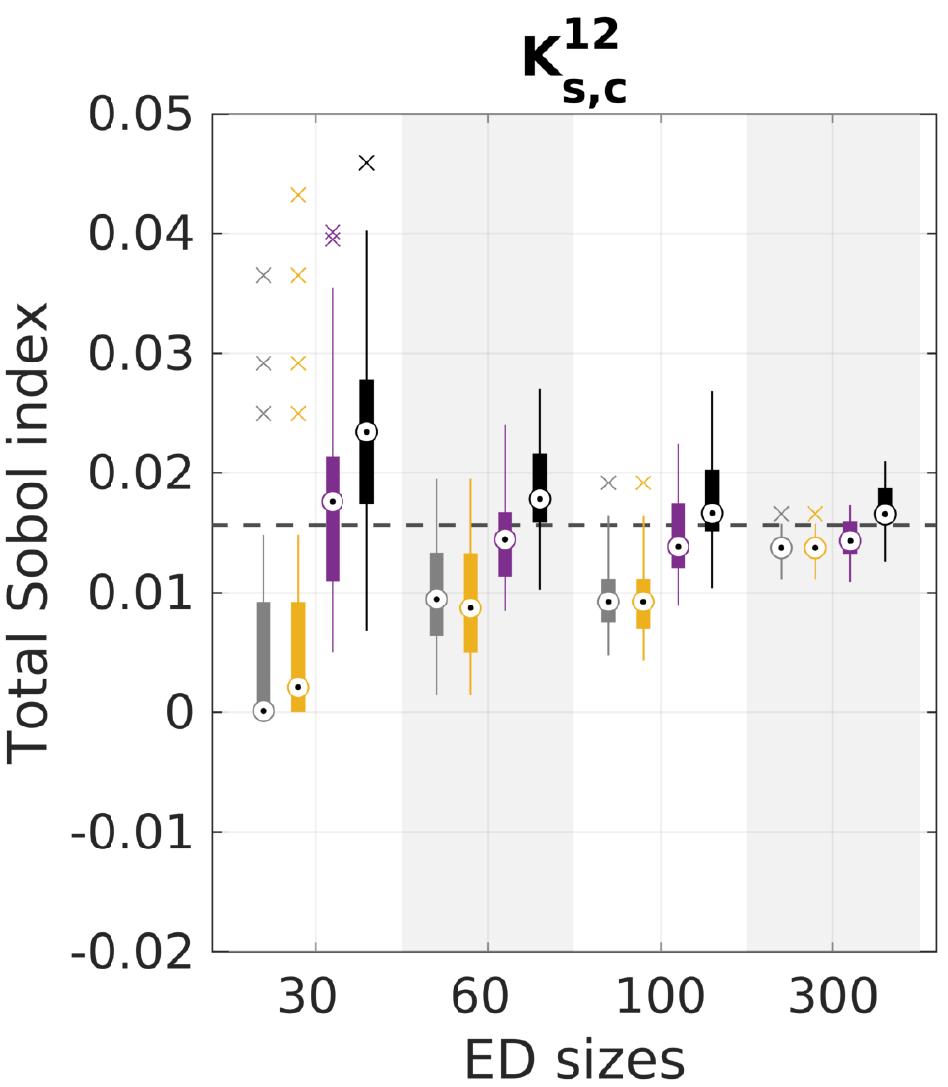}}
	\\
	{\includegraphics[width=.3\textwidth]
		{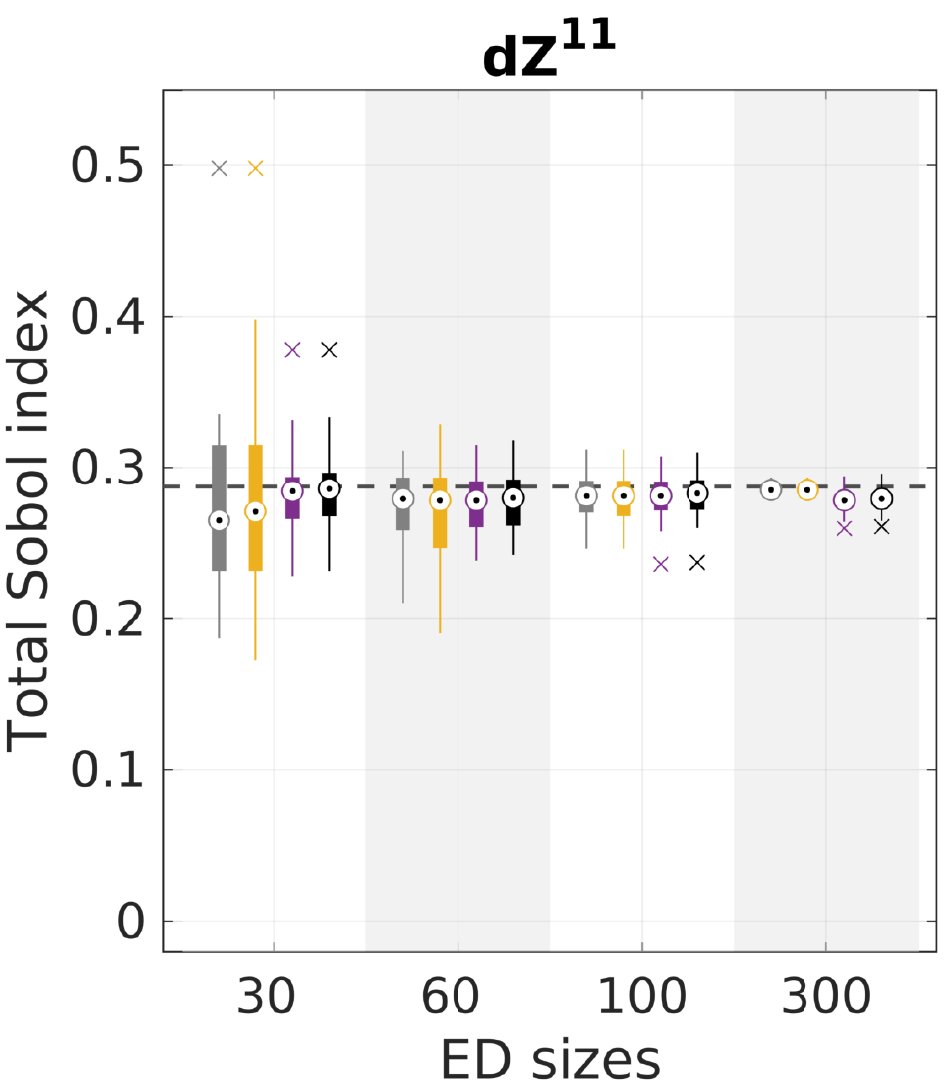}}
	\hfill
	{\includegraphics[width=.3\textwidth]
		{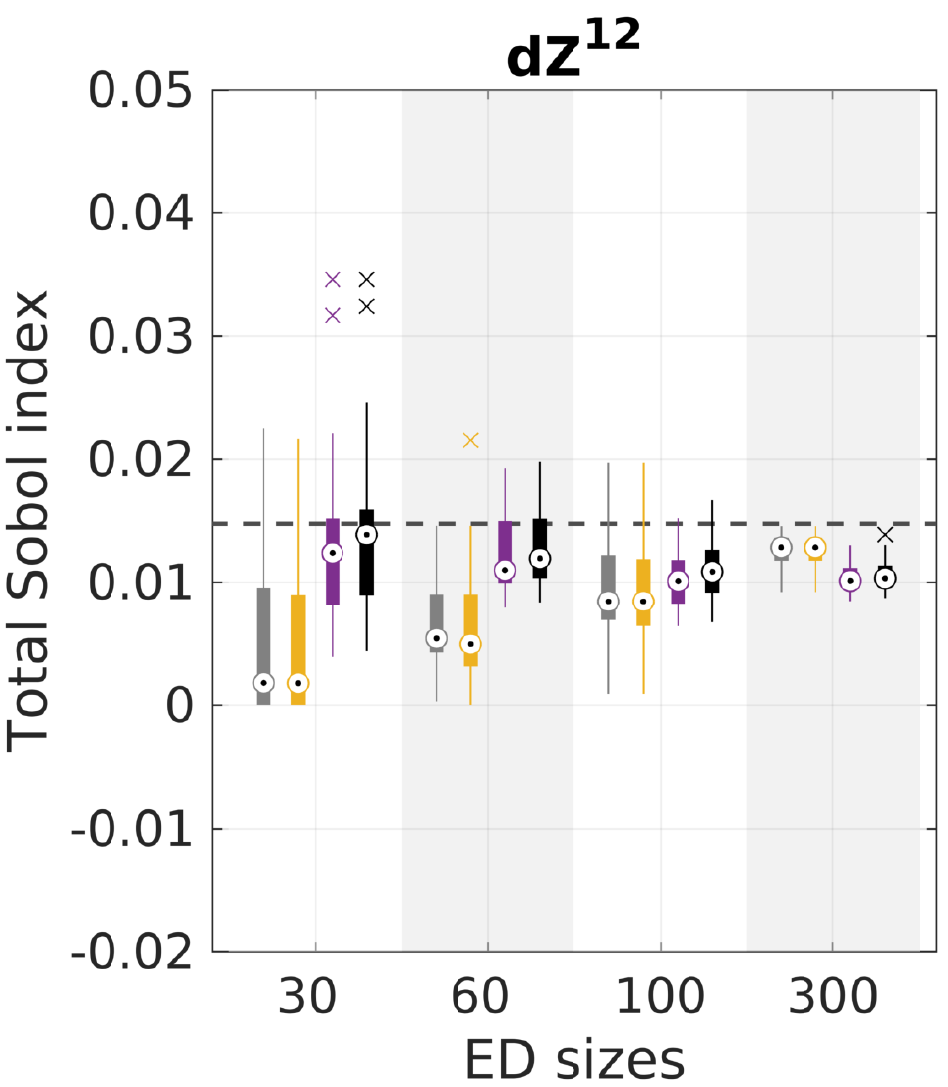}}
	\hfill
	{\includegraphics[width=.3\textwidth]
		{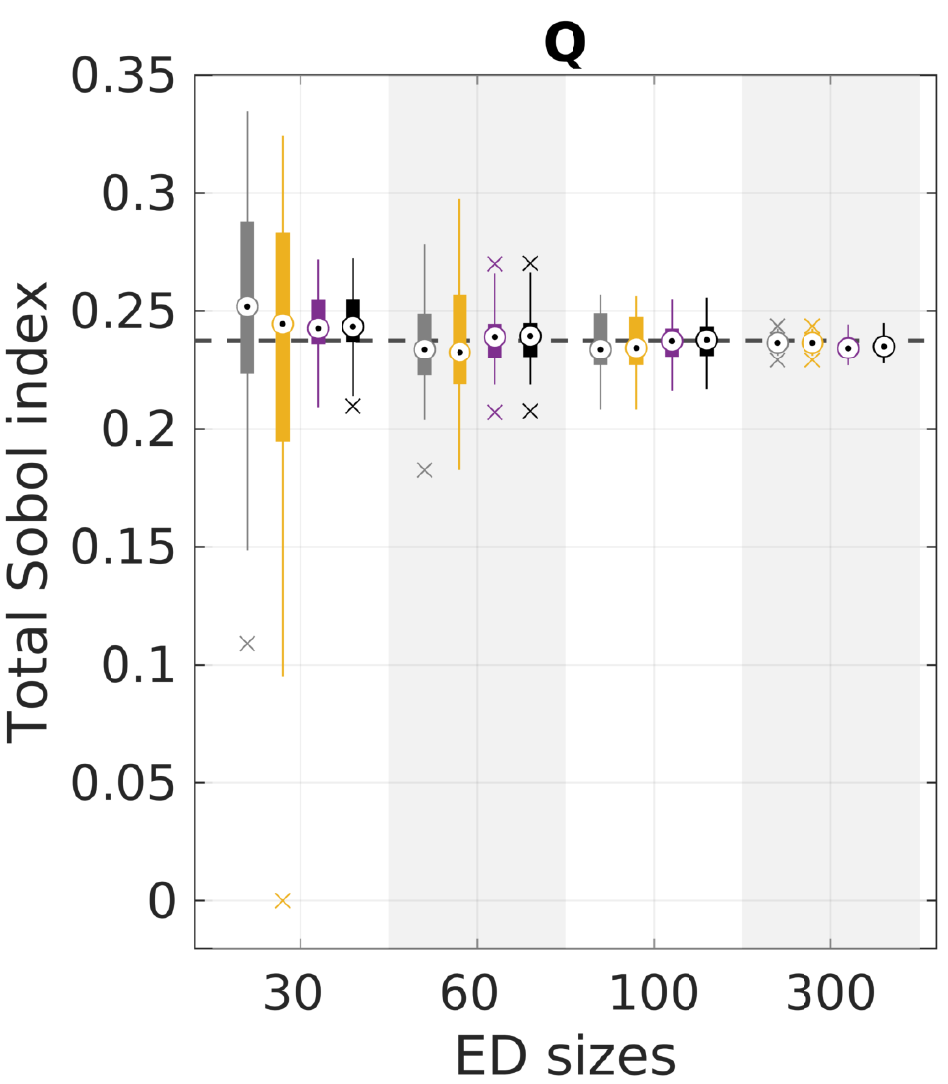}}

	\caption{Total Sobol' indices for the Mascaret data set (30 replications). ``True'' values computed from a PCE using all $20\,000$ points.}
	\label{fig:mascaret_total}
\end{figure}

\subsubsection{Comparison of total variance and relative mean-squared error}
In \cref{fig:mascaret_variance} we display various estimates for the total variance. We observe again that PoinCE-der yields an estimate with smaller variance and less bias than PoinCE and PCE. PoinCE and PCE both have smaller variance than the empirical estimate, but generally underestimate the total variance.

While PoinCE-der estimates Sobol' indices and total variance well, we observe in \cref{fig:mascaret_relmse} showing the relative MSE that PCE and PoinCE are performing better as global surrogate models: their model approximation error is for large experimental designs almost an order of magnitude better than for PoinCE-der.

\begin{figure}[htbp]
	\centering
	\subfloat[Estimates of the variance]{\includegraphics[width=.4\textwidth]{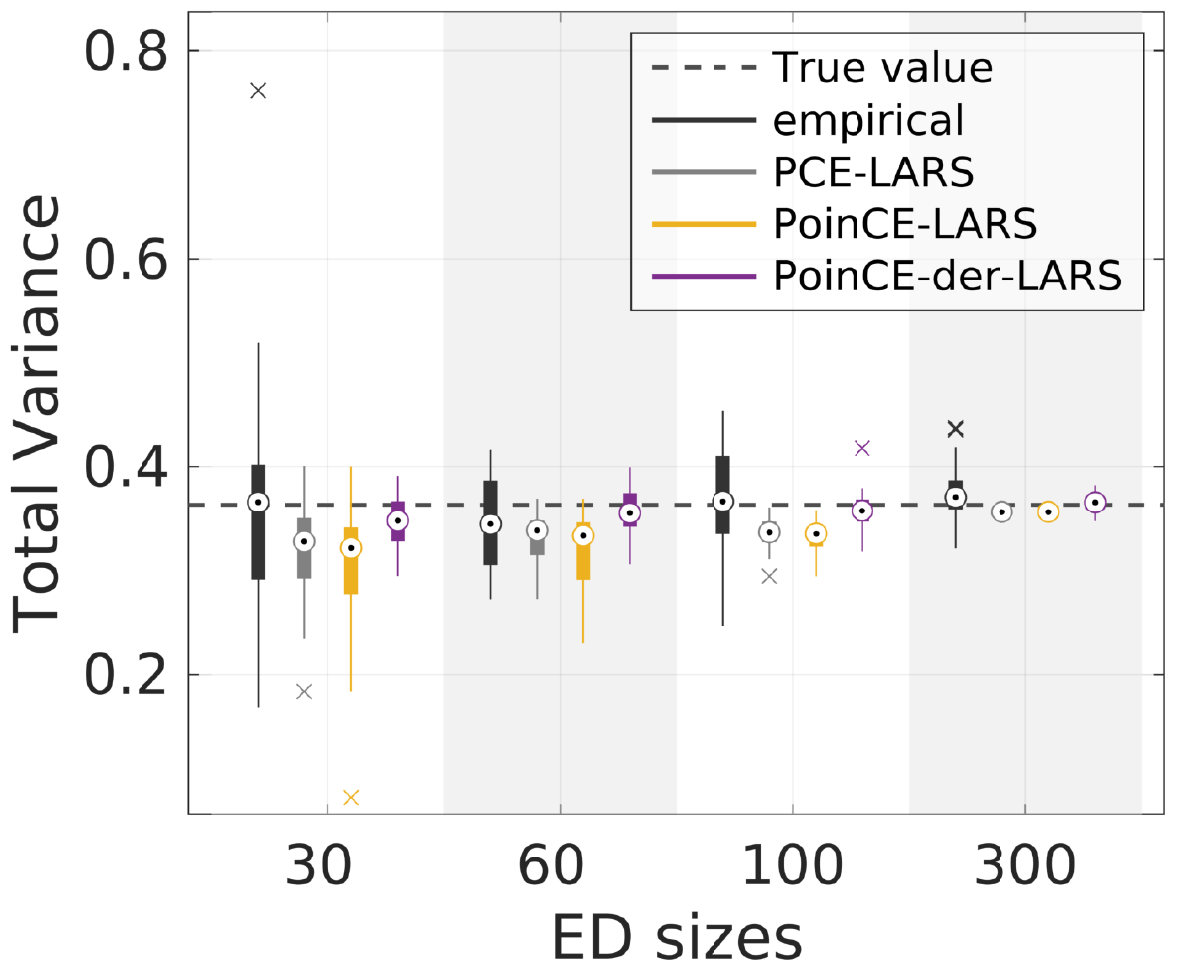}\label{fig:mascaret_variance}}
	\hspace{1cm}
	\subfloat[Relative mean-squared error]{\includegraphics[width=.4\textwidth]{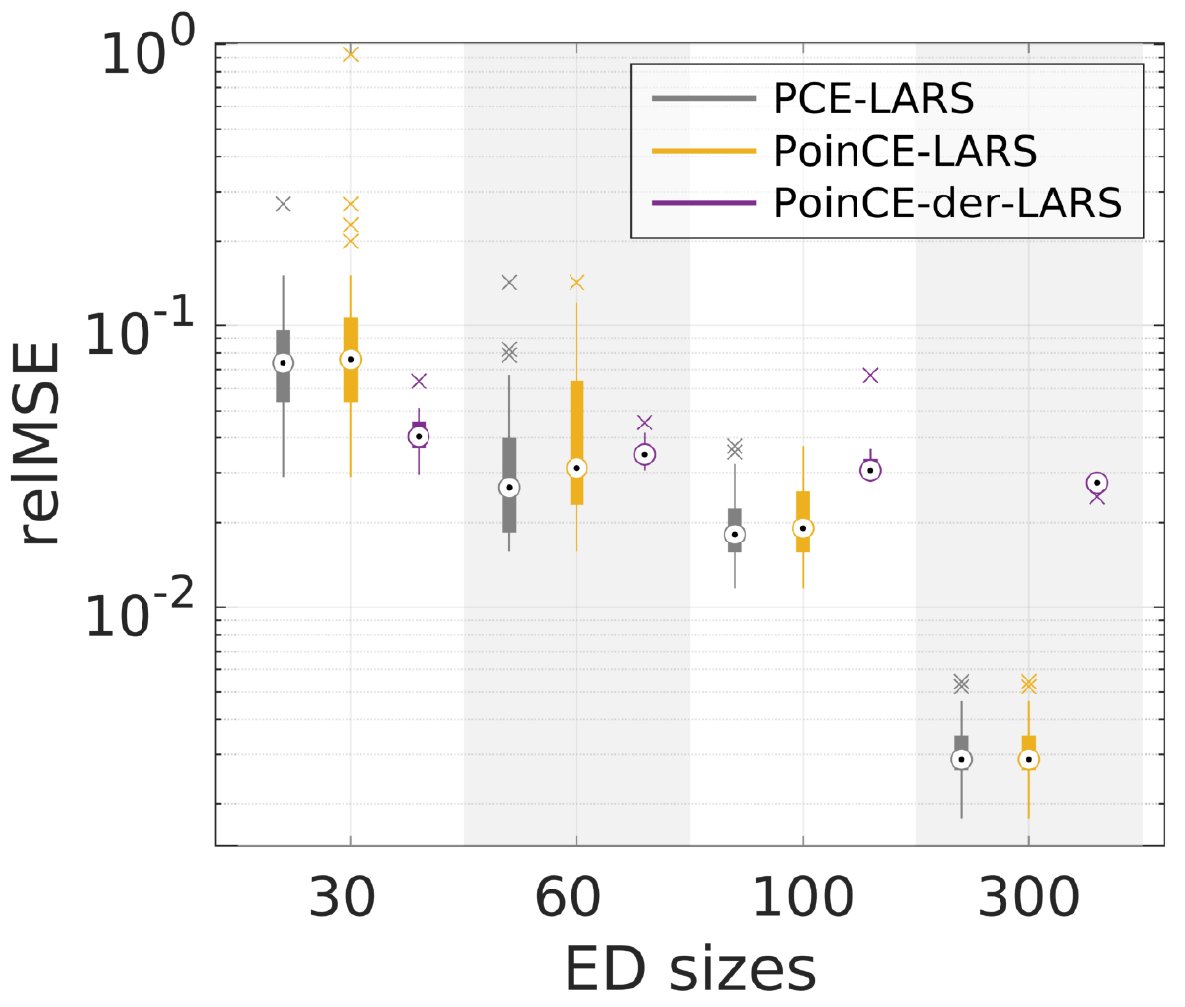}\label{fig:mascaret_relmse}}
	\caption{Mascaret data set. Comparison of PCE and PoinCE(-der) with respect to the following metrics: estimation of total variance, and relative mean-squared error. ``True'' value of total variance computed from a PCE using all $20\,000$ points.}
	\label{fig:mascaret_extra}
\end{figure}

\FloatBarrier 

\section{Conclusion}
\label{sec:conclusion}

In this paper we studied PoinCE, an expansion in terms of the Poincar\'e basis, which is an orthonormal basis of $L^2(\mu)$ with the unique property that all its partial derivatives form again an orthogonal basis for the same space.
We provided a proof of this property as well as a few analytical results as direct consequences.
In particular, we showed how upper and lower bounds for partial variances can be obtained analytically from PoinCE coefficients.

We described the computation of PoinCE and Poincar\'e derivative expansions by sparse regression and applied the method to two numerical examples. We found that while PoinCE does not outperform PCE in terms of validation error, it can be advantageous for estimating Sobol' indices in the low-data regime. 
PoinCE is therefore a valuable tool if model derivatives are cheaply available (e.g., by automatic differentiation or as a by-product of the simulation).
Taking partial derivatives reduces the size of the truncated basis especially for high-dimensional, low-order total-degree bases, which gives an advantage to derivative-based PoinCE over expansions relying on model evaluations.

Future work on the topic of PoinCE will investigate the simultaneous use of  model evaluations and derivatives for the computation of the coefficients, and compare to the related topic of gradient-enhanced PCE.

\section*{Acknowledgements}

This paper is a part of the project ``Surrogate Modeling for Stochastic Simulators (SAMOS)'' funded by the Swiss National Science Foundation (Grant \#200021\_175524), whose support is gratefully acknowledged.
Part of this research was conducted within the frame of the Chair in Applied Mathematics OQUAIDO, gathering partners in technological research (BRGM, CEA, IFPEN, IRSN, Safran, Storengy) and academia (CNRS, Ecole Centrale de Lyon, Mines Saint-Etienne, University of Grenoble, University of Nice, University of Toulouse) around advanced methods for computer experiments. Support from the ANR-3IA Artificial and Natural Intelligence Toulouse Institute is gratefully acknowledged.



\begin{thebibliography}{}

\bibitem[\protect\citeauthoryear{Antoniadis}{Antoniadis}{1984}]{Antoniadis1984}
Antoniadis, A. (1984).
\newblock Analysis of variance on function spaces.
\newblock {\em Statistics: A Journal of Theoretical and Applied
  Statistics\/}~{\em 15\/}(1), 59--71.

\bibitem[\protect\citeauthoryear{Bakry, Gentil, and Ledoux}{Bakry
  et~al.}{2014}]{BGL_book}
Bakry, D., I.~Gentil, and M.~Ledoux (2014).
\newblock {\em Analysis and geometry of Markov diffusion operators, volume 348
  of Grundlehren der Mathematischen Wissenschaften [Fundamental Principles of
  Mathematical Sciences]}.
\newblock Springer, Cham.

\bibitem[\protect\citeauthoryear{Becker}{Becker}{2020}]{Becker2020}
Becker, W. (2020).
\newblock Metafunctions for benchmarking in sensitivity analysis.
\newblock {\em Reliability Engineering \& System Safety\/}~{\em 204}, 107189.

\bibitem[\protect\citeauthoryear{Blatman}{Blatman}{2009}]{BlatmanThesis}
Blatman, G. (2009).
\newblock {\em Adaptive sparse polynomial chaos expansions for uncertainty
  propagation and sensitivity analysis}.
\newblock Ph.\ D. thesis, Universit\'e Blaise Pascal, Clermont-Ferrand.

\bibitem[\protect\citeauthoryear{Blatman and Sudret}{Blatman and
  Sudret}{2008}]{BlatmanCras2008}
Blatman, G. and B.~Sudret (2008).
\newblock Sparse polynomial chaos expansions and adaptive stochastic finite
  elements using a regression approach.
\newblock {\em Comptes Rendus M\'{e}canique\/}~{\em 336\/}(6), 518--523.

\bibitem[\protect\citeauthoryear{Blatman and Sudret}{Blatman and
  Sudret}{2011}]{blasud11}
Blatman, G. and B.~Sudret (2011).
\newblock Adaptive sparse polynomial chaos expansion based on least angle
  regression.
\newblock {\em Journal of Computational Physics\/}~{\em 230}, 2345--2367.

\bibitem[\protect\citeauthoryear{Borgonovo and Plischke}{Borgonovo and
  Plischke}{2016}]{borpli16}
Borgonovo, E. and E.~Plischke (2016).
\newblock Sensitivity analysis: {A} review of recent advances.
\newblock {\em European Journal of Operational Research\/}~{\em 248}, 869--887.

\bibitem[\protect\citeauthoryear{Bujurke, Salimath, and Shiralashetti}{Bujurke
  et~al.}{2008}]{Bujurke2008}
Bujurke, N., C.~Salimath, and S.~Shiralashetti (2008).
\newblock Computation of eigenvalues and solutions of regular
  {S}turm--{L}iouville problems using {H}aar wavelets.
\newblock {\em Journal of Computational and Applied Mathematics\/}~{\em
  219\/}(1), 90--101.

\bibitem[\protect\citeauthoryear{Cand\`es and Plan}{Cand\`es and
  Plan}{2011}]{Candes2011}
Cand\`es, E.~J. and Y.~Plan (2011).
\newblock A probabilistic and {RIP}less theory of compressed sensing.
\newblock {\em IEEE Transactions on Information Theory\/}~{\em 57\/}(11),
  7235--7254.

\bibitem[\protect\citeauthoryear{Cand{\`e}s and Wakin}{Cand{\`e}s and
  Wakin}{2008}]{Candes2008a}
Cand{\`e}s, E.~J. and M.~B. Wakin (2008).
\newblock An introduction to compressive sampling: {A} sensing/sampling
  paradigm that goes against the common knowledge in data acquisition.
\newblock {\em IEEE Signal Processing Magazine\/}~{\em 25\/}(2), 21--30.

\bibitem[\protect\citeauthoryear{Chapelle, Vapnik, and Bengio}{Chapelle
  et~al.}{2002}]{Chapelle2002}
Chapelle, O., V.~Vapnik, and Y.~Bengio (2002).
\newblock Model selection for small sample regression.
\newblock {\em Machine Learning\/}~{\em 48\/}(1), 9--23.

\bibitem[\protect\citeauthoryear{Constantine, Eldred, and Phipps}{Constantine
  et~al.}{2012}]{Constantine2012}
Constantine, P.~G., M.~S. Eldred, and E.~T. Phipps (2012).
\newblock Sparse pseudospectral approximation method.
\newblock {\em Computer Methods in Applied Mechanics and Engineering\/}~{\em
  229}, 1--12.

\bibitem[\protect\citeauthoryear{Crestaux, Ma\^itre, and Martinez}{Crestaux
  et~al.}{2009}]{cremar08}
Crestaux, T., O.~L. Ma\^itre, and J.-M. Martinez (2009).
\newblock Polynomial chaos expansions for sensitivity analysis.
\newblock {\em Reliability Engineering \& System Safety\/}~{\em 94},
  1161--1172.

\bibitem[\protect\citeauthoryear{Demangeon, Goeury, Zaoui, Goutal, Pascual, and
  Hasco{\"e}t}{Demangeon et~al.}{2015}]{demgoe16}
Demangeon, F., C.~Goeury, F.~Zaoui, N.~Goutal, V.~Pascual, and L.~Hasco{\"e}t
  (2015).
\newblock {Algorithmic differentiation applied to the optimal calibration of a
  shallow water model}.
\newblock {\em {La Houille Blanche - Revue internationale de l'eau}\/}~{\em
  102\/}(4), 57--65.

\bibitem[\protect\citeauthoryear{Efron, Hastie, Johnstone, and
  Tibshirani}{Efron et~al.}{2004}]{Efron2004}
Efron, B., T.~Hastie, I.~Johnstone, and R.~Tibshirani (2004).
\newblock Least angle regression.
\newblock {\em Ann. Stat.\/}~{\em 32}, 407--499.

\bibitem[\protect\citeauthoryear{Efron and Stein}{Efron and
  Stein}{1981}]{efrste81}
Efron, B. and C.~Stein (1981).
\newblock The jackknife estimate of variance.
\newblock {\em The Annals of Statistics\/}~{\em 9}, 586--596.

\bibitem[\protect\citeauthoryear{Ernst, Mugler, Starkloff, and Ullmann}{Ernst
  et~al.}{2012}]{Ernst2012}
Ernst, O., A.~Mugler, H.-J. Starkloff, and E.~Ullmann (2012).
\newblock On the convergence of generalized polynomial chaos expansions.
\newblock {\em ESAIM: Mathematical Modelling and Numerical Analysis\/}~{\em
  46\/}(02), 317--339.

\bibitem[\protect\citeauthoryear{Fang, Li, and Sudjianto}{Fang
  et~al.}{2006}]{fanli06}
Fang, K.-T., R.~Li, and A.~Sudjianto (2006).
\newblock {\em Design and modeling for computer experiments}.
\newblock Chapman \& Hall/CRC.

\bibitem[\protect\citeauthoryear{Gejadze, Malaterre, and V.Shutyaev}{Gejadze
  et~al.}{2019}]{gejmal19}
Gejadze, I., P.-O. Malaterre, and V.Shutyaev (2019).
\newblock On the use of derivatives in the polynomial chaos based global
  sensitivity and uncertainty analysis applied to the distributed parameter
  models.
\newblock {\em Journal of Computational Physics\/}~{\em 381}, 218--245.

\bibitem[\protect\citeauthoryear{Ghanem and Spanos}{Ghanem and
  Spanos}{1991}]{Ghanembook1991}
Ghanem, R.~G. and P.~Spanos (1991).
\newblock {\em Stochastic finite elements -- {A} spectral approach}.
\newblock Springer Verlag, New York.
\newblock {(Reedited by Dover Publications, Mineola, 2003)}.

\bibitem[\protect\citeauthoryear{Goutal, Lacombe, Zaoui, and
  El-Kadi-Abderrezak}{Goutal et~al.}{2012}]{Goutal_et_al_2012}
Goutal, N., J.-M. Lacombe, F.~Zaoui, and K.~El-Kadi-Abderrezak (2012,
  september).
\newblock {MASCARET: a 1-D} open-source software for flow hydrodynamic and
  water quality in open channel networks.
\newblock In R.~Murillo Mu\~noz (Ed.), {\em River Flow 2012: Proceedings of the
  International Conference on Fluvial Hydraulics}, Volume~2, San Jos\'e, Costa
  Rica, pp.\  1169--1174. CRC Press.

\bibitem[\protect\citeauthoryear{Griewank and Walther}{Griewank and
  Walther}{2008}]{griwal08}
Griewank, A. and A.~Walther (2008).
\newblock {\em Evaluating derivatives: {P}rinciples and techniques of automatic
  differentiation}.
\newblock SIAM Philadelphia.

\bibitem[\protect\citeauthoryear{Guo, Narayan, and Zhou}{Guo
  et~al.}{2018}]{Guo2018}
Guo, L., A.~Narayan, and T.~Zhou (2018).
\newblock A gradient enhanced $\ell^1$-minimization for sparse approximation of
  polynomial chaos expansions.
\newblock {\em Journal of Computational Physics.\/}~{\em 367}, 49--64.

\bibitem[\protect\citeauthoryear{Hasco\"et and Pascual}{Hasco\"et and
  Pascual}{2013}]{haspal13}
Hasco\"et, L. and V.~Pascual (2013).
\newblock The {Tapenade} automatic differentiation tool: Principles, model and
  specification.
\newblock {\em ACM Transactions on Mathematical Software\/}~{\em 39\/}(3),
  1--43.

\bibitem[\protect\citeauthoryear{Hoeffding}{Hoeffding}{1948}]{Hoeffding_1948}
Hoeffding, W. (1948).
\newblock A class of statistics with asymptotically normal distributions.
\newblock {\em Annals of Mathematical Statistics\/}~{\em 19}, 293--325.

\bibitem[\protect\citeauthoryear{Homma and Saltelli}{Homma and
  Saltelli}{1996}]{homsal96}
Homma, T. and A.~Saltelli (1996).
\newblock Importance measures in global sensitivity analysis of non linear
  models.
\newblock {\em Reliability Engineering and System Safety\/}~{\em 52}, 1--17.

\bibitem[\protect\citeauthoryear{Iooss and Lema\^{\i}tre}{Iooss and
  Lema\^{\i}tre}{2015}]{ioolem15}
Iooss, B. and P.~Lema\^{\i}tre (2015).
\newblock A review on global sensitivity analysis methods.
\newblock In C.~Meloni and G.~Dellino (Eds.), {\em Uncertainty management in
  Simulation-Optimization of Complex Systems: Algorithms and Applications},
  pp.\  101--122. Springer.

\bibitem[\protect\citeauthoryear{Jakeman, Eldred, and Sargsyan}{Jakeman
  et~al.}{2015}]{Jakeman2015}
Jakeman, J.~D., M.~S. Eldred, and K.~Sargsyan (2015).
\newblock Enhancing {$\ell_1$}-minimization estimates of polynomial chaos
  expansions using basis selection.
\newblock {\em Journal of Computational Physics.\/}~{\em 289}, 18--34.

\bibitem[\protect\citeauthoryear{Kougioumtzoglou, Petromichelakis, and
  Psaros}{Kougioumtzoglou et~al.}{2020}]{Kougioumtzoglou2020}
Kougioumtzoglou, I.~A., I.~Petromichelakis, and A.~F. Psaros (2020).
\newblock Sparse representations and compressive sampling approaches in
  engineering mechanics: {A} review of theoretical concepts and diverse
  applications.
\newblock {\em Probabilistic Engineering Mechanics\/}~{\em 61}, 103082.

\bibitem[\protect\citeauthoryear{Kucherenko and Iooss}{Kucherenko and
  Iooss}{2017}]{kucioo17}
Kucherenko, S. and B.~Iooss (2017).
\newblock Derivative-based global sensitivity measures.
\newblock In R.~Ghanem, D.~Higdon, and H.~Owhadi (Eds.), {\em Springer Handbook
  on {Uncertainty Quantification}}, pp.\  1241--1263. Springer.

\bibitem[\protect\citeauthoryear{Kucherenko, Rodriguez-Fernandez, Pantelides,
  and Shah}{Kucherenko et~al.}{2009}]{Kucherenko2009}
Kucherenko, S., M.~Rodriguez-Fernandez, C.~Pantelides, and N.~Shah (2009).
\newblock {M}onte {C}arlo evaluation of derivative-based global sensitivity
  measures.
\newblock {\em Reliab. Eng. Syst. Safe.\/}~{\em 94}, 1135--1148.

\bibitem[\protect\citeauthoryear{Kwon and Lee}{Kwon and Lee}{2003}]{Kwon2003}
Kwon, K.~H. and D.~Lee (2003).
\newblock Orthogonal functions satisfying a second-order differential equation.
\newblock {\em Journal of Computational and Applied Mathematics\/}~{\em
  153\/}(1-2), 283--293.

\bibitem[\protect\citeauthoryear{Lamboni, Iooss, Popelin, and Gamboa}{Lamboni
  et~al.}{2013}]{Lamboni_et_al_2013}
Lamboni, M., B.~Iooss, A.-L. Popelin, and F.~Gamboa (2013).
\newblock Derivative-based global sensitivity measures: {G}eneral links with
  {S}obol' indices and numerical tests.
\newblock {\em Mathematics and Computers in Simulation\/}~{\em 87}, 45--54.

\bibitem[\protect\citeauthoryear{{Le Gratiet}, Marelli, and Sudret}{{Le
  Gratiet} et~al.}{2017}]{LeGratiet2017}
{Le Gratiet}, L., S.~Marelli, and B.~Sudret (2017).
\newblock Metamodel-based sensitivity analysis: {P}olynomial chaos expansions
  and {G}aussian processes.
\newblock In R.~Ghanem, D.~Higdon, and H.~Owhadi (Eds.), {\em Springer Handbook
  on {Uncertainty Quantification}}, pp.\  1289--1325. Springer.

\bibitem[\protect\citeauthoryear{Le~Ma{\i}tre, Reagan, Najm, Ghanem, and
  Knio}{Le~Ma{\i}tre et~al.}{2002}]{LeMaitre2002}
Le~Ma{\i}tre, O.~P., M.~T. Reagan, H.~N. Najm, R.~G. Ghanem, and O.~M. Knio
  (2002).
\newblock A stochastic projection method for fluid flow: {II}. {R}andom
  process.
\newblock {\em Journal of Computational Physics\/}~{\em 181\/}(1), 9--44.

\bibitem[\protect\citeauthoryear{Li, Anitescu, Roderick, and Hickernell}{Li
  et~al.}{2011}]{liani11}
Li, Y., M.~Anitescu, O.~Roderick, and F.~Hickernell (2011).
\newblock Orthogonal bases for polynomial regression with derivative
  information in uncertainty quantification.
\newblock {\em International Journal for Uncertainty Quantification\/}~{\em 1},
  297--320.

\bibitem[\protect\citeauthoryear{L\"uthen, Marelli, and Sudret}{L\"uthen
  et~al.}{2021}]{LuethenSIAMJUQ2020}
L\"uthen, N., S.~Marelli, and B.~Sudret (2021).
\newblock Sparse polynomial chaos expansions: {L}iterature survey and
  benchmark.
\newblock {\em SIAM/ASA Journal on Uncertainty Quantification\/}~{\em 9\/}(2),
  593--649.

\bibitem[\protect\citeauthoryear{L\"uthen, Marelli, and Sudret}{L\"uthen
  et~al.}{2022}]{LuethenIJUQ2022}
L\"uthen, N., S.~Marelli, and B.~Sudret (2022).
\newblock Automatic selection of basis-adaptive sparse polynomial chaos
  expansions for engineering applications.
\newblock {\em International Journal for Uncertainty Quantification\/}.
\newblock (in press).

\bibitem[\protect\citeauthoryear{Marelli, L\"uthen, and Sudret}{Marelli
  et~al.}{2021}]{UQdocPCE}
Marelli, S., N.~L\"uthen, and B.~Sudret (2021).
\newblock {UQLab} user manual -- {P}olynomial chaos expansions.
\newblock Technical report, Chair of Risk, Safety and Uncertainty
  Quantification, ETH Zurich, Switzerland.
\newblock Report \# UQLab-V1.4-104.

\bibitem[\protect\citeauthoryear{Marelli and Sudret}{Marelli and
  Sudret}{2014}]{MarelliUQLab2014}
Marelli, S. and B.~Sudret (2014).
\newblock {UQLab}: A framework for uncertainty quantification in {Matlab}.
\newblock In {\em Vulnerability, Uncertainty, and Risk (Proc. 2nd Int. Conf. on
  Vulnerability, Risk Analysis and Management {(ICVRAM2014)}, Liverpool, United
  Kingdom)}, pp.\  2554--2563.

\bibitem[\protect\citeauthoryear{Matthies and Keese}{Matthies and
  Keese}{2005}]{Matthies2005}
Matthies, H.~G. and A.~Keese (2005).
\newblock Galerkin methods for linear and nonlinear elliptic stochastic partial
  differential equations.
\newblock {\em Computer Methods in Applied Mechanics and Engineering\/}~{\em
  194\/}(12-16), 1295--1331.

\bibitem[\protect\citeauthoryear{McKay, Beckman, and Conover}{McKay
  et~al.}{1979}]{McKay1979}
McKay, M.~D., R.~J. Beckman, and W.~J. Conover (1979).
\newblock A comparison of three methods for selecting values of input variables
  in the analysis of output from a computer code.
\newblock {\em Technometrics\/}~{\em 2}, 239--245.

\bibitem[\protect\citeauthoryear{Mikolas}{Mikolas}{1955}]{mikolas1955}
Mikolas, M. (1955).
\newblock {\"U}ber gewisse {E}igenschaften orthogonaler {S}ysteme der {K}lasse
  ${L}^2$ und die {E}igenfunktionen {S}turm-{L}iouvillescher
  {D}ifferentialgleichungen.
\newblock {\em Acta Mathematica Academiae Scientiarum Hungaricae\/}~{\em 6},
  147--190.

\bibitem[\protect\citeauthoryear{Oladyshkin and Nowak}{Oladyshkin and
  Nowak}{2012}]{Oladyshkin2012}
Oladyshkin, S. and W.~Nowak (2012).
\newblock Data-driven uncertainty quantification using the arbitrary polynomial
  chaos expansion.
\newblock {\em Reliability Engineering and System Safety\/}~{\em 106},
  179--190.

\bibitem[\protect\citeauthoryear{Peng, Hampton, and Doostan}{Peng
  et~al.}{2016}]{penham16}
Peng, J., J.~Hampton, and A.~Doostan (2016).
\newblock On polynomial chaos expansion via gradient-enhanced
  $l_1$-minimization.
\newblock {\em Journal of Computational Physics\/}~{\em 310}, 440--458.

\bibitem[\protect\citeauthoryear{Petit, Zaoui, Popelin, Goeury, and
  Goutal}{Petit et~al.}{2016}]{Petit2016}
Petit, S., F.~Zaoui, A.-L. Popelin, C.~Goeury, and N.~Goutal (2016).
\newblock Couplage entre indices \`a base de d\'eriv\'ees et mode adjoint pour
  l'analyse de sensibilit\'e globale. {A}pplication sur le code {M}ascaret.
\newblock Technical report, EDF R\&D, France.
\newblock Preprint, https://hal.archives-ouvertes.fr/hal-01373535.

\bibitem[\protect\citeauthoryear{Prieur and Tarantola}{Prieur and
  Tarantola}{2017}]{pritar17}
Prieur, C. and S.~Tarantola (2017).
\newblock Variance-based sensitivity analysis: {T}heory and estimation
  algorithms.
\newblock In R.~Ghanem, D.~Higdon, and H.~Owhadi (Eds.), {\em Springer Handbook
  on {Uncertainty Quantification}}, pp.\  1217--1239. Springer.

\bibitem[\protect\citeauthoryear{Razavi, Jakeman, Saltelli, Prieur, Iooss,
  Borgonovo, Plischke, Lo~Piano, Iwanaga, Becker, Tarantola, Guillaume,
  Jakeman, Gupta, Melillo, Rabiti, Chabridon, Duan, Sun, Smith, Sheikholeslami,
  Hosseini, Asadzadeh, Puy, Kucherenko, and Maier}{Razavi
  et~al.}{2021}]{razjak21}
Razavi, S., A.~Jakeman, A.~Saltelli, C.~Prieur, B.~Iooss, E.~Borgonovo,
  E.~Plischke, S.~Lo~Piano, T.~Iwanaga, W.~Becker, S.~Tarantola, J.~Guillaume,
  J.~Jakeman, H.~Gupta, N.~Melillo, G.~Rabiti, V.~Chabridon, Q.~Duan, X.~Sun,
  S.~Smith, R.~Sheikholeslami, N.~Hosseini, M.~Asadzadeh, A.~Puy,
  S.~Kucherenko, and H.~Maier (2021).
\newblock The future of sensitivity analysis: An essential discipline for
  systems modelling and policy making.
\newblock {\em Environmental Modelling and Software\/}~{\em 137\/}(104954).

\bibitem[\protect\citeauthoryear{Roderick, Anitescu, and Fischer}{Roderick
  et~al.}{2010}]{rodani10}
Roderick, O., M.~Anitescu, and P.~Fischer (2010).
\newblock Polynomial regression approaches using derivative information for
  uncertainty quantification.
\newblock {\em Nuclear Science and Engineering\/}~{\em 164}, 122--139.

\bibitem[\protect\citeauthoryear{Roustant, Barthe, and Iooss}{Roustant
  et~al.}{2017}]{roubar17}
Roustant, O., F.~Barthe, and B.~Iooss (2017).
\newblock Poincar\'e inequalities on intervals - application to sensitivity
  analysis.
\newblock {\em Electronic Journal of Statistics\/}~{\em 2}, 3081--3119.

\bibitem[\protect\citeauthoryear{Roustant, Gamboa, and Iooss}{Roustant
  et~al.}{2020}]{rougam20}
Roustant, O., F.~Gamboa, and B.~Iooss (2020).
\newblock Parseval inequalities and lower bounds for variance-based sensitivity
  indices.
\newblock {\em Electronic Journal of Statistics\/}~{\em 14}, 386--412.

\bibitem[\protect\citeauthoryear{Simon}{Simon}{2010}]{Sim2010}
Simon, B. (2010).
\newblock {\em Szeg{\H{o}}'s theorem and its descendants: spectral theory for
  L2 perturbations of orthogonal polynomials}.
\newblock Princeton University Press.

\bibitem[\protect\citeauthoryear{Smith}{Smith}{2014}]{smi14}
Smith, R. (2014).
\newblock {\em Uncertainty quantification}.
\newblock SIAM.

\bibitem[\protect\citeauthoryear{Sobol'}{Sobol'}{1993}]{sob93}
Sobol', I. (1993).
\newblock Sensitivity estimates for non linear mathematical models.
\newblock {\em Mathematical Modelling and Computational Experiments\/}~{\em 1},
  407--414.

\bibitem[\protect\citeauthoryear{Sobol and Gresham}{Sobol and
  Gresham}{1995}]{Sobol_Gresham_1995}
Sobol, I. and A.~Gresham (1995).
\newblock On an alternative global sensitivity estimators.
\newblock In {\em Proceedings of SAMO 1995}, Belgirate, pp.\  40--42.

\bibitem[\protect\citeauthoryear{Sobol' and Kucherenko}{Sobol' and
  Kucherenko}{2009}]{Sobol2009}
Sobol', I.~M. and S.~Kucherenko (2009).
\newblock Derivative based global sensitivity measures and their link with
  global sensitivity indices.
\newblock {\em Math. Comput. Simul.\/}~{\em 79\/}(10), 3009--3017.

\bibitem[\protect\citeauthoryear{Sudret}{Sudret}{2006}]{SudretCSM2006}
Sudret, B. (2006).
\newblock Global sensitivity analysis using polynomial chaos expansions.
\newblock In P.~Spanos and G.~Deodatis (Eds.), {\em Proc. 5th~Int. Conf. on
  Comp. Stoch. Mech (CSM5), Rhodos, Greece, June 21-23}.

\bibitem[\protect\citeauthoryear{Sudret}{Sudret}{2008}]{sud08}
Sudret, B. (2008).
\newblock Global sensitivity analysis using polynomial chaos expansion.
\newblock {\em Reliability Engineering and System Safety\/}~{\em 93}, 964--979.

\bibitem[\protect\citeauthoryear{{Sudret}, {Berveiller}, and
  {Lemaire}}{{Sudret} et~al.}{2006}]{SudretREEF2006}
{Sudret}, B., M.~{Berveiller}, and M.~{Lemaire} (2006).
\newblock A stochastic finite element procedure for moment and reliability
  analysis.
\newblock {\em Eur. J. Comput. Mech.\/}~{\em 15\/}(7-8), 825--866.

\bibitem[\protect\citeauthoryear{Sudret and Mai}{Sudret and
  Mai}{2015}]{sudmai15}
Sudret, B. and C.-V. Mai (2015).
\newblock Computing derivative-based global sensitivity measures using
  polynomial chaos expansions.
\newblock {\em Reliability Engineering and System Safety\/}~{\em 134},
  241--250.

\bibitem[\protect\citeauthoryear{Szeg\"o}{Szeg\"o}{1939}]{Szeg1939}
Szeg\"o, G. (1939).
\newblock {\em Orthogonal polynomials}, Volume~23.
\newblock American Mathematical Soc.

\bibitem[\protect\citeauthoryear{Torre, Marelli, Embrechts, and Sudret}{Torre
  et~al.}{2019}]{TorreJCP2019}
Torre, E., S.~Marelli, P.~Embrechts, and B.~Sudret (2019).
\newblock Data-driven polynomial chaos expansion for machine learning
  regression.
\newblock {\em Journal of Computational Physics\/}~{\em 388}, 601--623.

\bibitem[\protect\citeauthoryear{Wiener}{Wiener}{1938}]{Wiener1938}
Wiener, N. (1938).
\newblock The homogeneous chaos.
\newblock {\em American Journal of Mathematics\/}~{\em 60}, 897--936.

\bibitem[\protect\citeauthoryear{Xiu and Karniadakis}{Xiu and
  Karniadakis}{2002}]{Xiu2002}
Xiu, D. and G.~E. Karniadakis (2002).
\newblock {The Wiener-Askey polynomial chaos for stochastic differential
  equations}.
\newblock {\em SIAM Journal on Scientific Computing\/}~{\em 24\/}(2), 619--644.

\bibitem[\protect\citeauthoryear{Zettl}{Zettl}{2010}]{zettl2010sturm}
Zettl, A. (2010).
\newblock {\em Sturm-{L}iouville theory}.
\newblock Number 121. American Mathematical Society.

\end{thebibliography}

\clearpage

\appendix

\section{Additional results}
In \crefrange{fig:flood_Sobol_firstorder_RegVsMC_appendix}{fig:flood_unnormalized_Sobol_total_appendix}, we show additional results for the dyke cost model, namely Sobol' index estimates (normalized and unnormalized) for the remaining five input variables.
For the corresponding discussion, see \cref{sec:dyke_cost}.

\label{app:additional}
\begin{figure}[htbp]
	\centering
	{\includegraphics[width=.3\textwidth]{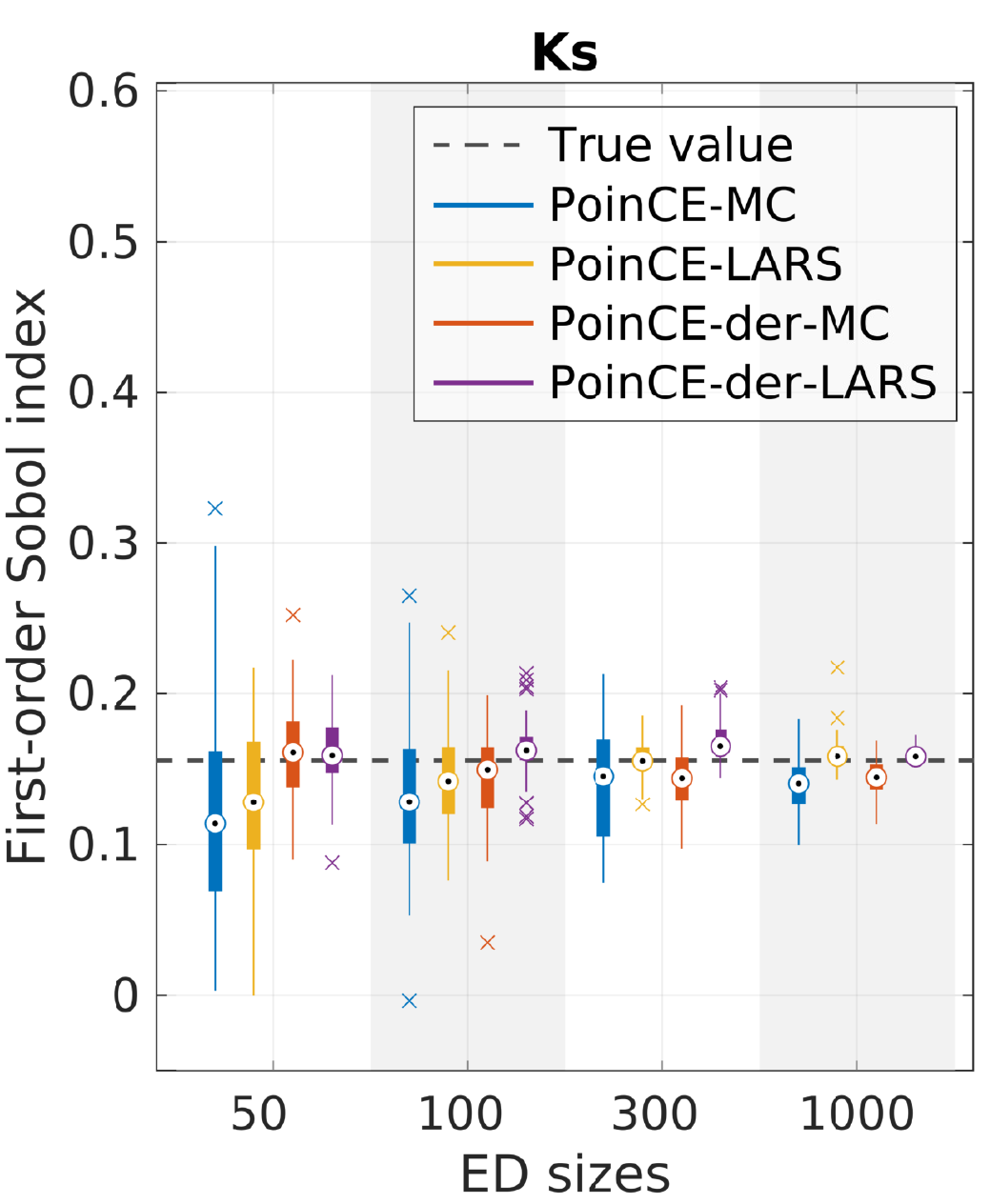}}
	\hfill
	{\includegraphics[width=.3\textwidth]{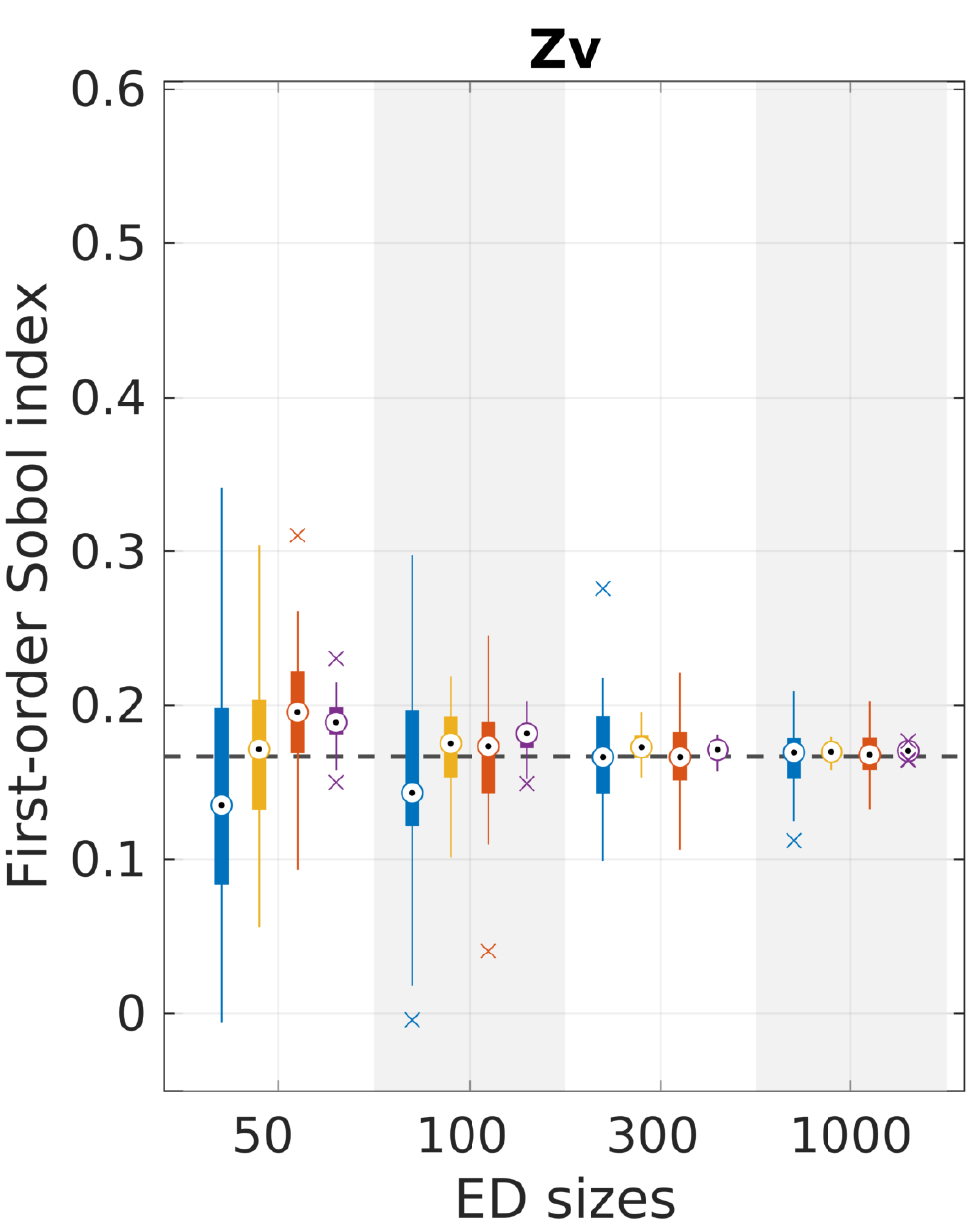}}
	\hfill
	{\includegraphics[width=.3\textwidth]{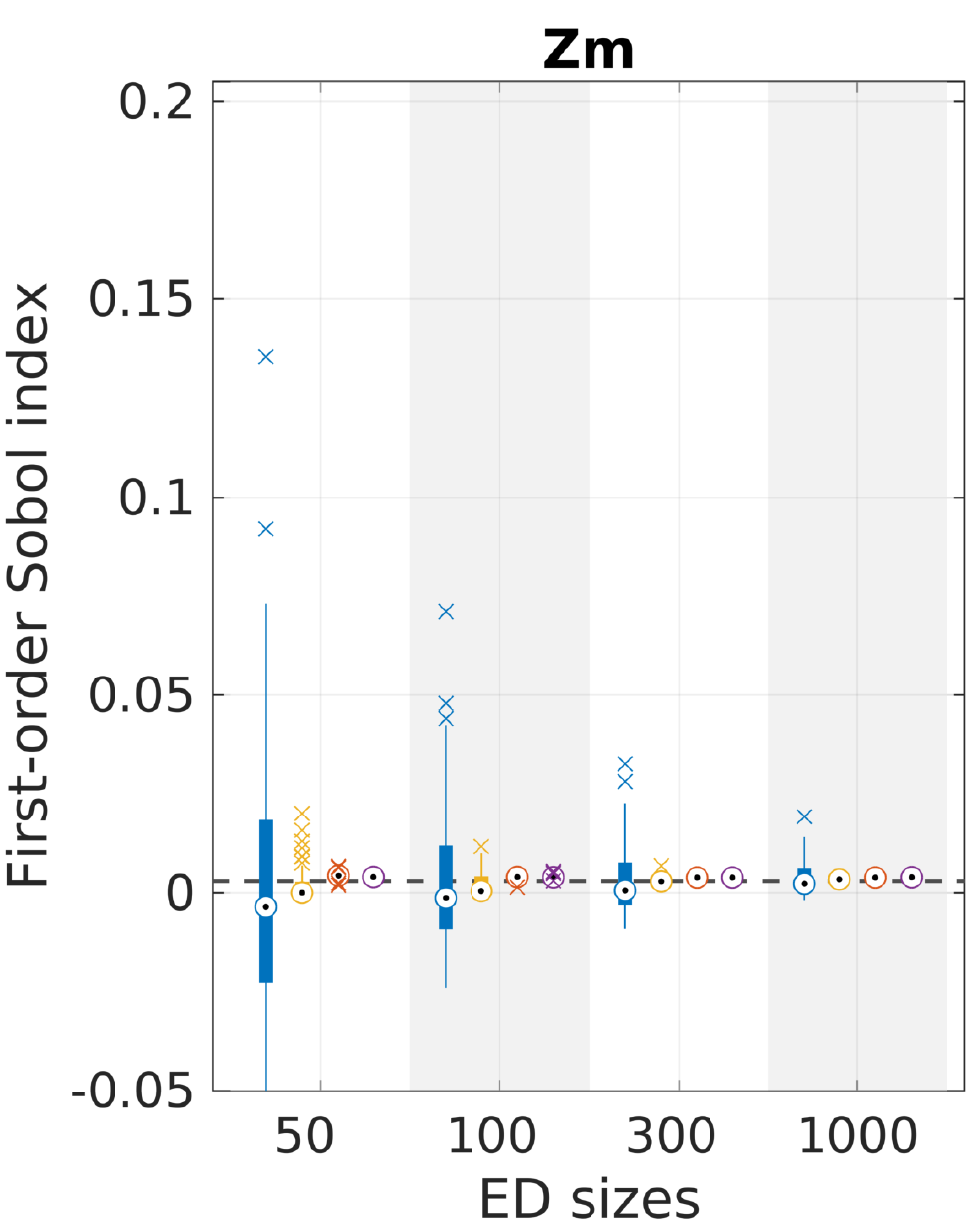}}
	\\
	{\includegraphics[width=.3\textwidth]{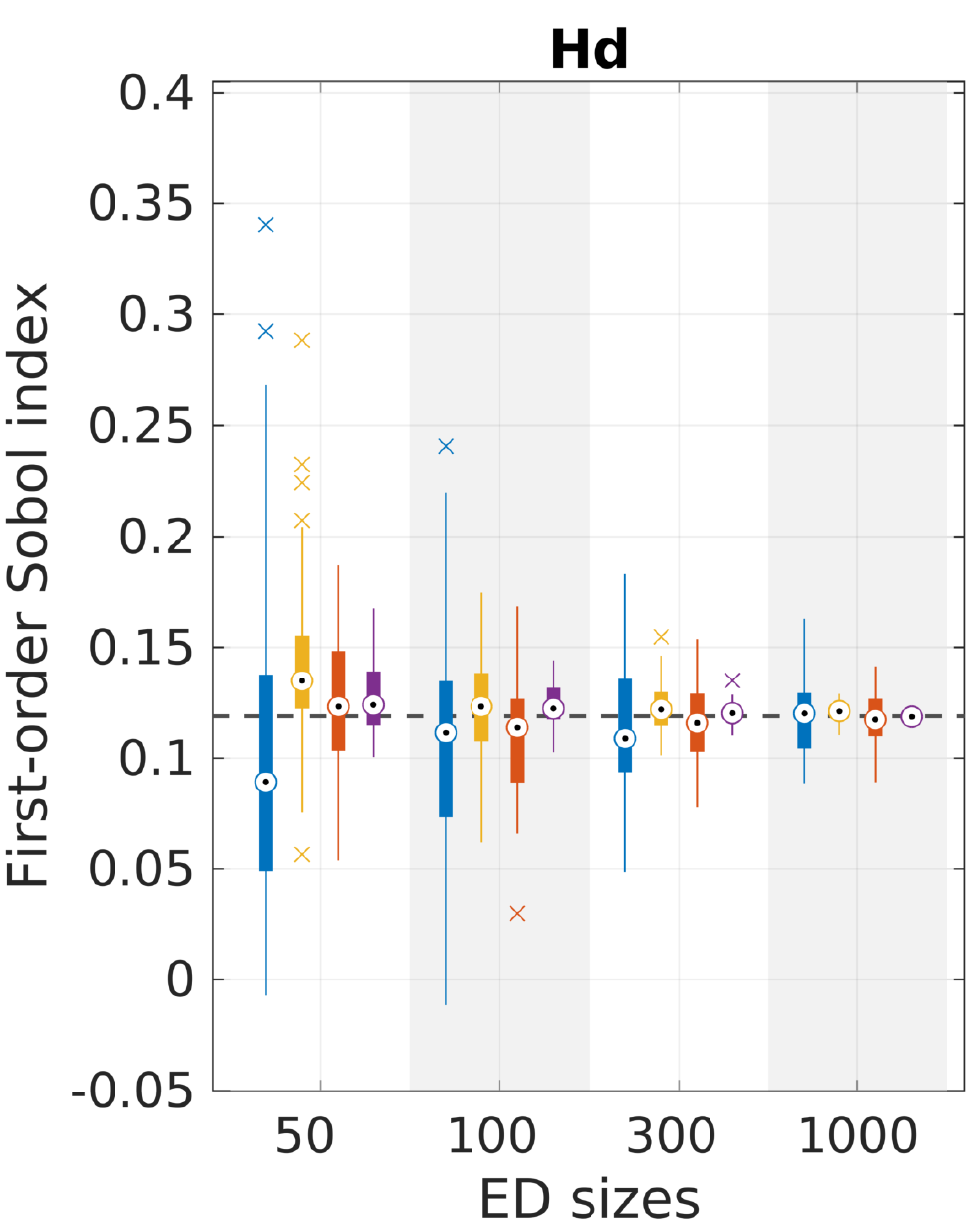}}
	\hspace{.05\textwidth}
	{\includegraphics[width=.3\textwidth]{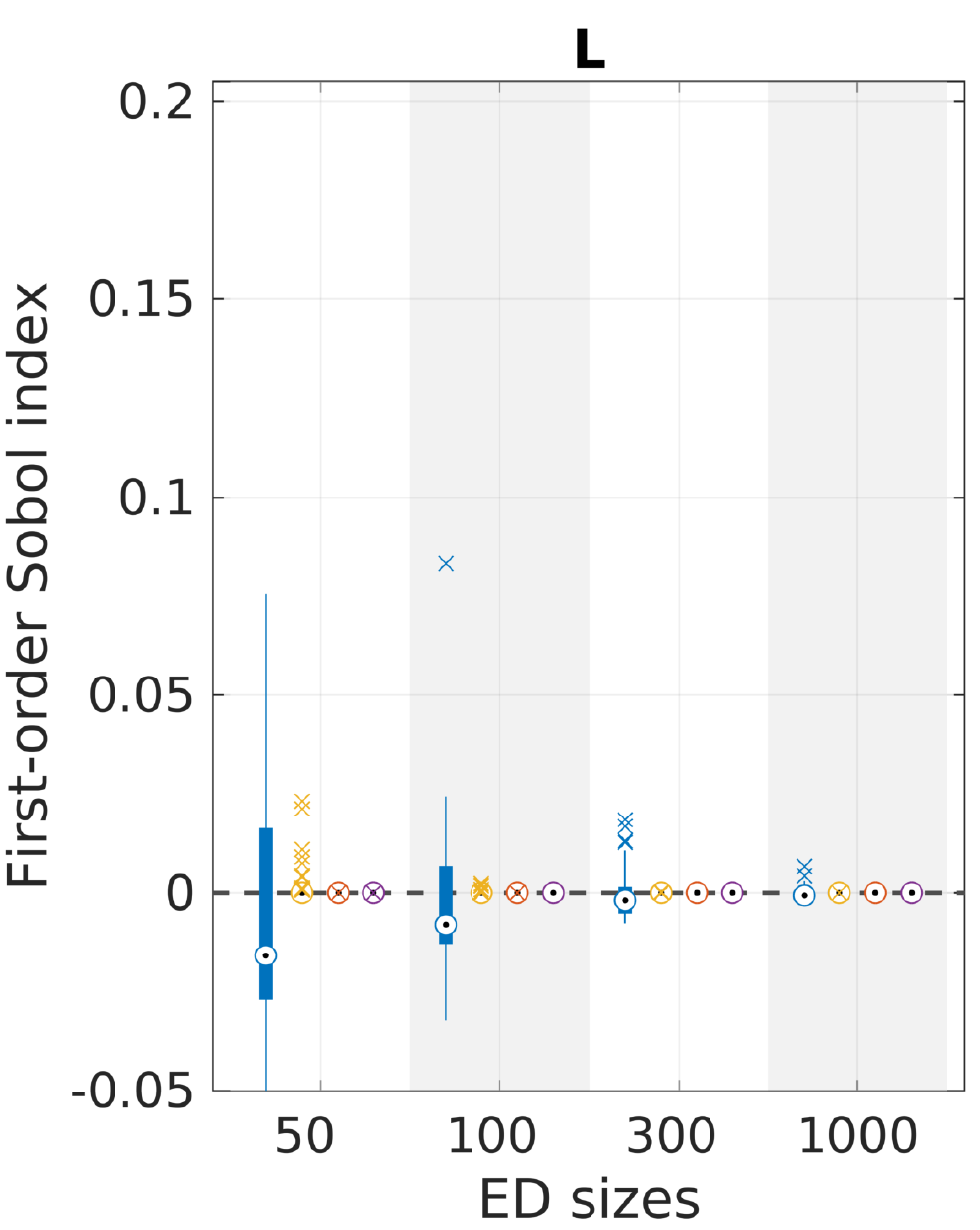}}
	\caption{Comparison of PoinCE estimates of {first-order Sobol' indices} for the dyke cost model. Degree $p = 2$ for the MC-based estimates and $p \leq 5$ (degree-adaptive) for the regression-based estimates. See also \cref{fig:flood_Sobol_firstorder_RegVsMC} in the main part of the paper.}
	\label{fig:flood_Sobol_firstorder_RegVsMC_appendix}
\end{figure}

\begin{figure}[htbp]
	\centering
	{\includegraphics[width=.3\textwidth]{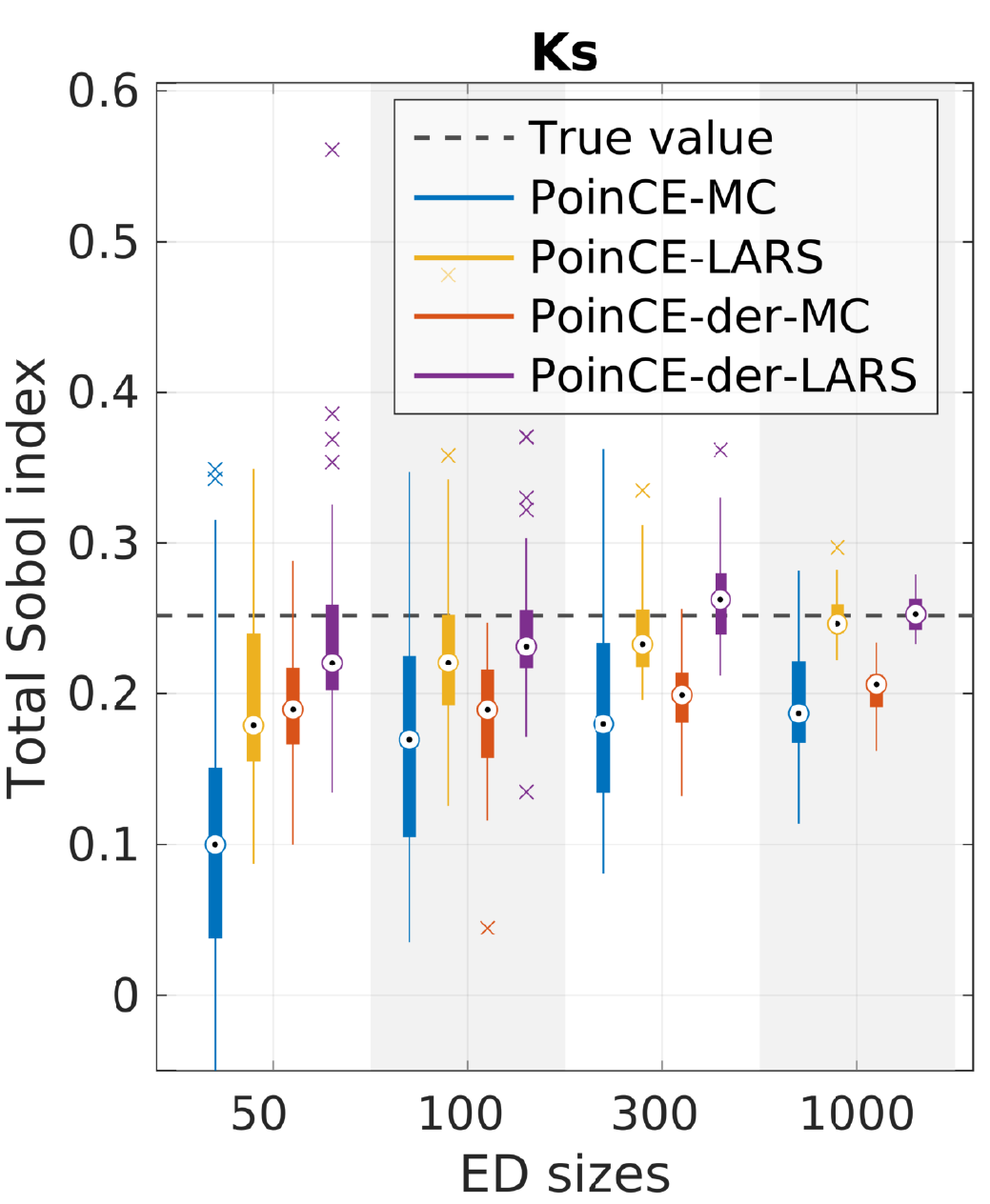}}
	\hfill
	{\includegraphics[width=.3\textwidth]{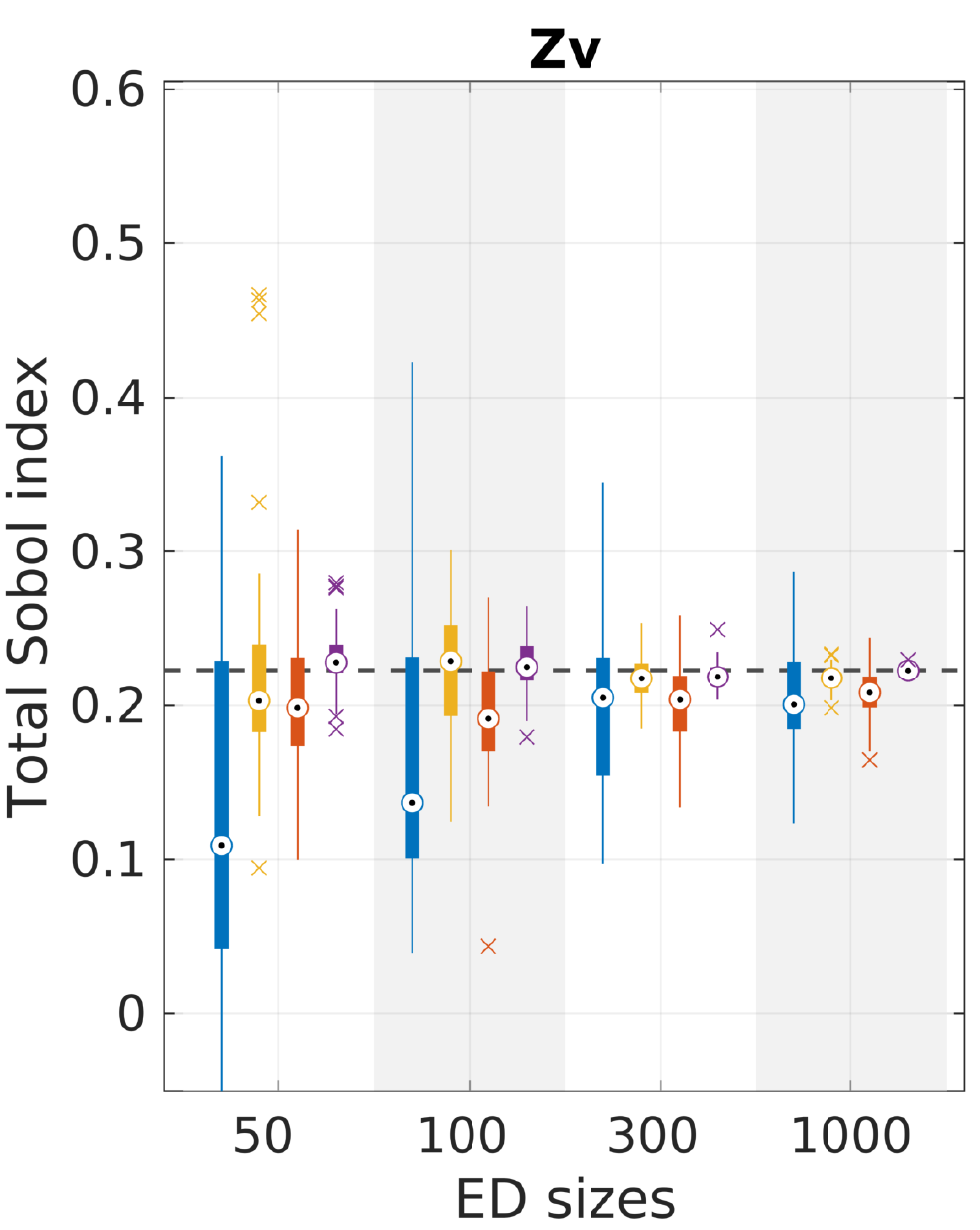}}
	\hfill
	{\includegraphics[width=.3\textwidth]{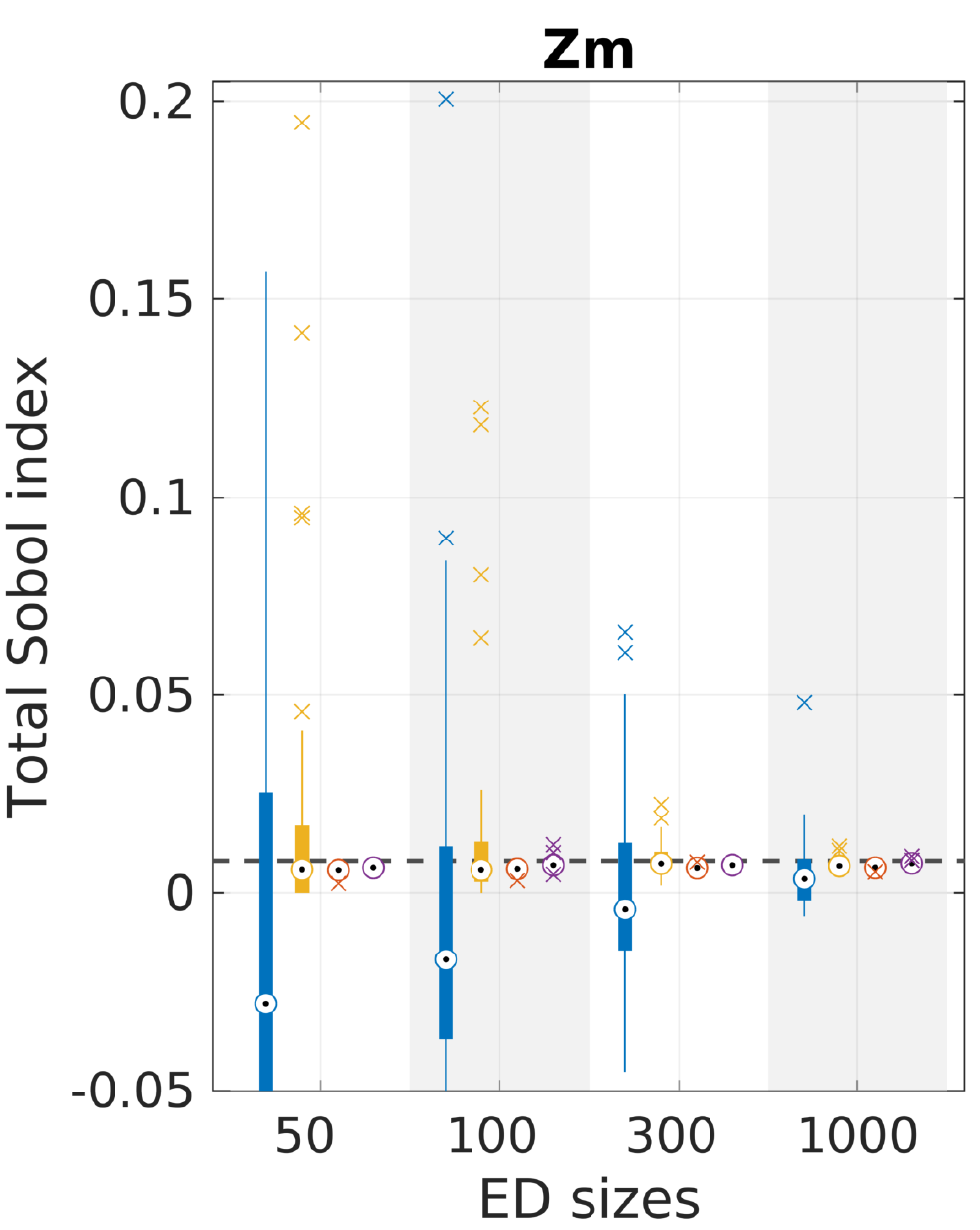}}
	\\
	{\includegraphics[width=.3\textwidth]{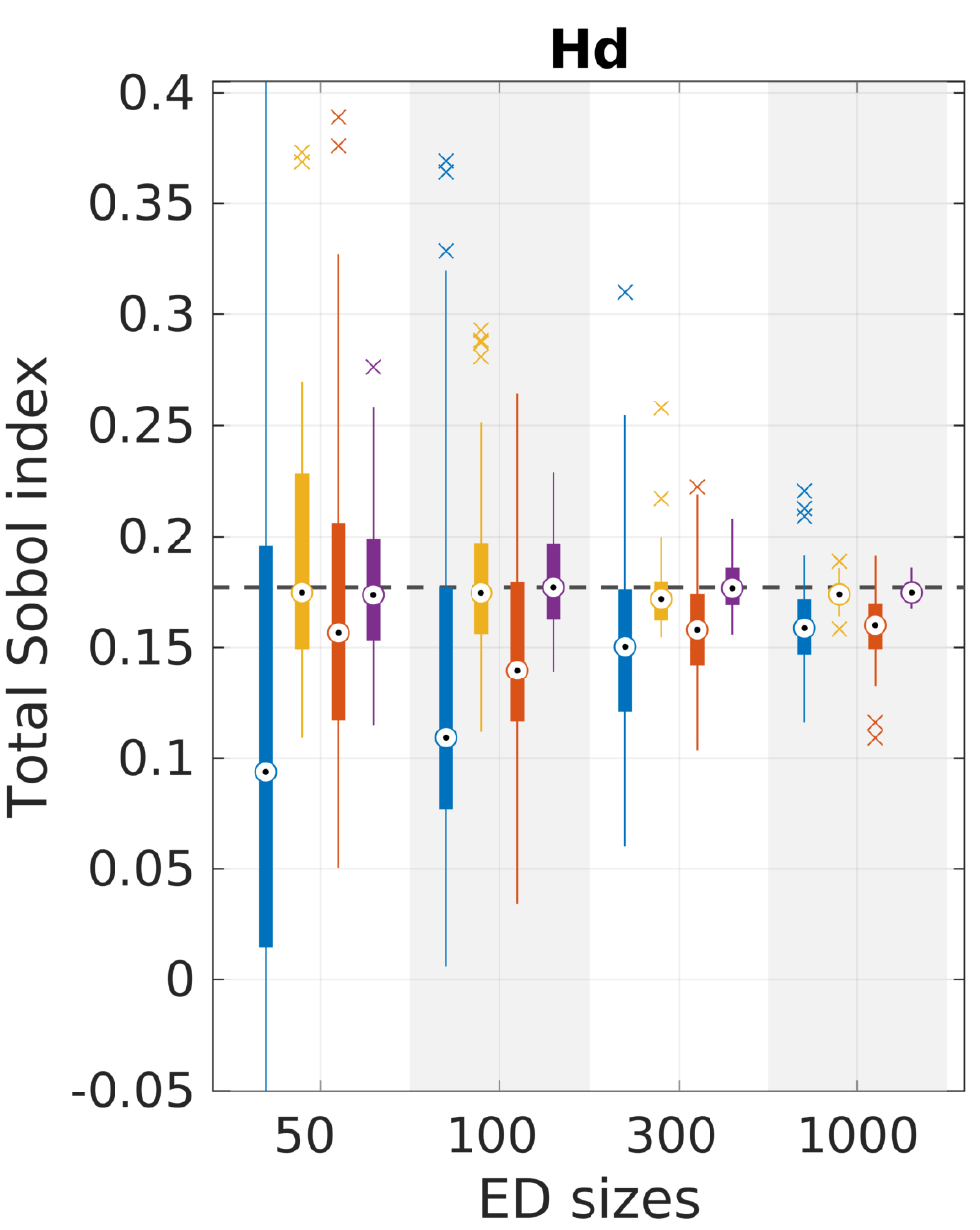}}
	\hspace{.05\textwidth}
	{\includegraphics[width=.3\textwidth]{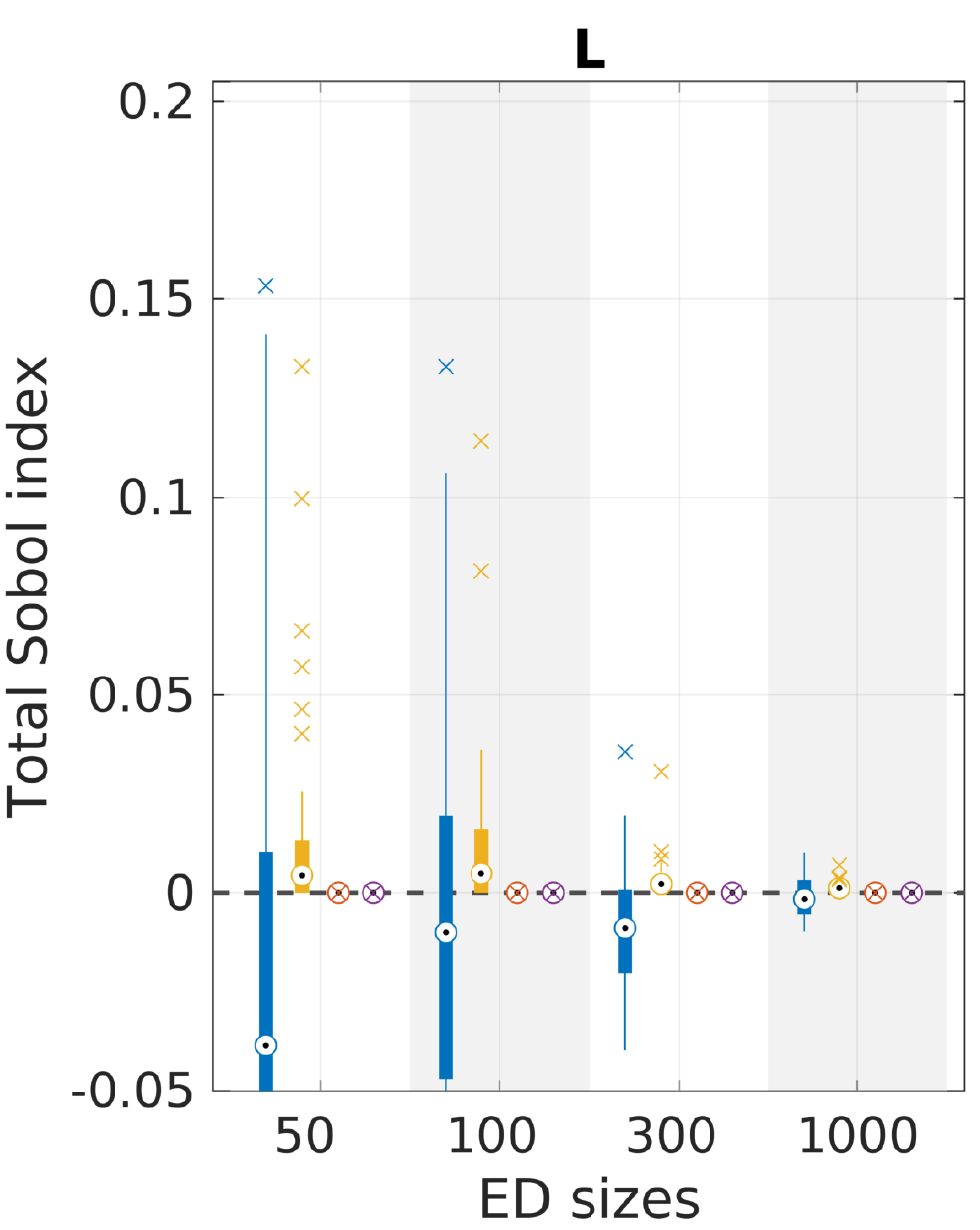}}
	\caption{Comparison of PoinCE estimates of {total Sobol' indices} for the dyke cost model. Degree $p = 2$ for the MC-based estimates and $p \leq 5$ (degree-adaptive) for the regression-based estimates. See also \cref{fig:flood_Sobol_total_RegVsMC} in the main part of the paper.}
	\label{fig:flood_Sobol_total_RegVsMC_appendix}
\end{figure}

\begin{figure}[htbp]
	\centering
	{\includegraphics[width=.3\textwidth]
		{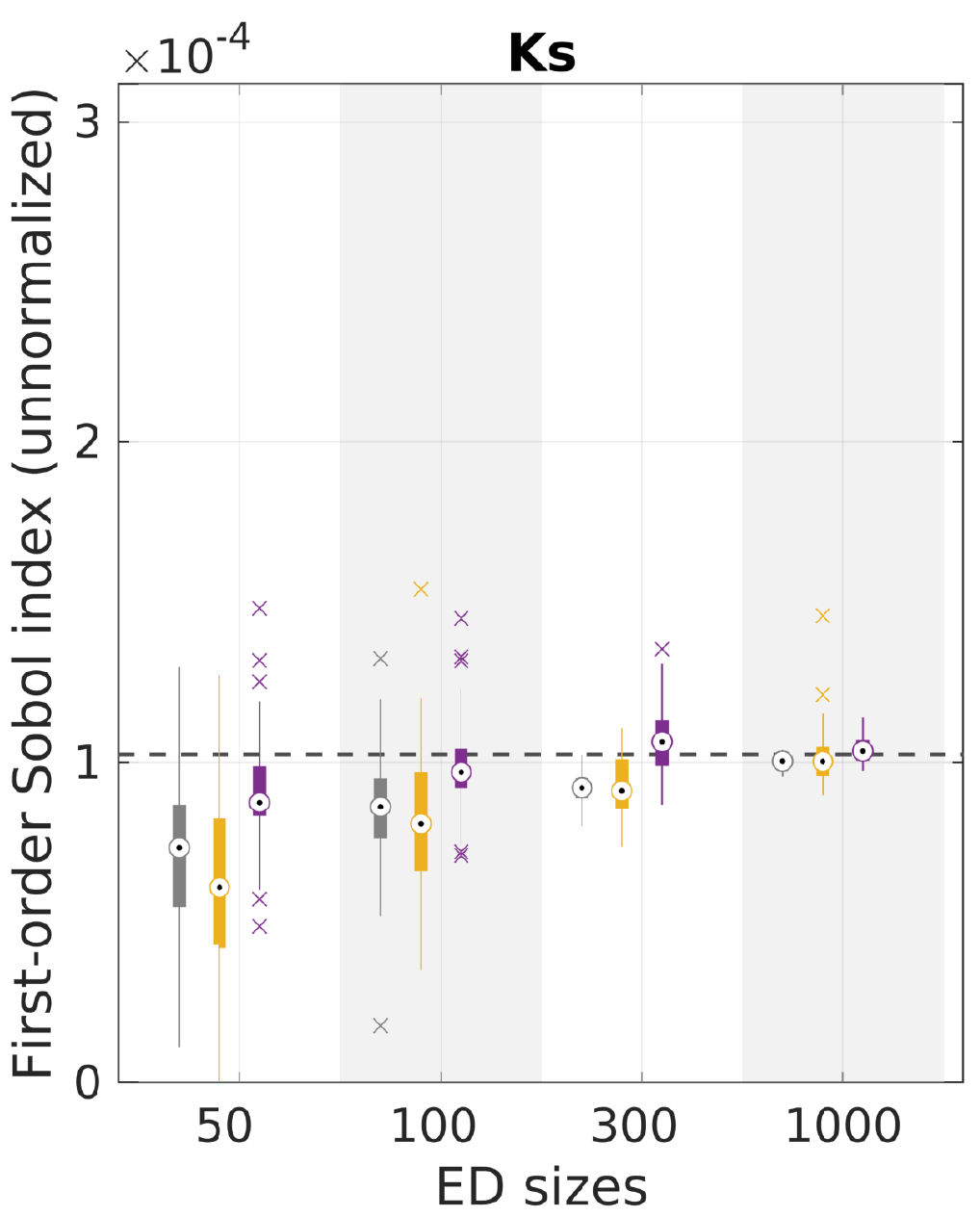}}
	\hfill
	{\includegraphics[width=.3\textwidth]
		{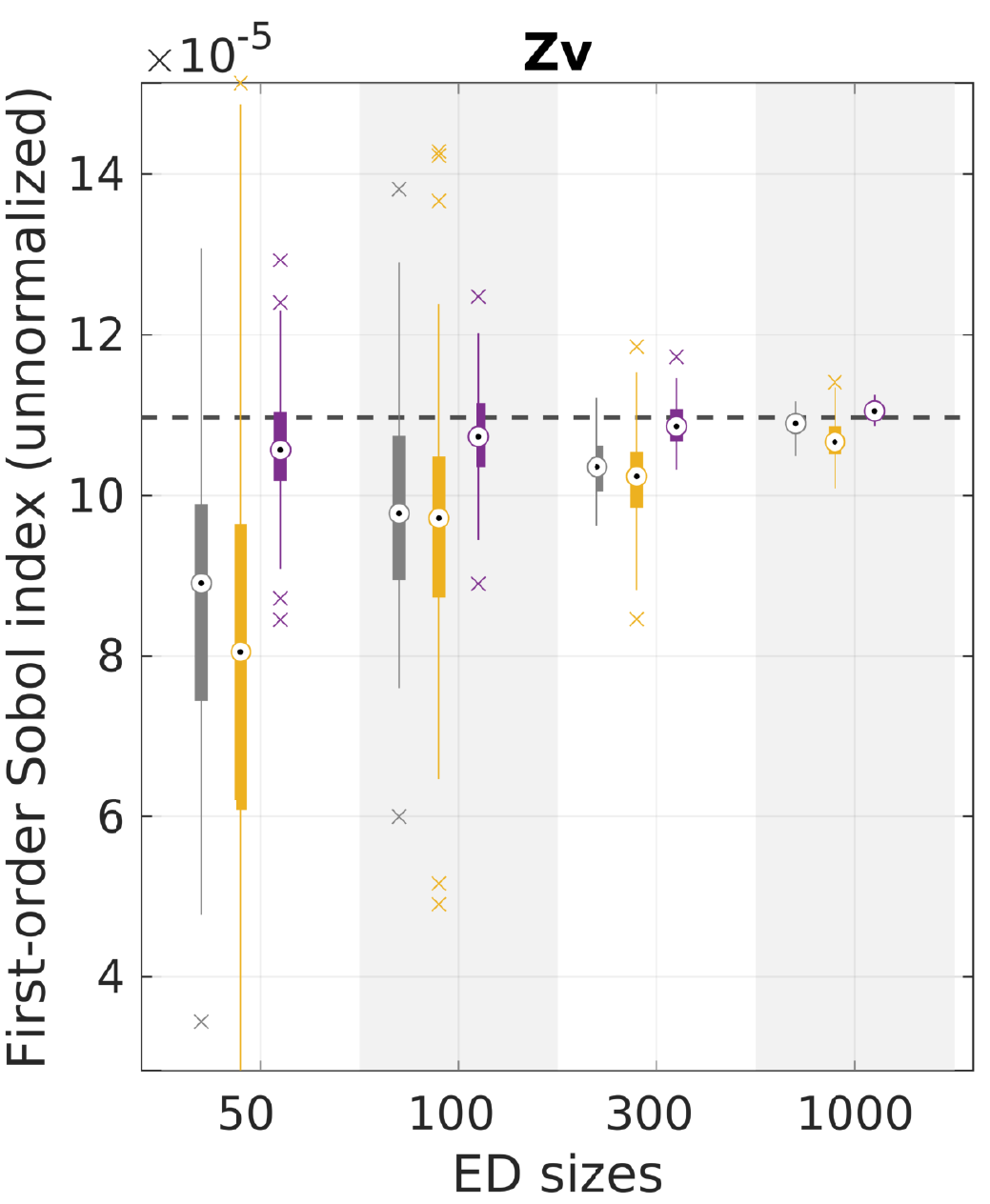}}
	\hfill
	{\includegraphics[width=.3\textwidth]
		{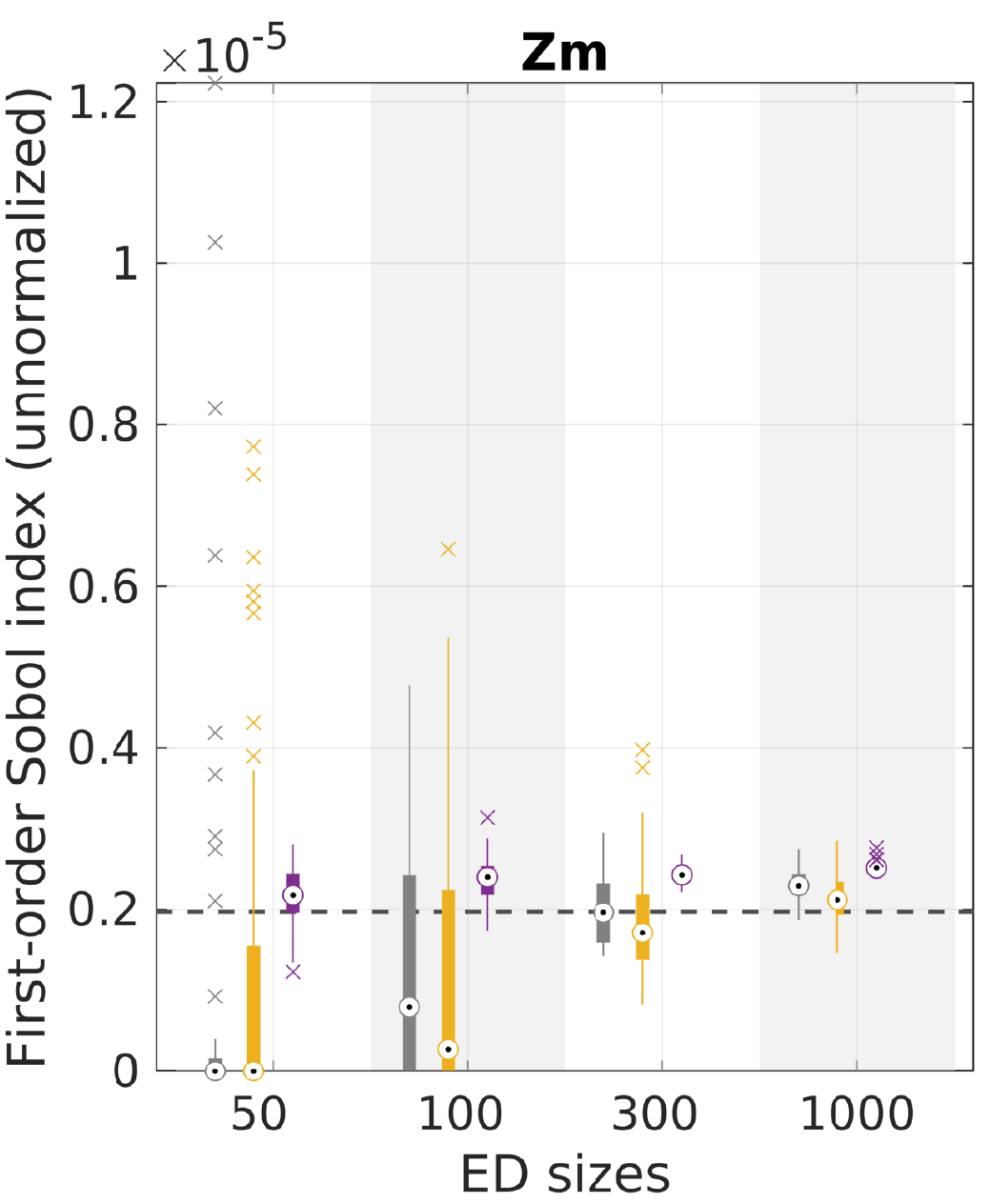}}
	\\
	{\includegraphics[width=.3\textwidth]
		{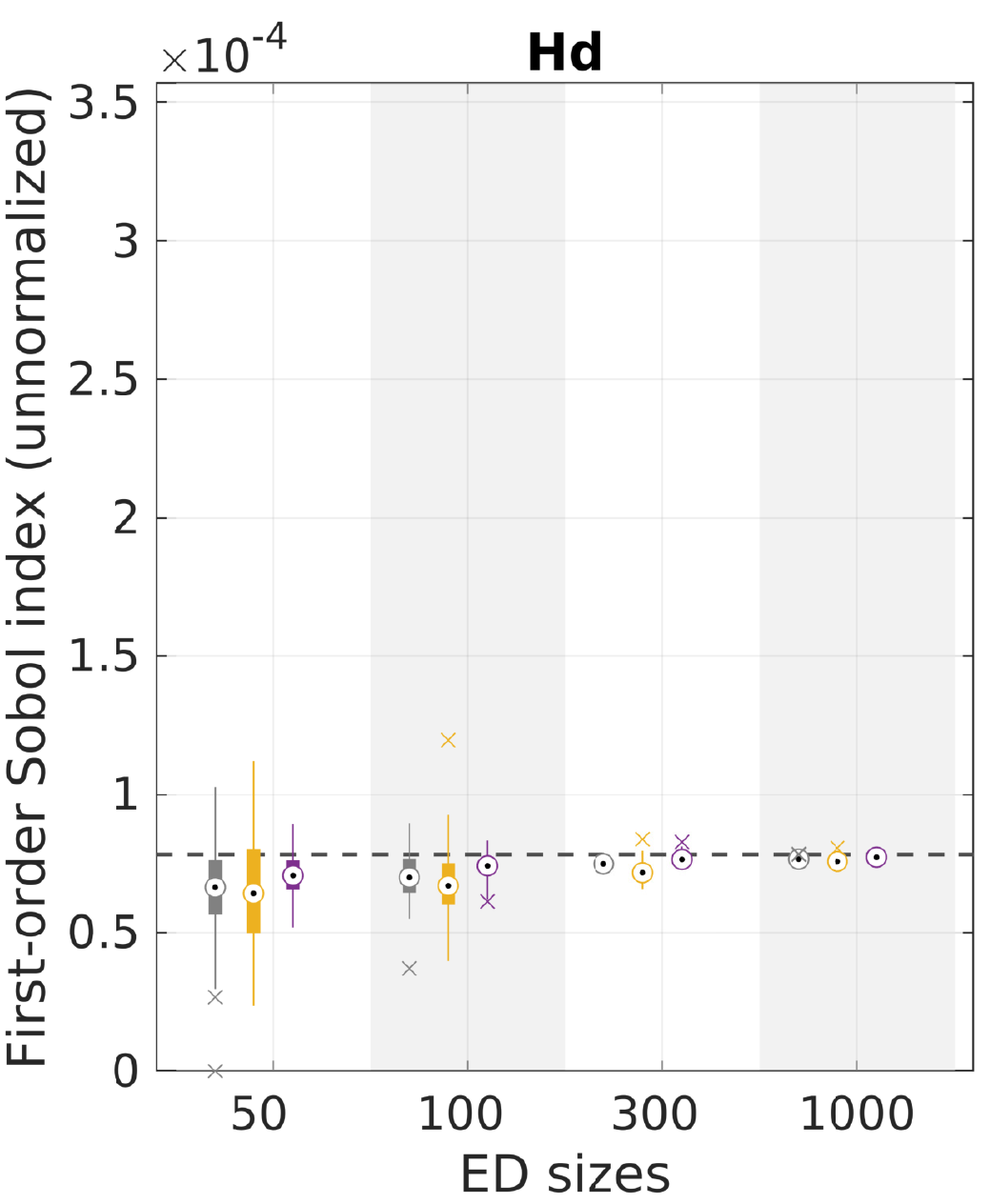}}
	\hspace{.05\textwidth}
	{\includegraphics[width=.3\textwidth]
		{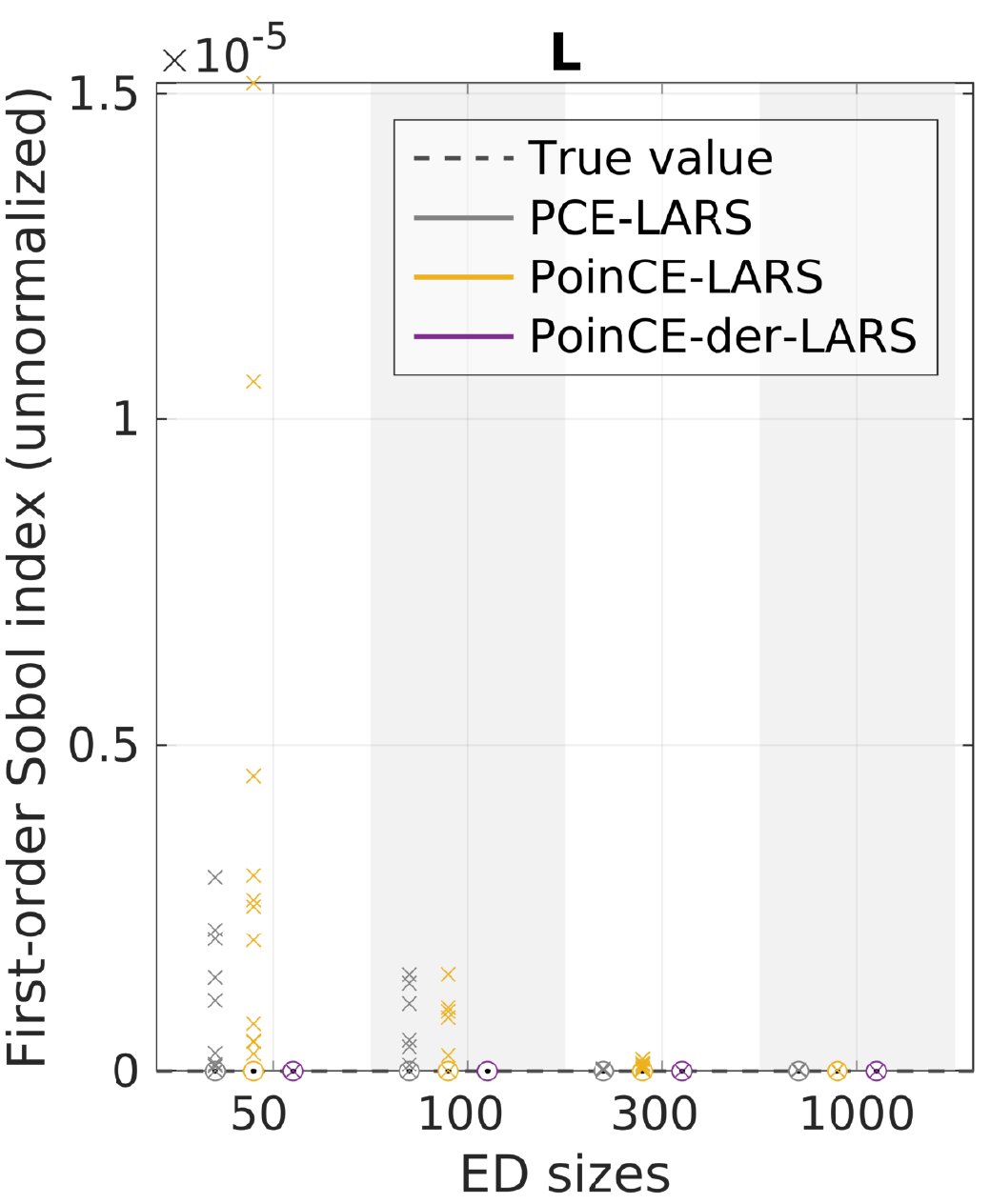}}
	\caption{Estimates of {unnormalized first-order Sobol' indices} for the dyke cost model ($p \leq 5$). 
		Boxplots: in grey the PCE-based estimates. The dashed line (``True value'') denotes a high-precision estimate for the unnormalized first-order Sobol' index.
		See also \cref{fig:flood_unnormalized_Sobol_firstorder}.}
	\label{fig:flood_unnormalized_Sobol_firstorder_appendix}
\end{figure}

\begin{figure}[htbp]
	\centering
	{\includegraphics[width=.3\textwidth]
		{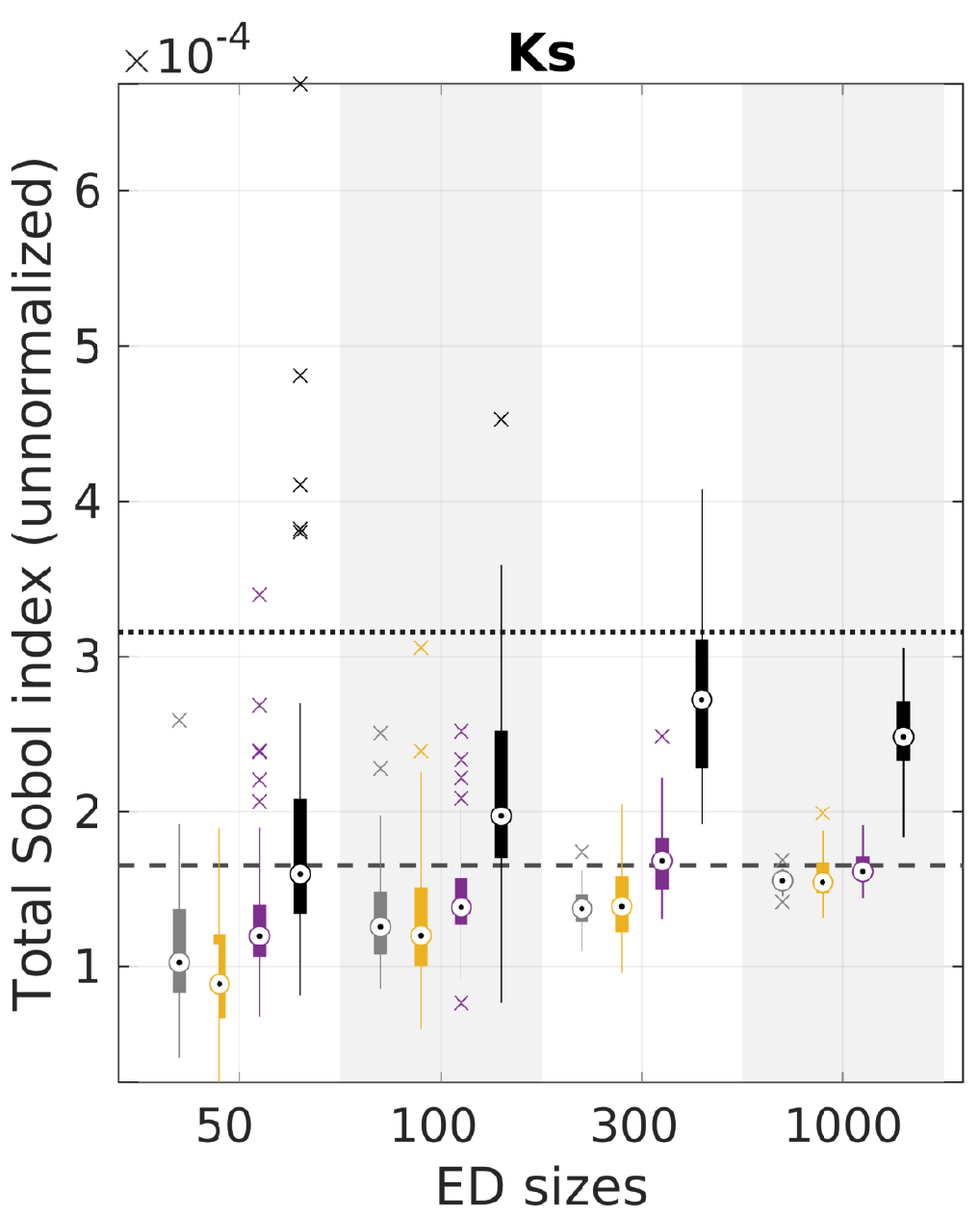}}
	\hfill
	{\includegraphics[width=.3\textwidth]
		{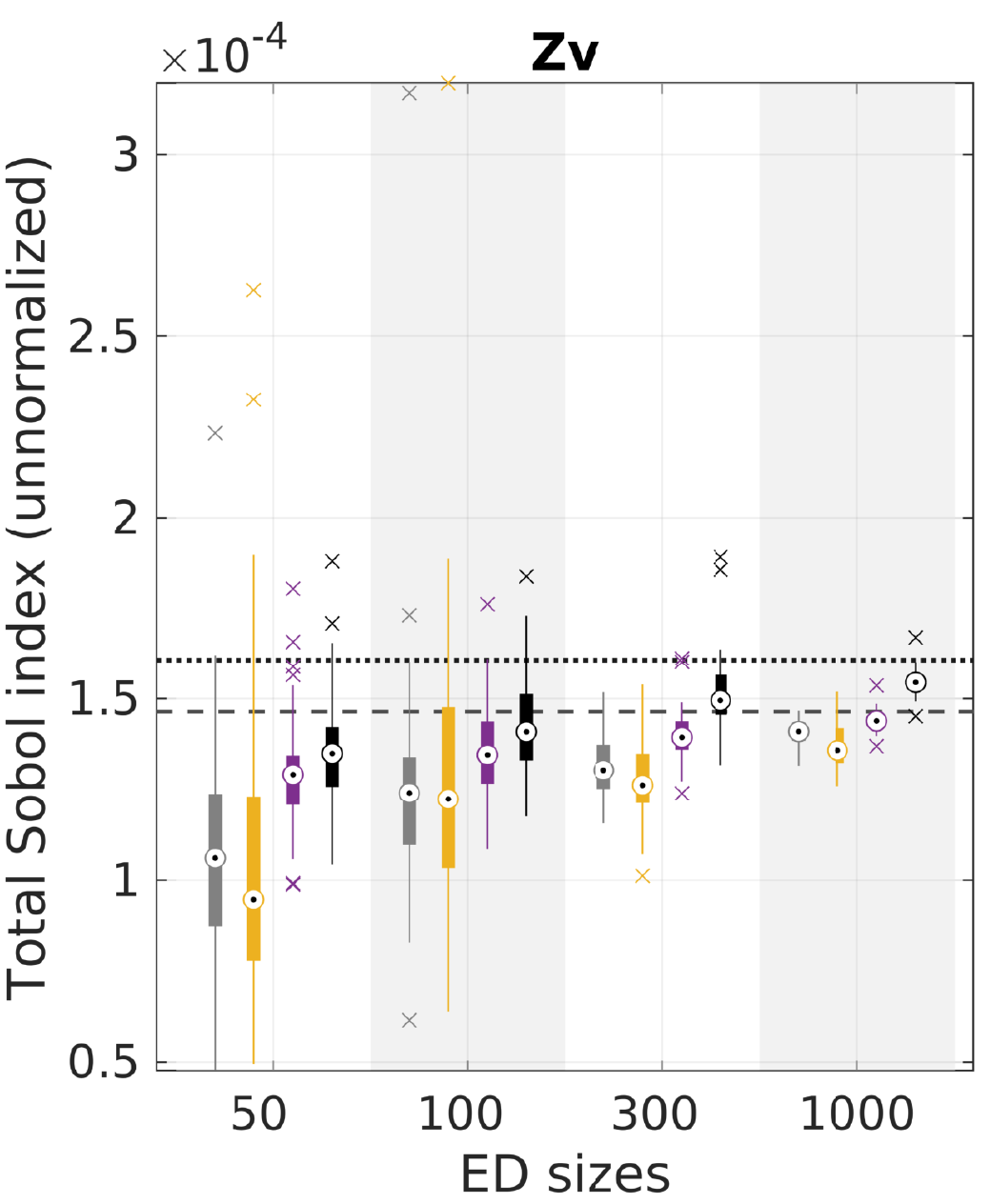}}
	\hfill
	{\includegraphics[width=.3\textwidth]
		{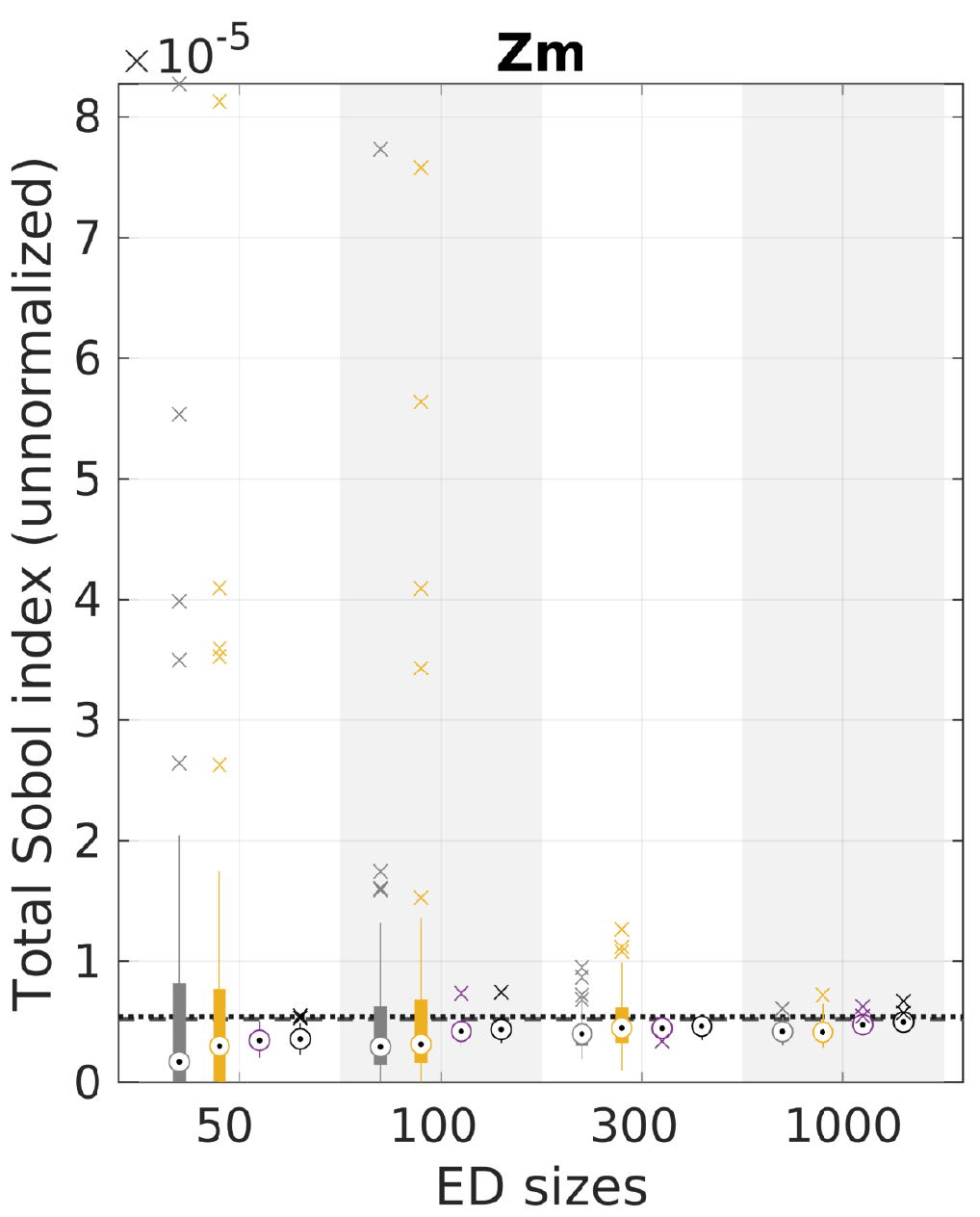}}
	\\
	{\includegraphics[width=.3\textwidth]
		{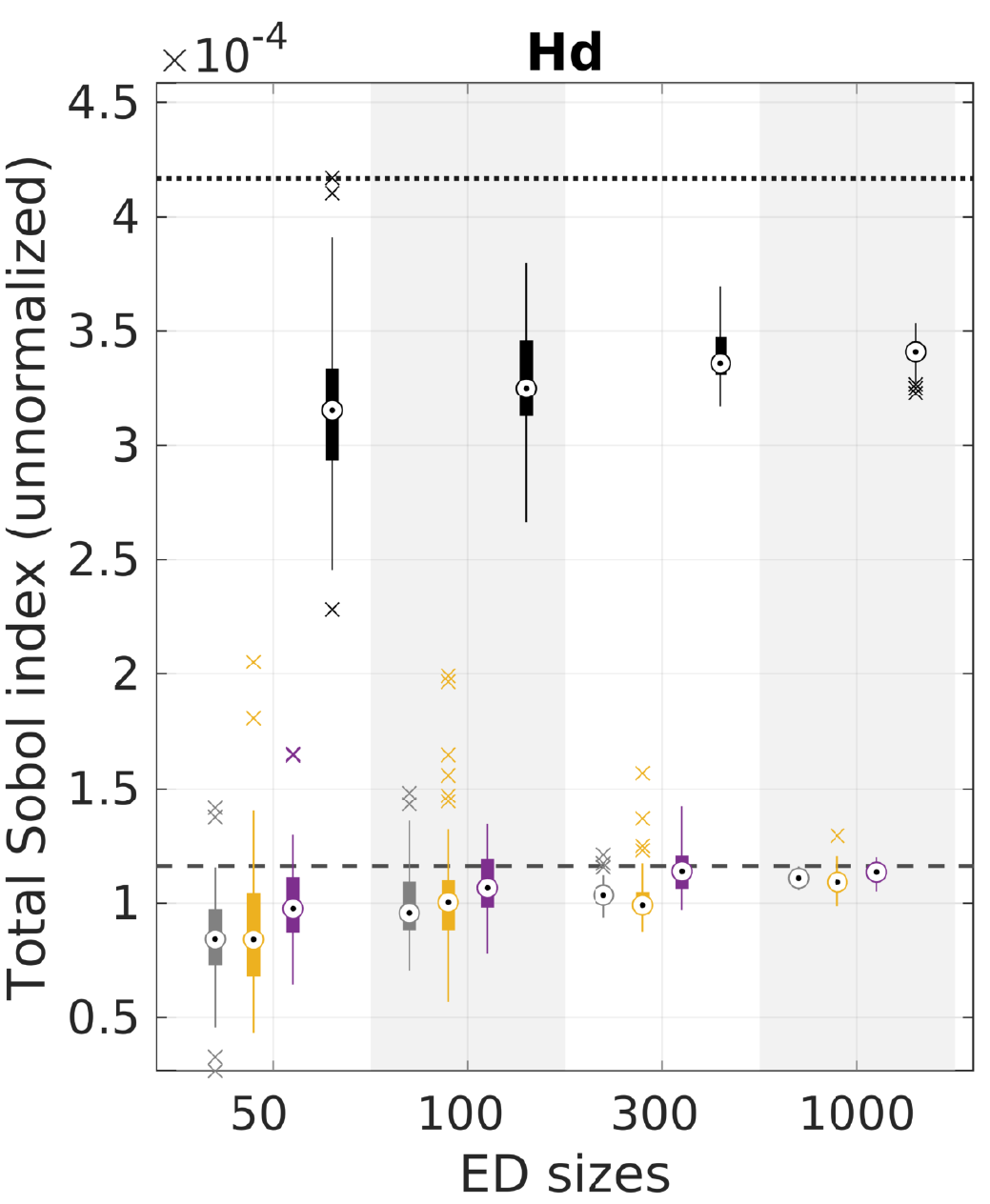}}
	\hspace{.05\textwidth}
	{\includegraphics[width=.3\textwidth]
		{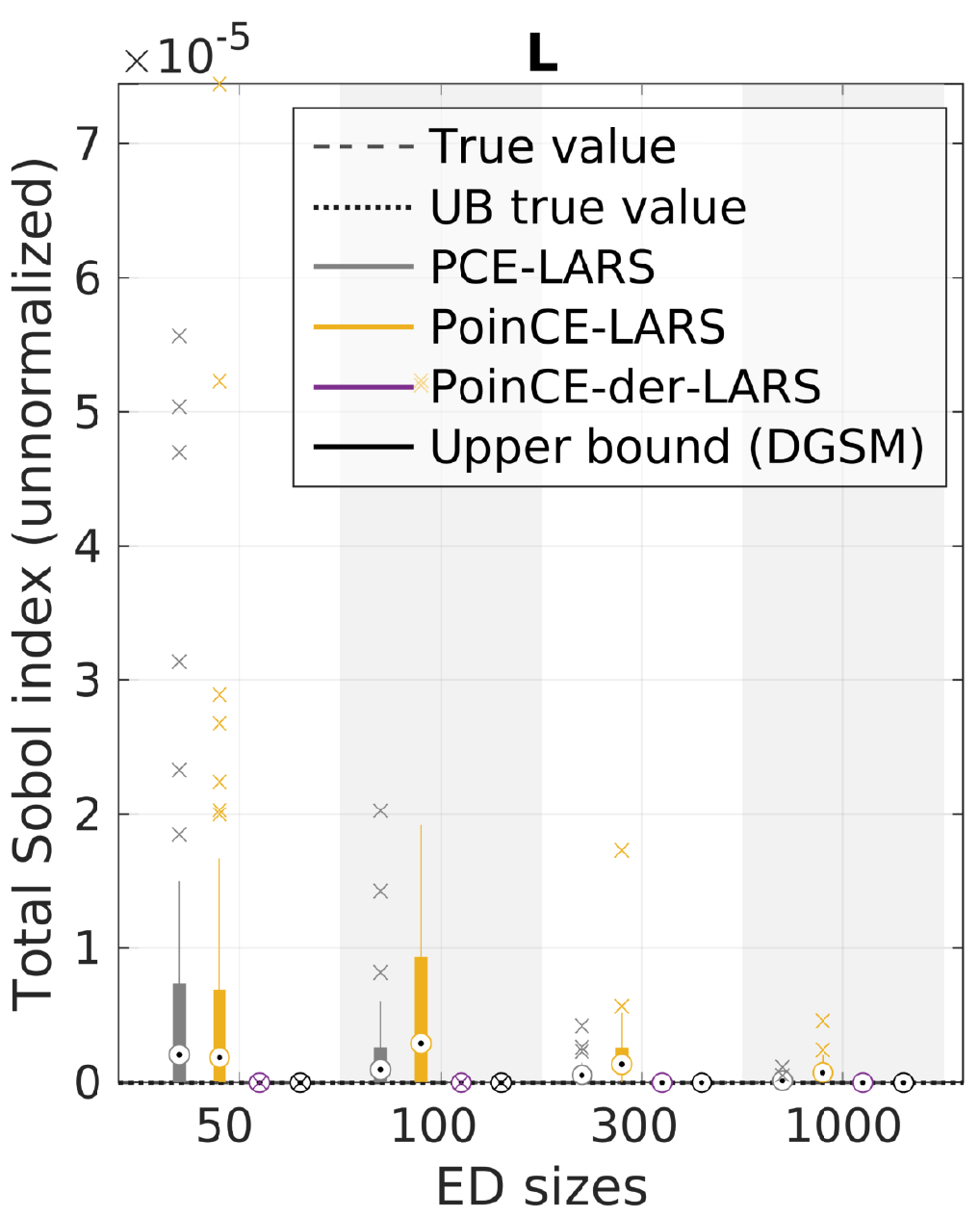}}
	\caption{Estimates of {unnormalized total Sobol' indices} for the dyke cost model ($p \leq 5$). 
		Boxplots: in grey the PCE-based estimates and in black the DGSM-based upper bound from \eqref{eq:DGSM_poincare_upper_bound}. 
		Lines: the dashed line (``True value'') denotes a high-precision estimate for the unnormalized total Sobol' index, while the dotted line (``UB true value'') is a MC-based high-precision estimate for the DGSM-based upper bound.
		See also \cref{fig:flood_unnormalized_Sobol_total}.}
	\label{fig:flood_unnormalized_Sobol_total_appendix}
\end{figure}

\end{document}